\documentclass{article}
\usepackage{graphicx,amsmath,amsthm,dsfont,amsfonts,xcolor}
\usepackage{bm, booktabs, subcaption}
\usepackage{natbib}
\usepackage{microtype}
\usepackage[noline,ruled,vlined]{algorithm2e}
\usepackage{comment}
\usepackage{float}
\usepackage{amssymb}
\usepackage{amsmath}
\usepackage{multirow}
\usepackage{threeparttable}
\usepackage[section]{placeins}
\usepackage[toc,page]{appendix}

\newcommand{\Var}{\operatorname{Var}}

\def\d{\,\mathrm{d}}

\newcommand{\VaR}{\mathrm{VaR}}                      
\newcommand{\ES}{\mathrm{ES}}                        
\newcommand{\E}{\mathbb{E}}                          
\newcommand{\R}{\mathbb{R}}

\newcommand{\one}{\mathbf{1}}                        
\newcommand{\id}{\mathds{1}}                        
\renewcommand{\ge}{\geqslant}
\renewcommand{\le}{\leqslant}
\renewcommand{\geq}{\geqslant}
\renewcommand{\leq}{\leqslant}
\renewcommand{\epsilon}{\varepsilon}

\usepackage[colorlinks=true,
            linkcolor=blue,
            citecolor=blue,
            urlcolor=blue]{hyperref}

\theoremstyle{plain}
\newtheorem{theorem}{Theorem}
\newtheorem{corollary}{Corollary}
\newtheorem{lemma}{Lemma}
\newtheorem{proposition}{Proposition}

\theoremstyle{definition}

\newtheorem{example}{Example}

\theoremstyle{remark}
\newtheorem{remark}{Remark}

\renewcommand{\cite}{\citet}

\DeclareMathOperator*{\argmin}{arg\,min}

\renewcommand{\d}{\mathrm{d}}

 \usepackage{setspace}

\title{\bf Adaptive Window Selection for Financial Risk Forecasting}

\author{
Yinhuan Li,\quad Chenxin Lyu,\quad and Ruodu Wang\thanks{Corresponding author. Email: ruodu.wang@uwaterloo.ca}\\[0.4em]
\small Department of Statistics and Actuarial Science, University of Waterloo, Canada
}

\date{}
\begin{document}

\maketitle
\begin{abstract}
Risk forecasts in financial regulation and internal management are calculated through historical data. The unknown structural changes of financial data pose a substantial challenge in selecting an appropriate look-back window for risk modeling and forecasting. We develop a data-driven online learning method, called the bootstrap-based adaptive window selection (BAWS), that adaptively determines the window size in a sequential manner. 
A central component of  BAWS is to compare the realized scores against a data-dependent threshold based on the bootstrap method. We provide an asymptotic justification for the bootstrap threshold, covering non-smooth
scores such as the VaR check loss and the joint VaR--ES score, with an extension to
stationary weakly dependent data via the moving block bootstrap. A single-break analysis
further shows that BAWS rejects overlong windows crossing sufficiently large breaks. The proposed method is applicable to the forecasting of risk measures that are elicitable individually or jointly, such as the Value-at-Risk (VaR) and the pair of VaR and the corresponding Expected Shortfall.  Through simulation studies and an empirical analysis, we demonstrate that BAWS  often improves upon the standard rolling window approach and the recently developed method of stability-based adaptive window selection, especially when there are structural changes in the data-generating process.
\end{abstract}

\noindent%
{\it Keywords:} Bootstrap, online learning, elicitability, Value-at-Risk, Expected Shortfall
\vfill
\section{Introduction}
Forecasting risk measures, such as Value-at-Risk (VaR) and Expected Shortfall (ES), is central to financial regulation and internal risk management. These forecasts are typically estimated from historical financial data, making the choice of look-back window crucial. In reality, financial markets evolve with macroeconomic conditions and frequently experience unknown structural breaks, rendering the data highly non-stationary. For example, systemic shocks such as the 2008 global financial crisis (GFC) and the COVID-19 pandemic (COVID) triggered substantial changes in market dynamics and led to abrupt shifts in the loss distribution \citep{huber2021market}. 
Determining an appropriate estimation window under such changing market conditions therefore remains a challenging problem.

A rule-of-thumb for addressing this issue is the rolling window approach, which applies a fixed-length window that moves forward over time to update forecasts \citep{rapach2013forecasting}. 
However, the choice of window size is typically ad hoc. Regulatory frameworks such as Basel II/III and the FRTB provide only minimum requirements, such as using at least 250 observations and an additional 12-month stress period for ES forecasts. 
As a result, researchers and practitioners often rely on heuristic choices: \cite{demiguel2009optimal}  and \cite{capponi2022systemic}  employ a five- or ten-year window for portfolio selection, while \cite{hoga2023monitoring} and \cite{wang2025backtesting} adopt 250- or 500-day windows  for risk forecasting and backtesting. 

Nevertheless, the rolling window approach has important limitations because it implicitly assumes stationarity within each fixed window. The first limitation is that forecast performance is highly sensitive to the choice of window size \citep{rossi2012out}. This sensitivity reflects the bias-variance trade-off in fixed-window estimation: a long window generally yields low-variance forecasts  but incurs a high bias when structural breaks occur, while a short window reduces the  bias but produces more volatile forecasts.  The second limitation arises from the fixed-window mechanism itself, which cannot automatically adapt to evolving market conditions. Reliable risk forecasts must respond rapidly to market changes, but a fixed window tends to dilute extreme losses by averaging over pre-shift data, which may lead to underestimation of risk. These limitations imply that no single window size can perform well across various market conditions, motivating the development of window selection strategies more resilient to distributional changes.

A growing literature studies the role of estimation window choice in forecasting when economic time series, such as financial returns, exhibit instabilities or structural breaks; see, e.g., \cite{rapach2013forecasting}, \cite{rossi2021forecasting}, and the references therein. Related adaptive-window ideas have also been studied for model assessment and selection under temporal distribution shift; see \citet{han2024model}. Existing approaches to window selection and adaptive forecasting are typically motivated by parameter instability in predictive models. \cite{pesaran2007selection} show that pre-break observations may remain informative for parameter estimation in the presence of one or multiple structural breaks. They propose two approaches exploiting the bias-variance trade-off: selecting a single estimation window via cross-validation, or combining forecasts obtained from different windows. \cite{clark2009improving} further consider combinations of recursive and rolling forecasts, deriving optimal time-varying combination weights under abrupt parameter changes. Likewise, \cite{inoue2017rolling} study the selection of the optimal rolling window size for linear predictive models with time-varying coefficients and \cite{feng2025forecasting} 
investigate the optimal rolling window size by comparing the prediction performance of volatility
under various window sizes via the Diebold-Mariano test \citep{diebold1995comparing}.
Beyond window selection, several studies have explored alternative ways to adaptively utilize historical information to improve forecasting performance. For instance, \cite{pesaran2013optimal} propose assigning optimal weights to past observations, \cite{giraitis2013adaptive} focus on selecting an optimal rate of downweighting older data, and \cite{wang2021forecasting} investigate time-varying weighting schemes for historical observations. Most, if not all, of these approaches are developed for linear predictive regression  and aim to minimize mean squared forecast error, thereby targeting conditional mean forecasts. Therefore, they are not directly applicable to risk forecasting, where the object of interest is the tail risk measure rather than the conditional mean.

Complementary to these studies, we develop a model-free adaptive window selection
strategy for risk forecasting with elicitable scoring losses. Our method builds upon the
stability-based adaptive window selection (SAWS) framework of
\citet{huang2025stability}, which compares empirical losses across candidate windows
using deterministic thresholds. We complement this framework by constructing  a data-dependent threshold constructed by the bootstrap method rather than the deterministic threshold.   In this way, the threshold is
calibrated from the observed data while retaining the stability principle of SAWS.
The resulting bootstrap-based adaptive window selection (BAWS)
procedure uses a single threshold-level parameter and provides a flexible window-selection rule for risk forecasting.

We also study the theoretical behavior of BAWS.
Under a non-smooth M-estimation setting,  the bootstrap excess loss has the same asymptotic distribution as the population excess loss, which motivates the bootstrap threshold in window comparisons. This justification covers the VaR check loss and the joint VaR--ES  score. An extension to stationary weakly dependent data using the moving block bootstrap is given in the Supplementary Material. A single-break mean-estimation case further shows that an overlong window crossing a sufficiently large break is rejected with probability tending to one.

Our method compares forecasts obtained under a large window with those from smaller candidate windows using empirical scores, which requires scoring functions that are consistent for risk measures of interest. Elicitability provides this foundation:
a risk measure is  elicitable if it can be represented as a minimizer of an expected scoring function; see  \cite{gneiting2011making}, \cite{fissler2016higher} and \cite{fissler2024elicitability}.  This property allows us to apply the proposed adaptive window-selection framework to risk forecasting.    VaR (quantile) is a  prototypical example of elicitable risk measures, and although ES on its own is not elicitable, the pair (VaR, ES) is jointly elicitable \citep{acerbi2014back,fissler2015expected}. 
Some other popular statistical quantities that are elicitable include the mean, the (mean, variance) pair, and the expectile \citep{newey1987asymmetric}. 
For relevance in risk forecasts, we focus on the adaptive window selection for VaR or ES forecasts, leveraging their (joint) elicitability to construct data-driven criteria for comparing competing candidate windows.

Through  three simulation studies and an empirical analysis, we demonstrate that the proposed BAWS approach generally shows superior out-of-sample performance, though not uniformly in every scenario. Across three non-stationary settings, including discrete breaks, smooth and continuous changes, and time-varying volatility, our adaptive window approach delivers lower cumulative risk and forecast loss than fixed-window and full-window approaches, and often performs competitively with or better than SAWS. For instance, under the GARCH volatility-shift design, our method attains the lowest MSE, cumulative risk, and forecast loss among all competing procedures, highlighting its favorable bias–variance trade-off and swift reaction to regime changes. In the empirical analysis, BAWS and SAWS achieve lower cumulative forecast losses and respond more promptly to extreme events, such as the 2008 financial crisis and the COVID-19 pandemic, whereas fixed and full windows react with substantial delays.

The remainder of the paper is structured as follows. 
Section~\ref{sec:problem} states the research problem and introduces the bootstrap-based threshold. 
Section~\ref{sec:theory} presents  theoretical results and a two-regime illustration. 
Section~\ref{sec:risk_fore} introduces elicitability-based loss functions for risk forecasting. 
Sections~\ref{sec:Sim} and~\ref{sec:emp} report the simulation and empirical results, respectively. 
Section~\ref{sec:con} concludes, and proofs and additional theoretical results are relegated to the Supplementary Material.

\section{Methodology}\label{sec:problem}
\subsection{Problem setup}
Let
$\Theta \subseteq \mathbb{R}^d$ be a parameter space. We consider a sequence of random vectors \(\{\bm X_t\}_{t\ge1}\), where each \(\bm X_t\) takes values in \(\mathbb R^p\) and follows an unknown distribution \(\mathbb P_t\). 
The time horizon may be finite or infinite.

Let $\ell:\mathbb{R}^p \times \Theta \to \mathbb{R}$ be a  loss function, and 
define the population loss at time $t$ as
$F_t(\boldsymbol{\theta}) = \mathbb{E}\!\left[\ell(\boldsymbol{X}_t,\boldsymbol{\theta})\right].$ The target parameter at time $t$ is the minimizer of the population loss, i.e.,
$\boldsymbol{\theta}_t^*
= \arg\min_{\boldsymbol{\theta}\in\Theta} F_t(\boldsymbol{\theta}).$ In practice, the distribution $\mathbb{P}_t$ is unknown, so the population loss 
$F_t(\boldsymbol{\theta})$ cannot be obtained directly.  
However, since the data arrive sequentially, the past observations
$\{\bm x_1,\dots,\bm x_{t-1}\}$ are available for the forecast at time $t$. 
If the data were stationary, a natural estimate of $F_t(\boldsymbol{\theta})$
would be the full-sample average $\sum_{i=1}^{t-1}\ell(\bm x_i,\boldsymbol{\theta})/(t-1).$
However, in many cases, the distribution $\mathbb{P}_t$ may shift 
over time due to discrete or continuous structural changes. 
Including too much pre-break data in the estimation may introduce substantial 
estimation bias. 
A more robust strategy is therefore to approximate $F_t(\boldsymbol{\theta})$
using only a recent look-back window of length $k$, within which we believe that the distribution has no significant shift. 
This leads to the empirical loss
\begin{equation}\label{eq:emp_loss}
f_{t,k}(\boldsymbol{\theta})
= \frac{1}{k}\sum_{i=t-k}^{t-1} \ell(\bm x_i,\boldsymbol{\theta}),
\end{equation}
based on $\{\bm x_i\}_{t-k}^{t-1}$ and the corresponding minimizer $
\hat{\boldsymbol{\theta}}_{t,k}
\in \arg\min_{\boldsymbol{\theta}\in\Theta} f_{t,k}(\boldsymbol{\theta})$
serves as an approximation to the target parameter $\boldsymbol{\theta}_t^*$.

This paper focuses on selecting the largest window $\hat k_t$ in which the distribution shift remains negligible, and then obtaining the estimated parameter $\hat{\bm\theta}_{t,\hat k_t}$.
We follow the selection framework in \cite{huang2025stability}, which is based on a stability principle: A statistically more stable solution is preferred unless it is significantly worse. 
In other words,  if $\{\mathbb{P}_i\}_{i=t-k}^{t-1}$ are close, incorporating sufficient historical data can improve statistical efficiency  without introducing a significantly higher bias.

Following \cite{huang2025stability}, we construct a pairwise test
 \begin{equation}T_{i,k}=\begin{cases}
        1, ~\text{if}~f_{t,i}(\hat{\bm \theta}_{t,k})-f_{t,i}(\hat{\bm \theta}_{t,i})>\tau(t,i)\\
        0, ~\text{if}~f_{t,i}(\hat{\bm \theta}_{t,k})-f_{t,i}(\hat{\bm \theta}_{t,i})\leq\tau(t,i)
    \end{cases}\label{eq:test}
\end{equation}
for a candidate window size $k\in [t-1]$ with  $[t-1]=\{1,\dots,t-1\}$ and each reference window size $i< k$. We denote the candidate window set by $\mathcal{K}_t$. Since $\hat{\bm \theta}_{t,i}$ is the minimizer of $f_{t,i}(\bm\theta)$, $\hat{\bm \theta}_{t,k}$ cannot achieve a lower loss within the smaller window $i$, that is, $f_{t,i}(\hat{\bm \theta}_{t,k})- f_{t,i}(\hat{\bm \theta}_{t,i})\geq 0$.   $T_{i,k}=0$ would imply that $\hat{\bm\theta}_{t,k}$ is not significantly worse than $\hat{\bm\theta}_{t,i}$. If $T_{i,k}=0$ for all $i<k$, we say that the window $k$ is
admissible whenever $T_k=0$, meaning that no significant distributional shift 
is detected within $\{t-k,\dots,t-1\}$. 
Otherwise, we set $T_k=1$, indicating a potential distribution shift over this period.

Shorter windows incorporate most recent observations and are more likely to reflect the current distribution $\mathbb{P}_t$. Therefore, comparing each candidate window $k$ against all shorter windows provides a useful mechanism for detecting distributional changes over time. Our objective is to determine the largest admissible window $\hat k_t=\max\{k\in[t-1]:T_k=0\}$ and  set $\hat{\bm \theta}_{t}=\hat{\bm \theta}_{t,\hat k_t}$.  \subsection{Bootstrap-based threshold specification}\label{sec:boot}
While large windows help reduce estimation variance, they increase the risk of bias due to potential distributional shifts. Thus, a well-specified threshold function is critical for balancing the bias-variance  trade-off in estimation.

If no significant distribution shift occurs within a window $i$, the distribution $\{\mathbb{P}_{l}\}_{l=t-i}^{t-1}$ should, if not identical, be sufficiently similar.  It motivates us to develop a bootstrap-based method to construct the threshold $\tau(t,i)$. The procedure is as follows: First, choose a parameter $\beta\in (0,1)$, typically close to $1$, such as $0.9$. Then, for time $t$ and each reference window $i$,
 \begin{itemize}
     \item {\it Step 1.} Draw a sample of size $i$ with  replacement from observations $\{\bm x_{t-i},\dots, \bm x_{t-1}\}$ and denote the sample as $\{\bm x_{t-i}^{(b)},\dots,\bm x_{t-1}^{(b)}\}$.
     \item {\it Step 2.} Compute the bootstrapped objective function $f_{t,i}^{(b)}(\bm\theta)=\frac 1i\sum_{l=t-i}^{t-1}\ell(\bm x_l^{(b)},\bm\theta),$ and obtain the corresponding estimator by
     solving $\hat{\bm\theta}^{(b)}_{t,i}=\argmin_{\bm\theta\in \Theta} f_{t,i}^{(b)}(\bm\theta).$
    \item {\it Step 3.} Repeat Steps 1 and 2  $B$ times and derive the set $\{\hat{\bm\theta}^{(b)}_{t,i}\}_{b=1}^B$.
     \item {\it Step 4.} Calculate the empirical \(\beta\)-quantile of
 $\{f_{t,i}(\hat{\bm\theta}_{t,i}^{(b)})-f_{t,i}(\hat{\bm\theta}_{t,i}) \}$,
defined as the \(\lceil \beta B\rceil\)-th order statistic, and use it as the threshold \(\tau(t,i)\).
 \end{itemize}
 The hyperparameter $\beta$ can be interpreted as the asymptotic confidence level for an idealized pairwise comparison against window $k$ under the null; see Section \ref{subsec:iid_bootstrap}.
  When the data within a window are approximately independent and identically distributed, the empirical bootstrap method is useful and the above bootstrap procedure applies. For dependent data, a moving block bootstrap procedure is more suitable, as it accounts for temporal dependence; see \cite{kunsch1989jackknife}. In this case, we replace Step 1 by Step 1* and perform Steps 2-4 using the block sample obtained in Step 1*. For given $t$ and $i$, we define the block $\mathcal{B}_{t,i,l}=\{\bm x_{t-l-l_i+1},\dots,\bm x_{t-l}\}$ with the block length $l_i$ satisfying $l_i\to\infty$ and $m_i=\lfloor i/l_i\rfloor\to\infty$ as $i\to\infty$. We select $l_i=c\lceil i^{1/3}\rceil$ for some positive constant $c$.
 \begin{itemize}
     \item {\it Step~1$^{*}$}. Resample $m_i$ blocks with replacement from $\{\mathcal{B}_{t,i,1}\dots \mathcal{B}_{t,i,i-l_i+1}\}$ and arrange all elements of $m_i$ blocks in a sequence to get the bootstrapped sample $\{\bm x_{t-m_il_i}^{(b)},\dots,\bm x_{t-1}^{(b)}\}$.
 \end{itemize}
 \iffalse
Under the pairwise null hypothesis
\[
H^{t,i,k}_0:\ \{X_{t-k},\ldots,X_{t-i-1}\}\ \text{and}\ \{X_{t-i},\ldots,X_{t-1}\}\ \text{are identically distributed},
\]
define the \emph{theoretical} $\beta$-quantile
\[
q_{t,i,k}(\beta):=\inf\Bigl\{q\in\mathbb{R}:\ 
\mathbb{P}\bigl(f_{t,i}\!\big(\hat\theta_{t,k}\big)\;-\;f_{t,i}\!\big(\hat\theta_{t,i}\big)\le q\mid H^{t,i,k}_0\bigr)\ge \beta\Bigr\}.
\]
We approximate $q_{t,i,k}(\beta)$ by the bootstrap quantile $\hat q_{t,i}(\beta)$ computed from
\[
\Bigl\{\, f_{t,i}(\hat\theta^{(b)}_{t,i})-f_{t,i}(\hat\theta_{t,i})\,\Bigr\}_{b=1}^B,
\]
and set $\tau(t,i)=\hat q_{t,i}(\beta)$. 
 \fi
 
In principle, all window sizes in $[t-1]$ could be considered. 
However, too short windows tend to produce highly unstable estimates, so it is 
reasonable to set a minimum window length $k_0$.   For computational efficiency, we adopt an increasing-interval strategy to construct a sparse but representative set of candidate windows. Specifically, window lengths grow more coarsely as they become larger. For example, we may use increments of 5 for windows below 50, increments of 10 
for windows between 50 and 100, increments of 20 for windows between 100 and 300, 
and increments of 50 between 300 and 1000. 
Beyond length 1000, only increments of 100
are considered.   The candidate set is further dynamically adjusted. When selecting the window at time $t$, we incorporate the previously selected window $\hat k_{t-1}$ into the window set as a reference. Windows with   length smaller than $\hat{k}_{t-1}+1$ follow the increment rules above, while 
larger windows are expanded at increments of 50 starting from 
$\hat{k}_{t-1}+1$ up to $t-1$. 

Since the data arrive sequentially, the proposed procedure can be applied 
online to generate forecasts  as new observations become 
available. We refer to this framework as bootstrap-based adaptive window selection (BAWS), which is summarized in Algorithm \ref{alg1}. 
\section{Theoretical guarantee for the bootstrap threshold}\label{sec:theory}
This section provides theoretical support for the BAWS procedure introduced in Section 2. 
We first justify the bootstrap-based threshold under an idealized setting with no distributional shift, showing that it approximates the quantile of the population excess loss. 
Then, we study a single-break example, which explains why BAWS retains  large windows with no statistically significant shift and rejects overlong windows crossing a sufficiently strong structural break. 
\subsection{An asymptotic justification }
\label{subsec:iid_bootstrap}
To clarify the mechanism of the bootstrap threshold, this subsection provides an asymptotic explanation under an M-estimation setting.
We consider an asymptotic regime  in which \(t\to\infty\) and the window size \(n=n_t\to\infty\), with \(n_t\le t\). Throughout this subsection, we assume that
\(\bm X_{t-n},\ldots,\bm X_{t-1}\) are i.i.d.~from a common distribution and the threshold is derived via the empirical bootstrap procedures. The dependent-data extension via the moving block bootstrap is treated separately in Section~\ref{sec:mbb} of the Supplementary Material. Under this setting, we show that  $n(f_{t,n}(\hat{\bm\theta}^{(b)}_{t,n})-f_{t,n}(\hat{\bm\theta}_{t,n}))$ exhibits asymptotic behavior similar to that of  $n(F_{t}(\hat{\bm\theta}_{t,n})-F_{t}({\bm\theta}_{t}^*))$ as the window expands.

For notational simplicity,  this subsection writes
\[\bm\theta^*:=\bm\theta^*_t,~\hat{\bm\theta}_n:=\hat{\bm\theta}_{t,n},~\hat{\bm\theta}^{(b)}_n:=\hat{\bm\theta}^{(b)}_{t,n}\]
\[F({\bm\theta}):=F_{t}({\bm\theta}),~f_n(\bm\theta):=f_{t,n}(\bm\theta),~f_n^{(b)}(\bm\theta):=f_{t,n}^{(b)}({\bm\theta}).\]

Throughout, for bootstrap quantities, probabilistic statements are understood
conditionally on the observed sample and in probability. For a bootstrap
statistic \(\bm R_n^{(b)}\),  \(\bm R_n^{(b)}=o_p(1)\) means that
\(
\mathbb P^*(\|\bm R_n^{(b)}\|>\varepsilon)\xrightarrow{p}0
\)
for every \(\varepsilon>0\), where \(\mathbb P^*\) denotes probability conditional on the
observed sample. We denote by \(\mathcal L^*(\bm R_n^{(b)})\) the conditional distribution
of \(\bm R_n^{(b)}\) given the observed sample, and write
\(
\mathcal L^*(\bm R_n^{(b)})\Rightarrow_p\mathcal L(\bm R)
\)
for weak convergence of the conditional distribution in probability. Equivalently, for every
bounded Lipschitz function \(\varphi\),
\(
\mathbb E^*\!\left[\varphi(\bm R_n^{(b)})\right]
\xrightarrow{p}
\mathbb E\!\left[\varphi(\bm R)\right],
\)
where \(\mathbb E^*\) denotes expectation conditional on the observed sample.

We assume the following regularity conditions.
 \begin{itemize}
     \item {\it (C1) Identification.}  The parameter space \(\Theta\subseteq\mathbb R^d\) is compact, and the population loss
\(
F(\bm\theta)=\mathbb E[\ell(\bm X_t,\bm\theta)]
\)
admits a unique minimizer \(\bm\theta^*\in\operatorname{int}(\Theta)\).
     \item {\it (C2) Uniform consistency.} As \(n\to\infty\),
\(\sup_{\bm\theta\in\Theta}|f_n(\bm\theta)-F(\bm\theta)|
\xrightarrow{p}0,\)
and \(
\sup_{\bm\theta\in\Theta}|f_n^{(b)}(\bm\theta)-f_n(\bm\theta)|
\xrightarrow{p}0\),
where the second convergence is conditional on the observed sample, as defined above.
\item {\it (C3) Population smoothness.} The population loss $F$ is twice continuously differentiable in a neighborhood $\mathcal N(\bm\theta^*)$ of $\bm\theta^*$, with $\nabla F(\bm\theta^*)=0$ and Hessian $\Sigma:=\nabla^2 F(\bm\theta^*)$ positive definite.
\iffalse
Moreover,
\begin{equation}\label{eq:qmd}
F(\bm\theta) - F(\bm\theta^*) - \tfrac{1}{2}(\bm\theta-\bm\theta^*)^\top \Sigma (\bm\theta-\bm\theta^*) = o(\|\bm\theta-\bm\theta^*\|^2).
\end{equation}
    \item{\it (C4) Linear stochastic expansion.}
    There exists a measurable function $\bm\psi:\R^p\times\Theta\to\R^d$, called the generalized score, with $\E[\bm\psi(\bm X_t,\bm\theta^*)]=0$ and $\Omega:=\E[\bm\psi(\bm X_t,\bm\theta^*)\bm\psi(\bm X_t,\bm\theta^*)^\top]$ finite, such that
\begin{equation}\label{eq:linexp}
\sqrt{n}(\hat{\bm\theta}_n-\bm\theta^*) = -\Sigma^{-1}\, \frac{1}{\sqrt{n}}\sum_{l=1}^n \bm\psi(\bm X_{t-l},\bm\theta^*) + o_p(1),
\end{equation}
and the same expansion holds for the bootstrap estimator $\hat{\bm\theta}^{(b)}_n$ around $\hat{\bm\theta}_n$ conditionally on the observed sample, in probability.
\fi
\item {\it (C4) Boundedness.}
The empirical and  bootstrap estimators are stochastically bounded at the
root-\(n\) rate:
\[
\sqrt n\|\hat{\bm\theta}_n-\bm\theta^*\|=O_p(1),
\qquad
\sqrt n\|\hat{\bm\theta}_n^{(b)}-\hat{\bm\theta}_n\|=O_p(1).
\]
%where the second bound is  conditionally on the observed sample, in probability.
\item {\it (C5) Quadratic stochastic expansion.}
There exists a measurable function
\(\bm\psi:\mathbb R^p\times\Theta\to\mathbb R^d\), called the generalized
score, such that
\[
\E[\bm\psi(\bm X_t,\bm\theta^*)]=\bm 0,\qquad
\Omega:=\E[\bm\psi(\bm X_t,\bm\theta^*)\bm\psi(\bm X_t,\bm\theta^*)^\top]<\infty .
\]
Let
\(\bm Z_n=n^{-1/2}\sum_{l=1}^n\bm\psi(\bm X_{t-l},\bm\theta^*),~
\bar{\bm\psi}_n=n^{-1}\sum_{j=1}^n\bm\psi(\bm X_{t-j},\bm\theta^*),
\)
and, for the bootstrap sample,
\(\bm Z_n^{(b)}
=
n^{-1/2}\sum_{l=1}^n
(
\bm\psi(\bm X_{t-l}^{(b)},\bm\theta^*)-\bar{\bm\psi}_n
).
\)
For every \(M>0\),
\begin{equation}\label{eq:second_expan1}
\sup_{\|\bm h\|\le M}
\left|
n(f_n(\bm\theta^*+\bm h/\sqrt n)-f_n(\bm\theta^*))
-\bm h^\top\bm Z_n
-\tfrac12\bm h^\top\Sigma\bm h
\right|
\xrightarrow{p}0,
\end{equation}
and
\begin{equation}\label{eq:second_expan2}
\sup_{\|\bm h\|\le M}
\left|
n(f_n^{(b)}(\hat{\bm\theta}_n+\bm h/\sqrt n)-f_n^{(b)}(\hat{\bm\theta}_n))
-\bm h^\top\bm Z_n^{(b)}
-\tfrac12\bm h^\top\Sigma\bm h
\right|
\xrightarrow{p}0
\end{equation}
conditionally on the observed sample, in probability.
\end{itemize}
\begin{remark}\label{rem:revised_conditions}
Conditions (C1)--(C5) impose smoothness only on the population object $F$ (Condition (C3)) rather than on the loss
\(\ell(\bm x,\cdot)\). Condition (C4) ensures that the empirical and bootstrap
estimators are  bounded in probability, so that the local expansion
in Condition (C5) can be applied to the random minimizers. These ingredients are exactly what is used in the proof of Theorem~\ref{thm:limit}, and they accommodate non-smooth losses such as the check function and the Fissler-Ziegel score, as we verify in Section~\ref{app:verify} of the Supplementary Material. In particular, Conditions (C1)--(C5) contain the usual smooth M-estimation as a special case, like mean estimation under the squared loss. When $\ell(\bm x,\cdot)$ is twice continuously differentiable and \(\nabla_{\bm\theta}^2\ell\) admits an integrable Lipschitz envelope, the expansion in (C5) follows from the Taylor expansion. 
\end{remark}
\begin{theorem}\label{thm:limit}
Under Conditions (C1)--(C5),
\begin{equation}\label{eq:limit_dist}
\sqrt{n}(\hat{\bm\theta}_n-\bm\theta^*)\overset{d}{\to}\bm Z
~\text{and}~n\left(F(\hat{\bm\theta}_n)-F(\bm\theta^*)\right)\overset{d}{\to}\frac{1}{2} \bm Z^\top\Sigma \bm Z,
\end{equation}
where \(\bm Z\sim N(0,\tilde{\Sigma})\) and \(\tilde{\Sigma}=\Sigma^{-1}\Omega\Sigma^{-1}\).  Moreover,
\[
\mathcal L^*\!\left(
n\left(f_n(\hat{\bm\theta}^{(b)}_n)-f_n(\hat{\bm\theta}_n)\right)
\right)
\Rightarrow_p
\mathcal L\!\left(
\frac12 \bm Z^\top\Sigma\bm Z
\right).
\]     That is, the conditional distribution of the bootstrap excess loss converges in probability
to the same quadratic-form limit.
\end{theorem}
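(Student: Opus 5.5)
The plan is the standard argmax-continuous-mapping argument applied to the local quadratic expansions in (C5), once for the original sample and once for the bootstrap sample. The proof has three parts.

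\textbf{Step 1: limit of the estimator.} Write the local criterion process
\[
\mathbb M_n(\bm h)=n\bigl(f_n(\bm\theta^*+\bm h/\sqrt n)-f_n(\bm\theta^*)\bigr).
\]
It is minimized at $\hat{\bm h}_n=\sqrt n(\hat{\bm\theta}_n-\bm\theta^*)$. By (C4), for any $\varepsilon>0$ there is $M$ with $\mathbb P(\|\hat{\bm h}_n\|\le M)\ge1-\varepsilon$. By (C1), $\bm\theta^*$ is interior, so the ball $\{\bm\theta^*+\bm h/\sqrt n:\|\bm h\|\le M\}$ lies in $\Theta$ for large $n$.

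On $\|\bm h\|\le M$, (C5) gives $\mathbb M_n(\bm h)=\bm h^\top\bm Z_n+\frac12\bm h^\top\Sigma\bm h+o_p(1)$ uniformly. The quadratic $Q_n(\bm h)=\bm h^\top\bm Z_n+\frac12\bm h^\top\Sigma\bm h$ is minimized at $\bm h_n^\dagger=-\Sigma^{-1}\bm Z_n$. By the CLT, $\bm Z_n=O_p(1)$, so $\bm h_n^\dagger$ also lies in the ball with high probability. Comparing values gives
\[
Q_n(\hat{\bm h}_n)\le \mathbb M_n(\hat{\bm h}_n)+o_p(1)\le \mathbb M_n(\bm h_n^\dagger)+o_p(1)=Q_n(\bm h_n^\dagger)+o_p(1).
\]
Since $Q_n(\bm h)-Q_n(\bm h_n^\dagger)=\frac12(\bm h-\bm h_n^\dagger)^\top\Sigma(\bm h-\bm h_n^\dagger)$ and $\Sigma\succ0$ by (C3), this yields $\hat{\bm h}_n=-\Sigma^{-1}\bm Z_n+o_p(1)$. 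The multivariate CLT gives $\bm Z_n\overset{d}{\to}N(0,\Omega)$, hence $\sqrt n(\hat{\bm\theta}_n-\bm\theta^*)\overset{d}{\to}\bm Z\sim N(0,\Sigma^{-1}\Omega\Sigma^{-1})$. The sign is immaterial by symmetry of the Gaussian.

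\textbf{Step 2: limit of the population excess loss.} By (C3), $F$ is $C^2$ near $\bm\theta^*$ with $\nabla F(\bm\theta^*)=0$. Consistency of $\hat{\bm\theta}_n$ (from (C4)) and a second-order Taylor expansion give
\[
n\bigl(F(\hat{\bm\theta}_n)-F(\bm\theta^*)\bigr)=\tfrac12\hat{\bm h}_n^\top\Sigma\hat{\bm h}_n+o_p(1).
\]
The continuous mapping theorem then gives the limit $\frac12\bm Z^\top\Sigma\bm Z$.

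\textbf{Step 3: bootstrap statement.} This is the main obstacle, because the excess loss in the theorem is measured under $f_n$ (not $f_n^{(b)}$) and is centered at $\hat{\bm\theta}_n$, whereas (C5) expands $f_n$ around $\bm\theta^*$. Set $\bm h^*_n=\sqrt n(\hat{\bm\theta}^{(b)}_n-\hat{\bm\theta}_n)$.

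First, exactly as in Step 1 but conditionally and using \eqref{eq:second_expan2}, show $\bm h^*_n=-\Sigma^{-1}\bm Z_n^{(b)}+o_p(1)$ in the conditional sense. This requires $\bm Z_n^{(b)}=O_p(1)$ conditionally, which follows from the bootstrap CLT for i.i.d.\ means with finite second moments, since $\Omega<\infty$. The same bootstrap CLT gives $\mathcal L^*(\bm Z_n^{(b)})\Rightarrow_p N(0,\Omega)$, and hence $\mathcal L^*(\bm h^*_n)\Rightarrow_p\mathcal L(\bm Z)$.

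Next, write $\hat{\bm\theta}^{(b)}_n=\bm\theta^*+(\hat{\bm h}_n+\bm h^*_n)/\sqrt n$ and $\hat{\bm\theta}_n=\bm\theta^*+\hat{\bm h}_n/\sqrt n$. Both local parameters are $O_p(1)$. Apply \eqref{eq:second_expan1} at the two points and subtract:
\[
n\bigl(f_n(\hat{\bm\theta}^{(b)}_n)-f_n(\hat{\bm\theta}_n)\bigr)=\bm h^{*\top}_n\bm Z_n+\tfrac12\bigl[(\hat{\bm h}_n+\bm h^*_n)^\top\Sigma(\hat{\bm h}_n+\bm h^*_n)-\hat{\bm h}_n^\top\Sigma\hat{\bm h}_n\bigr]+o_p(1).
\]
Expanding the bracket gives $\bm h^{*\top}_n\Sigma\hat{\bm h}_n+\frac12\bm h^{*\top}_n\Sigma\bm h^*_n$. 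Substituting $\Sigma\hat{\bm h}_n=-\bm Z_n+o_p(1)$ from Step 1, the cross terms cancel, and the expression reduces to $\frac12\bm h^{*\top}_n\Sigma\bm h^*_n+o_p(1)$.

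Some care is needed because the uniform $o_p(1)$ in \eqref{eq:second_expan1} is unconditional while $\bm h^*_n$ is random under $\mathbb P^*$. To handle this, restrict to the event $\{\|\hat{\bm h}_n\|\le M,\ \|\bm h^*_n\|\le M\}$. Note that an unconditional $o_p(1)$ remainder $R_n$ satisfies $\mathbb P^*(|R_n|>\varepsilon)\xrightarrow{p}0$ by Markov's inequality applied to $\mathbb E[\mathbb P^*(\cdot)]$.

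Finally, for bounded Lipschitz $\varphi$, $\mathbb E^*\varphi$ of the bootstrap excess loss differs from $\mathbb E^*\varphi(\frac12\bm h^{*\top}_n\Sigma\bm h^*_n)$ by $o_p(1)$. The continuous mapping theorem for conditional weak convergence in probability then yields the limit $\mathcal L(\frac12\bm Z^\top\Sigma\bm Z)$.
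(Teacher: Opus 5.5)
Your proposal is correct and follows the paper's proof step for step. The steps are: the argmin argument on the local expansion \eqref{eq:second_expan1} giving $\sqrt n(\hat{\bm\theta}_n-\bm\theta^*)=-\Sigma^{-1}\bm Z_n+o_p(1)$; the Taylor expansion of $F$ under (C3); the conditional argmin argument combined with the bootstrap CLT for $\bm Z_n^{(b)}$; and subtracting \eqref{eq:second_expan1} evaluated at $\hat{\bm h}_n+\bm h_n^*$ and at $\hat{\bm h}_n$, so that the cross term $\bm h_n^{*\top}(\bm Z_n+\Sigma\hat{\bm h}_n)$ vanishes. The differences are cosmetic: you prove the argmin step directly by the quadratic-comparison inequality instead of citing the argmax continuous mapping theorem, and you spell out why a data-dependent $o_p(1)$ remainder is also $o_p(1)$ under $\mathbb P^*$, which the paper handles by restricting to a high-probability event.
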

Theorem \ref{thm:limit} implies that the conditional limiting distribution of $n(f_n(\hat{\bm\theta}^{(b)}_n)-f_n(\hat{\bm\theta}_n))$ is a quadratic form of a Gaussian random vector when the data within the window are i.i.d.~and the size goes to $\infty$. In the special case \(\Omega=\Sigma\), the limiting quadratic form reduces to
\(\chi_d^2/2\). Therefore, an empirical $\beta$-quantile of bootstrap excess losses is an asymptotic approximation of the $\beta$-quantile of $F(\hat{\bm\theta}_n)-F(\bm\theta^*)$ for sufficiently large $n$.
\begin{remark}
When the observations within a window are
stationary and weakly dependent, a similar argument can be combined with the moving
block bootstrap by replacing \(\Omega\) with the long-run variance
\[
\Omega_{\mathrm{LR}}
:=
\sum_{j\in\mathbb Z}
\E\!\left[
\bm\psi(\bm X_t,\bm\theta^*)
\bm\psi(\bm X_{t+j},\bm\theta^*)^\top
\right]
\]
which does not depend on \(t\) by  stationarity. Under standard mixing and block-length conditions, the bootstrap excess loss then converges
to the corresponding quadratic Gaussian limit with covariance
\(\Sigma^{-1}\Omega_{\mathrm{LR}}\Sigma^{-1}\). A formal statement is provided in
Section \ref{sec:mbb} of the Supplementary Material. This extension justifies the moving block bootstrap 
used for dependent data in Sections~5.3 and~6.
\end{remark}

To further clarify the role of the bootstrap threshold in the pairwise test \eqref{eq:test}, we consider an asymptotic argument under the null hypothesis
\[H_0^{t,k}:~\text{no distributional shift occurs within a window}~ k.\]
Let $k>n$, so that both \(\hat{\bm\theta}_n\) and \(\hat{\bm\theta}_k\) target the same population minimizer \(\bm\theta^*\).

Suppose that \(n,k\to\infty\) and \(k/n\to\infty\). Under \(H_0^{t,k}\), both
\(\hat{\bm\theta}_n\) and \(\hat{\bm\theta}_k\) target the same population minimizer
\(\bm\theta^*\). Applying Condition~(C4) to the larger window \(k\), we have
\(
\sqrt{k}\|\hat{\bm\theta}_k-\bm\theta^*\|=O_p(1),
\)
and hence
\(
\hat{\bm\theta}_k-\bm\theta^*=o_p(n^{-1/2}).
\)
Therefore,
\(
\sqrt n(\hat{\bm\theta}_k-\hat{\bm\theta}_n)
=
-\sqrt n(\hat{\bm\theta}_n-\bm\theta^*)+o_p(1).
\)

Let
\(
\bm h_1=\sqrt n(\hat{\bm\theta}_n-\bm\theta^*),\) \(
\bm h_2=\sqrt n(\hat{\bm\theta}_k-\bm\theta^*).
\)
By Condition~(C4), \(\bm h_1=O_p(1)\), and the previous result gives
\(\bm h_2=o_p(1)\). Applying the quadratic
expansion \eqref{eq:second_expan1} in Condition~(C5) at \(\bm h=\bm h_2\) and \(\bm h=\bm h_1\), and subtracting
the two expansions, we obtain
\[
\begin{aligned}
n\bigl(f_n(\hat{\bm\theta}_k)-f_n(\hat{\bm\theta}_n)\bigr)
&=
\tfrac12\bm h_2^\top\Sigma\bm h_2+\bm h_2^\top \bm Z_n
-\tfrac12\bm h_1^\top\Sigma\bm h_1-\bm h_1^\top \bm Z_n
+o_p(1).
\end{aligned}
\]
By  Conditions~(C4)--(C5) through the argmin continuous mapping theorem \citep[Theorem~3.2.2]{vandervaart1996weak}, we obtain that
\(
\bm h_1=-\Sigma^{-1}\bm Z_n+o_p(1).
\)
Since \(\bm Z_n=O_p(1)\) and \(\bm h_2=o_p(1)\), the terms involving \(\bm h_2\) are
\(o_p(1)\). Hence,
\[
n\bigl(f_n(\hat{\bm\theta}_k)-f_n(\hat{\bm\theta}_n)\bigr)
=
\frac12
\left(\sqrt n(\hat{\bm\theta}_n-\bm\theta^*)\right)^\top
\Sigma
\left(\sqrt n(\hat{\bm\theta}_n-\bm\theta^*)\right)
+o_p(1).
\]
On the other hand, by Taylor's expansion and Condition~(C3),
\[
n\bigl(F(\hat{\bm\theta}_n)-F(\bm\theta^*)\bigr)
=
\frac12
\left(\sqrt n(\hat{\bm\theta}_n-\bm\theta^*)\right)^\top
\Sigma
\left(\sqrt n(\hat{\bm\theta}_n-\bm\theta^*)\right)
+o_p(1).
\]
Therefore,
\(
n\bigl(f_n(\hat{\bm\theta}_k)-f_n(\hat{\bm\theta}_n)\bigr)
\)
shares the same asymptotic limit as
\(
n\bigl(F(\hat{\bm\theta}_n)-F(\bm\theta^*)\bigr)
\)
under \(H_0^{t,k}\). 
Combining this observation with Theorem~\ref{thm:limit}, the bootstrap threshold
\(\tau(t,n)\) serves as an asymptotic critical value  for the
pairwise comparison. Let
\(R=\frac12\bm Z^\top\Sigma\bm Z,\) \(q_\beta=\inf\{x:\mathbb P(R\le x)\ge \beta\}.
\)
Suppose that the distribution of \(R\) is
continuous at \(q_\beta\) and that the bootstrap threshold satisfies
\(n\tau(t,n)\xrightarrow{p}q_\beta.\)
Then, under \(H_0^{t,k}\) and \(k/n\to\infty\),
\begin{equation}
    \label{eq:beta} 
\mathbb P\left(
f_n(\hat{\bm\theta}_k)-f_n(\hat{\bm\theta}_n)>\tau(t,n)
\,\middle|\,H_0^{t,k}
\right)
\to 1-\beta.
\end{equation} 
Thus, \(1-\beta\) can be interpreted as the asymptotic type-I error level of an individual
pairwise comparison.

The above discussion concerns a single pairwise comparison. The full BAWS procedure compares a candidate window $k$ with a collection of overlapping reference windows. These windows are dependent in a complicated way, and we do not have an interpretation of $\beta$ as in \eqref{eq:beta} for the full procedure. 
To obtain  a conservative benchmark for the accumulation of type-I error,  one may use a
Bonferroni correction. In the next result, let
\(
s_{t,k}:=\left|\{i\in\mathcal K_t:i<k\}\right|.
\)
\begin{proposition}[Bonferroni]
\label{prop:bonfe}
For a significance level \(\alpha\in(0,1)\), suppose that the pairwise thresholds \(\tau_{\mathrm{Bon}}(t,i)\) satisfy
$\mathbb P\left(
f_i(\hat{\bm\theta}_k)-f_i(\hat{\bm\theta}_i)
>
\tau_{\mathrm{Bon}}(t,i)
\,\middle|\, H_{0}^{t,k}
\right)
\le
\alpha/s_{t,k},$
for $i\in\mathcal K_t$ and $i<k$.
Then
$
\mathbb P\left(
T_k=1
\,\middle|\, H_0^{t,k}
\right)
\le
\alpha.$
 \end{proposition}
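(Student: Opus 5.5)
The plan is a direct union bound, taking the definition of $T_k$ from Section~\ref{sec:problem} with thresholds $\tau_{\mathrm{Bon}}(t,i)$ in place of $\tau(t,i)$. By that definition, $T_k=1$ exactly when $T_{i,k}=1$ for at least one reference window $i<k$. Here the reference windows range over the candidate set $\mathcal K_t$, which is the set actually used in the algorithm. So the first step is to write the event as a finite union,
\[
\{T_k=1\}=\bigcup_{i\in\mathcal K_t,\, i<k}\Bigl\{f_i(\hat{\bm\theta}_k)-f_i(\hat{\bm\theta}_i)>\tau_{\mathrm{Bon}}(t,i)\Bigr\}.
\]
This union has exactly $s_{t,k}=|\{i\in\mathcal K_t:i<k\}|$ members.

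The second step is to apply Boole's inequality under the conditional law given $H_0^{t,k}$. Conditioning on $H_0^{t,k}$ only fixes the data-generating regime, so $\mathbb P(\,\cdot\mid H_0^{t,k})$ is a genuine probability measure and subadditivity holds for it. This gives
\[
\mathbb P(T_k=1\mid H_0^{t,k})\le\sum_{i\in\mathcal K_t,\,i<k}\mathbb P\Bigl(f_i(\hat{\bm\theta}_k)-f_i(\hat{\bm\theta}_i)>\tau_{\mathrm{Bon}}(t,i)\,\Big|\,H_0^{t,k}\Bigr).
\]
Each summand is at most $\alpha/s_{t,k}$ by hypothesis. Summing $s_{t,k}$ such terms gives the bound $\alpha$.

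Two small points need a remark.

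\begin{itemize}
\item If $s_{t,k}=0$, there is no reference window below $k$. Then $T_k=0$ deterministically, so the bound holds trivially.
\item The thresholds $\tau_{\mathrm{Bon}}(t,i)$ may be random, since they are data-dependent bootstrap quantiles. This does not matter: the hypothesis already controls each pairwise rejection probability, including the randomness of the threshold, and Boole's inequality needs no independence. The complicated dependence among the overlapping windows mentioned before the proposition is therefore irrelevant.
\end{itemize}

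I expect no real obstacle. The only care needed is to make sure the index set in the union matches the one defining $s_{t,k}$, which means restricting to $\mathcal K_t$ rather than all of $[k-1]$.
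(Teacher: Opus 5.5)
Your proposal is correct and follows the same approach as the paper: write $\{T_k=1\}$ as the union over $i\in\mathcal K_t$, $i<k$, of the $s_{t,k}$ pairwise rejection events, then apply Boole's inequality. The paper states only the union identity and leaves the subadditivity step implicit. Your remarks on $s_{t,k}=0$ and on random bootstrap thresholds are valid additions that the paper does not spell out.
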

\begin{proof}
The result directly follows from
\(
\{T_k=1\}
=
\bigcup_{i\in\mathcal K_t:\,i<k}
\{
f_i(\hat{\bm\theta}_k)-f_i(\hat{\bm\theta}_i)
>
\tau_{\mathrm{Bon}}(t,i)
\}.
\)
\end{proof}
In practice, \(\tau_{\mathrm{Bon}}(t,i)\) can be computed using the same bootstrap procedure as \(\tau(t,i)\), with the threshold level \(
1-\alpha/s_{t,k}\)
for the pairwise comparison between \(k\) and \(i\). 
Since pairwise tests are nested and strongly dependent,  the Bonferroni correction
is generally conservative. We therefore use the global tuning parameter \(\beta\) in the
numerical studies. 
\subsection{An illustration:  single structural break}\label{sec:example}
A full characterization of the BAWS procedure is challenging in general nonstationary settings, since the window selection depends on a data-driven bootstrap threshold and a nested sequence of pairwise comparisons.
To obtain clearer analytical insight, we consider a simple single-break setting in which the performance of the pairwise test can be described explicitly. 

%For the proof we only use the squared-loss specialization, under which the empirical risk minimizer is $\hat\mu_{t,k}=\bar X_{t,k}$
Given a sequence of random variables \(\{X_t\}\), we work with mean estimation under the squared loss 
\(\ell(x,\mu)=(x-\mu)^2\). 
For a window of size \(k\), the empirical minimizer is the sample mean
$\hat\mu_{t,k}
=
\bar X_{t,k}
:=
\frac{1}{k}\sum_{l=t-k}^{t-1}X_l .$
The corresponding statistic has the exact form
$$f_{t,i}(\hat\mu_{t,k})-f_{t,i}(\hat\mu_{t,i}) =
\bigl(\bar X_{t,k}-\bar X_{t,i}\bigr)^2,
\qquad i<k.$$

Under the homogeneous null $H_0^{t,k}$, all observations within 
\(\{t-k,\ldots,t-1\}\) are generated from the same distribution, so the larger-window estimator is expected to remain compatible with the recent reference window. 
We now consider the following single-break alternative:
\[H_1:
X_{1},\ldots,X_{t-k_0-1}\overset{\text{i.i.d}}{\sim} \mathbb P_1,
\qquad
X_{t-k_0},\ldots,X_{t-1}\overset{\text{i.i.d}}{\sim} \mathbb P_2,
\]
with $\mathbb{E}_{\mathbb P_1}[X]=\mu_1$ and $\mathbb{E}_{\mathbb P_2}[X]=\mu_2\neq \mu_1$. 
Assume further that $\operatorname{Var}_{\mathbb P_1}(X)<\infty$ and $\operatorname{Var}_{\mathbb P_2}(X)<\infty$, where  $\mathbb E_{\mathbb P_i}[X]$ and $\operatorname{Var}_{\mathbb P_i}(X)$ denote the mean and variance under $\mathbb P_i$ for $i\in\{1,2\}$, respectively. Therefore, for a window $k>k_0$,  $\{t-k,\ldots,t-1\}$ contains a single change point at $t-k_0$. Under this alternative, the larger window \(k\) mixes observations from two regimes, 
whereas the reference window \(k_0\) contains only post-break observations. 

Assume that \(k_0\to\infty\) as \(t\to\infty\). 
Let \(k=k_t\in\mathcal K_t\) be a candidate window satisfying \(k>k_0\) and
\(
k_0/k\to 1-c
\)
for some constant \(c\in(0,1]\).
As in Section~\ref{sec:problem}, \(T_k\) denotes the window-level decision at
time \(t\): \(T_k=0\) means that window \(k\) is admissible, while \(T_k=1\)
means that it is rejected.

The following result formalizes the rejection of the window $k$ under $H_1$.
\begin{theorem} 
\label{thm:reject_large_window}
 For some $\epsilon$ satisfying $0<\epsilon<c^2(\mu_1-\mu_2)^2$, if
\begin{equation}\label{con:limit_tau}
\mathbb P(\tau(t,k_0)\le c^2(\mu_1-\mu_2)^2-\epsilon\mid H_1)\to 1~\text{as}~t\to\infty,
\end{equation} then as $t\to\infty$,
\[
\mathbb{P}(T_{k}=0\mid H_1)\to 0,
\qquad\text{equivalently}\qquad
\mathbb{P}(T_{k}=1\mid H_1)\to 1.
\]
\end{theorem}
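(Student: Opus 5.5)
Since $T_k=1$ as soon as a single pairwise comparison rejects, it suffices to show that the comparison with the reference window $i=k_0$ rejects with probability tending to one. We assume, as the setup implicitly does, that $k_0\in\mathcal K_t$. Because $\{T_k=0\}\subseteq\{T_{k_0,k}=0\}$, it is enough to prove
\[
\mathbb P\bigl((\bar X_{t,k}-\bar X_{t,k_0})^2\le \tau(t,k_0)\,\big|\,H_1\bigr)\to 0 .
\]

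The first step is to find the limit of the statistic. Under $H_1$ the window $\{t-k,\dots,t-1\}$ contains $k_0$ post-break observations and $k-k_0$ pre-break observations. Hence
\[
\bar X_{t,k}=\frac{k_0}{k}\bar X_{t,k_0}+\frac{k-k_0}{k}\bar Y,
\]
where $\bar Y$ is the mean of $X_{t-k},\dots,X_{t-k_0-1}$. This gives
\[
\bar X_{t,k}-\bar X_{t,k_0}=\frac{k-k_0}{k}\bigl(\bar Y-\bar X_{t,k_0}\bigr).
\]
By Chebyshev's inequality and finite variances, $\bar X_{t,k_0}\xrightarrow{p}\mu_2$ because $k_0\to\infty$. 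The factor $(k-k_0)/k$ converges to $c$.

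The main obstacle is $\bar Y$. When $c>0$ we have $k-k_0=ck(1+o(1))\to\infty$, since $k\ge k_0\to\infty$. Chebyshev then gives $\bar Y\xrightarrow{p}\mu_1$, with variance bound $\operatorname{Var}_{\mathbb P_1}(X)/(k-k_0)\to0$. This holds even though $k$ varies with $t$, because each $\bar Y$ is an average of $k-k_0$ i.i.d.\ draws from $\mathbb P_1$.

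Combining these limits, $\bar X_{t,k}-\bar X_{t,k_0}\xrightarrow{p} c(\mu_1-\mu_2)$. By the continuous mapping theorem,
\[
(\bar X_{t,k}-\bar X_{t,k_0})^2\xrightarrow{p} c^2(\mu_1-\mu_2)^2 .
\]

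To conclude, intersect with the event $A_t=\{\tau(t,k_0)\le c^2(\mu_1-\mu_2)^2-\epsilon\}$, whose probability tends to one by \eqref{con:limit_tau}. We obtain
\[
\mathbb P(T_{k_0,k}=0\mid H_1)\le \mathbb P\bigl((\bar X_{t,k}-\bar X_{t,k_0})^2\le c^2(\mu_1-\mu_2)^2-\epsilon\bigr)+\mathbb P(A_t^c)\to0 .
\]
The first term vanishes by the convergence in probability just established, since $\epsilon>0$. Therefore $\mathbb P(T_k=0\mid H_1)\to0$, which is the claim.
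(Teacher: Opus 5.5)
Your proposal is correct and follows essentially the same route as the paper. You reduce $\{T_k=0\}$ to the single comparison against $k_0$, apply the squared-loss identity and the decomposition $\bar X_{t,k}-\bar X_{t,k_0}=\frac{k-k_0}{k}(\bar Y-\bar X_{t,k_0})$ to get the limit $c^2(\mu_1-\mu_2)^2$ in probability, and then intersect with the threshold event from \eqref{con:limit_tau}. Your explicit Chebyshev argument, which handles the triangular-array nature of the window means, and your explicit assumption $k_0\in\mathcal K_t$ are slightly more careful than the paper, which appeals to the weak law of large numbers and leaves that membership implicit.
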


Condition \eqref{con:limit_tau} requires the bootstrap threshold to be asymptotically smaller than the mean-shift signal \(c^2(\mu_1-\mu_2)^2\). 
This is natural in view of Theorem~\ref{thm:limit}: for the squared loss, under Conditions (C1)-(C5), if no distributional shift occurs within the reference window  \(k_0\), then the bootstrap threshold \(\tau(t,k_0)\) is asymptotically approximated by \(q_\beta/k_0\), where \(q_\beta\) is the \(\beta\)-quantile of the limiting quadratic form in \eqref{eq:limit_dist}. 
Hence, the threshold vanishes as \(k_0\to\infty\), while the break signal \(c^2(\mu_1-\mu_2)^2\) is positive. It follows that \eqref{con:limit_tau} holds with probability tending to one under \(H_1\), and BAWS rejects the overlong window with probability tending to one. Furthermore, we obtain the following result for the full procedure.
\begin{corollary}
\label{cor:selected_window}
Let \(\mathcal K_t^+=\{l\in\mathcal K_t:l\ge k\}\) and \(r_t=|\mathcal K_t^+|\). Assume that \(k_0\in\mathcal K_t\) and \(k_0<k\). Suppose that Condition \eqref{con:limit_tau} holds and that
\begin{equation}\label{con:sparse}
\frac{r_t}{k-k_0}\to0~\text{as}~t\to\infty.
\end{equation}
\iffalse
Assume further that
\begin{equation}\label{con:reject_large_window}
\mathbb P\left(
\inf_{l\in\mathcal K_t:\,l\ge k}
\left\{
f_{t,k_0}(\hat\mu_{t,l})-f_{t,k_0}(\hat\mu_{t,k_0})
\right\}
>
\tau(t,k_0)
\,\middle|\,H_1
\right)\to1.
\end{equation}
\fi
Then as $t\to\infty$,
$\mathbb P(\hat k_t\ge k\mid H_1)\to 0.$
\end{corollary}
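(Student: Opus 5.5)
The event \(\{\hat k_t\ge k\}\) is contained in the event that some window \(l\in\mathcal K_t^+\) is admissible. Admissibility of \(l\) requires \(T_{k_0,l}=0\), because \(k_0\in\mathcal K_t\) and \(k_0<k\le l\). We therefore work with the single reference window \(k_0\), which contains only post-break data, and show that with probability tending to one every \(l\in\mathcal K_t^+\) fails the pairwise test against \(k_0\):
\[
\mathbb P(\hat k_t\ge k\mid H_1)\le \mathbb P\Bigl(\min_{l\in\mathcal K_t^+}(\bar X_{t,l}-\bar X_{t,k_0})^2\le \tau(t,k_0)\,\Big|\,H_1\Bigr).
\]
By Condition~\eqref{con:limit_tau}, it then suffices to show that
\[
\mathbb P\Bigl(\min_{l\in\mathcal K_t^+}(\bar X_{t,l}-\bar X_{t,k_0})^2\le c^2(\mu_1-\mu_2)^2-\epsilon\Bigr)\to0 .
\]

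For \(l>k_0\), the pre-break fraction of window \(l\) is \(c_l:=(l-k_0)/l\). We decompose
\[
\bar X_{t,l}-\bar X_{t,k_0}=c_l\bigl(\bar Y_l-\bar X_{t,k_0}\bigr),
\]
where \(\bar Y_l\) is the mean of the \(l-k_0\) pre-break observations \(X_{t-l},\dots,X_{t-k_0-1}\). Hence \(\bar X_{t,l}-\bar X_{t,k_0}=c_l(\mu_1-\mu_2)+c_l\bigl[(\bar Y_l-\mu_1)-(\bar X_{t,k_0}-\mu_2)\bigr]\). Since \(c_l\) is increasing in \(l\), we have \(c_l\ge c_k\to c\) uniformly over \(l\ge k\). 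The deterministic part therefore satisfies \(c_l^2(\mu_1-\mu_2)^2\ge c_k^2(\mu_1-\mu_2)^2\), which eventually exceeds \(c^2(\mu_1-\mu_2)^2-\epsilon/2\). The noise term \(\bar X_{t,k_0}-\mu_2\) is \(O_p(k_0^{-1/2})=o_p(1)\) by Chebyshev, since \(k_0\to\infty\) and the variance is finite. It does not depend on \(l\).

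The main obstacle is controlling \(\bar Y_l-\mu_1\) uniformly over the \(r_t\) windows in \(\mathcal K_t^+\), which may have unbounded cardinality. We handle this with a union bound and Chebyshev: each \(\bar Y_l\) averages \(l-k_0\ge k-k_0\) i.i.d.\ \(\mathbb P_1\) variables, so
\[
\mathbb P\Bigl(\max_{l\in\mathcal K_t^+}|\bar Y_l-\mu_1|>\delta\Bigr)\le \sum_{l\in\mathcal K_t^+}\frac{\operatorname{Var}_{\mathbb P_1}(X)}{\delta^2(l-k_0)}\le \frac{r_t\operatorname{Var}_{\mathbb P_1}(X)}{\delta^2(k-k_0)}\to0
\]
by \eqref{con:sparse}. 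This is exactly where the sparsity condition enters. Note that \(k-k_0\to\infty\) follows from \(c>0\) and \(k\ge k_0\to\infty\), and \eqref{con:sparse} supplies the rate against \(r_t\).

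Combining the pieces, on an event of probability tending to one we have, uniformly in \(l\in\mathcal K_t^+\),
\[
|\bar X_{t,l}-\bar X_{t,k_0}|\ge c_l\bigl(|\mu_1-\mu_2|-2\delta\bigr)\ge c_k\bigl(|\mu_1-\mu_2|-2\delta\bigr),
\]
using \(c_l\le1\) to keep the noise bound simple. Choosing \(\delta\) small enough that the squared right-hand side exceeds \(c^2(\mu_1-\mu_2)^2-\epsilon/2\) for large \(t\) (possible since \(c_k\to c\)), every \(l\in\mathcal K_t^+\) has a squared gap above \(c^2(\mu_1-\mu_2)^2-\epsilon\ge\tau(t,k_0)\). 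Hence \(T_{k_0,l}=1\) and \(T_l=1\) for all \(l\in\mathcal K_t^+\), so \(\hat k_t<k\). Intersecting this event with the event in \eqref{con:limit_tau} gives \(\mathbb P(\hat k_t\ge k\mid H_1)\to0\). Theorem~\ref{thm:reject_large_window} is the special case \(l=k\) of this argument.
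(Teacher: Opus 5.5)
Your proposal is correct and takes essentially the same approach as the paper's proof. Both arguments reduce to the pairwise test against the post-break reference window $k_0$ and write the gap as $\frac{l-k_0}{l}(\bar Y_l-\bar X_{t,k_0})$ with $\frac{l-k_0}{l}\ge\frac{k-k_0}{k}\to c$; both control $\sup_{l\in\mathcal K_t^+}|\bar Y_l-\mu_1|$ by Chebyshev and a union bound of size $r_t\operatorname{Var}_{\mathbb P_1}(X)/\{\delta^2(k-k_0)\}\to0$ via \eqref{con:sparse}, and then intersect with the event in \eqref{con:limit_tau}.
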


 Condition \eqref{con:sparse} reflects the sparse construction of candidate window sets in BAWS described in Section \ref{sec:boot}.
Since the increment of candidate windows becomes larger for longer windows, the number of windows in \(\mathcal K_t^+\) grows  more slowly than \(k-k_0\).   Under the single-break hypothesis \(H_1\), this condition ensures that 
\[
\mathbb P\left(
\bigcap_{l\in\mathcal K_t^+}\{T_l=1\}
\,\middle|\,H_1
\right)\to1.
\]
Thus,  \(\hat k_t<k\) with probability tending to one. 

\begin{example}[Gaussian special case]
To obtain an explicit expression for the rejection probability, suppose that the observations in the two regimes are independent Gaussian:
\[
X_{t-k},\ldots,X_{t-k_0-1}\stackrel{\text{i.i.d.}}{\sim} N(\mu_1,\sigma_1^2),
\qquad
X_{t-k_0},\ldots,X_{t-1}\stackrel{\text{i.i.d.}}{\sim} N(\mu_2,\sigma_2^2),
\]
with the two blocks independent and \(\mu_1\ne\mu_2\). Define
\[
\mu_{t,k_0}
:=
\frac{k-k_0}{k}(\mu_1-\mu_2),
\quad
\sigma^2_{t,k_0}
:=
\left(\frac{k-k_0}{k}\right)^2
\left(
\frac{\sigma_1^2}{k-k_0}
+
\frac{\sigma_2^2}{k_0}
\right).
\]
Then
$\bar X_{t,k}-\bar X_{t,k_0}
\sim
N(\mu_{t,k_0},\sigma^2_{t,k_0}).$
Therefore, for any fixed deterministic threshold \(\tau(t,k_0)>0\), under \(H_1\),
\[
\begin{aligned}
\mathbb P(T_k=1\mid H_1)
&\ge
\mathbb P\left(
(\bar X_{t,k}-\bar X_{t,k_0})^2>\tau(t,k_0)
\,\middle|\,H_1
\right) \\
&=
1-\Phi\left(
\frac{\sqrt{\tau(t,k_0)}-\mu_{t,k_0}}{\sigma_{t,k_0}}
\right)
+
\Phi\left(
\frac{-\sqrt{\tau(t,k_0)}-\mu_{t,k_0}}{\sigma_{t,k_0}}
\right).
\end{aligned}
\]
where \(\Phi\) denotes the standard normal cumulative distribution function.
\end{example}

\begin{example}[Two windows comparison]
\label{ex:naive_select_250}
Consider a single break at time \(250\). Specifically,
\(X_1,\ldots,X_{250}\) are i.i.d.~from \(\mathbb P_1\), whereas
\(X_{251},\ldots,X_{500}\) are i.i.d.~from \(\mathbb P_2\). Let
\(\mathbb E_{\mathbb P_1}(X)=\mu_1\) and
\(\mathbb E_{\mathbb P_2}(X)=\mu_2\ne\mu_1\), and assume that both
distributions have finite variances. We estimate the mean at time
\(t=501\) under squared loss and restrict the candidate windows to
\(\{250,500\}\). The window of length \(250\) contains only post-break
observations, whereas the window of length \(500\) mixes the two regimes.

Define
\[
\bar X_1=\frac{1}{250}\sum_{l=1}^{250}X_l,
\qquad
\bar X_2=\frac{1}{250}\sum_{l=251}^{500}X_l .
\]
Then
\(
\hat\mu_{t,250}=\bar X_2\) and \(
\hat\mu_{t,500}=(\bar X_1+\bar X_2)/2.
\)
Hence, by the squared-loss identity,
\[
f_{t,250}(\hat\mu_{t,500})
-
f_{t,250}(\hat\mu_{t,250})
=
(\hat\mu_{t,500}-\hat\mu_{t,250})^2
=
\frac{1}{4}(\bar X_1-\bar X_2)^2.
\]
For large blocks, this quantity is close to \((\mu_1-\mu_2)^2/4\). Therefore, if the break signal \((\mu_1-\mu_2)^2/4\) exceeds the threshold \(\tau(t,250)\), the window \(500\) fails the stability test against the reference window \(250\), and BAWS selects \(250\).
\end{example}

\section{Window selection for VaR and ES forecasts}\label{sec:risk_fore}

Elicitability is useful in model selection, forecast comparison, and backtesting of financial risk measures; see \cite{gneiting2011making} and \cite{fissler2015expected}.  
We now apply BAWS to forecast elicitable risk measures, specifically VaR and ES, which are important for financial regulation and portfolio management practice; for VaR and ES in regulation, see  \cite{embrechts2014academic} and \cite{mcneil2015quantitative}.

Let $X_t \sim \mathbb{P}_t$ denote the random variable of the financial loss at time $t$, and $\{x_1,\dots,x_{t-1}\}$ be the observed historical losses over time. Recall that the VaR at level $\alpha\in (0,1)$ for the loss $X_t$ is defined as
$\VaR_{\alpha}(X_t)=\inf\{x\in\R: \mathbb{P}(X_t\leq x)\geq \alpha\}.$
Since VaR is elicitable~\citep{gneiting2011making},
it can be characterized as the minimizer of an expected scoring function:
\begin{equation}\label{eq:VaR}
    \VaR_{\alpha}(X_t)=\argmin_{v\in\mathbb{R}} \mathbb{E}\left[S_{V,\alpha}(X_t,v)\right]
    \end{equation} 
   where
    \begin{equation}\label{eq:score_VaR}
        S_{V,\alpha}(x,v)=({\bf 1}\{x < v\}-\alpha)\left[G(v)-G(x)\right]
    \end{equation}
    is a scoring function, \(G\) is strictly increasing,
and \(\E[G(X_t)]\) exists; see \cite{fissler2015expected}.
 Specifically, we can take $G(x)=x$ and then $S_{V,\alpha}(x,v)$ is known as the check function. If \(X_t\) has a unique \(\alpha\)-quantile, then
\(\VaR_\alpha(X_t)\) is the unique minimizer of \eqref{eq:VaR}. %Under the assumption of continuity of the quantile function of $X_t$ at $\alpha$, $\VaR_\alpha(X_t) $ is  the unique minimizer of (\ref{eq:VaR}). 

For the random variable $X_t$ with a finite mean, Expected Shortfall (ES) is defined as 
\begin{equation*}
\ES_\alpha(X_t)= \frac{1}{1-\alpha}\int_\alpha^1 \VaR_\beta(X_t) \d\beta.
\end{equation*}
Since the pair $(\VaR_\alpha,\ES_\alpha)$ is jointly elicitable  \citep{fissler2016higher}, we can obtain the pair by the following optimization problem
\begin{equation}\label{eq:ES}
\left(\VaR_\alpha(X_t),\ES_\alpha(X_t)\right)=\argmin\limits_{(v,e)\in \R^2}\mathbb{E}\left[S_{V,E, \alpha}(X_t,v,e)\right],
\end{equation}
with
\begin{align}\label{eq:score_joint}
 S_{V,E, \alpha}(x,v,e)&=(\textbf{1}\{x< v\}-\alpha)\left[G_1(v)-G_1(x)\right]
 +\frac 1{1-\alpha}G_2(e)\textbf{1}\{x\geq v\}(v-x)\nonumber\\
 &\qquad+G_2(e)(e-v)-{\mathcal{G}}_2(e),  
\end{align}
where $G_1$ and $G_2$  are strictly increasing and continuously differentiable such that the expectation $\E \left[G_1(X_t)\right]$ exists,  and 
$\lim_{x\to +\infty}G_2(x)=0~\text{and}~{\mathcal{G}}^\prime_2=G_2.$

As we define the risk measure based on  the loss rather than the return, (\ref{eq:score_joint}) and the corresponding conditions are induced from Corollary 5.5 of \cite{fissler2016higher}. Note that any specification of $G_1(x)$ and $G_2(x)$ satisfying the above properties will lead to the unique minimizer of (\ref{eq:ES}). Here, we set
$G_1(x)=x$ and $G_2(x)=-\frac{\exp(-x)}{1+\exp(-x)}$, following \cite{fissler2016higher}.

When the distribution $\mathbb{P}_t$ is unknown, expected scores in (\ref{eq:VaR}) and (\ref{eq:ES}) need to be approximated using the sample average of the observed data. Given a window of size $k$ and observations $\{x_i\}_{i=t-k}^{t-1}$, we estimate VaR and ES via empirical loss functions, i.e.,
\begin{equation}\label{eq:VaR1}
    \widehat\VaR_{\alpha}(X_t)=\argmin_{v} \frac{1}{k}\sum_{l=t-k}^{t-1}S_{V, \alpha}(x_l,v)
\end{equation} 
and
\begin{equation}\label{eq:ES1}
\left(\widehat{\VaR}_\alpha(X_t),\widehat{\ES}_\alpha(X_t)\right)=\argmin\limits_{v,e}\frac{1}{k}\sum_{l=t-k}^{t-1}S_{V,E, \alpha}(x_l,v,e),
\end{equation}
 where $\widehat\VaR_\alpha(X_t)$ and $\widehat\ES_\alpha(X_t)$ denote empirical VaR and ES, respectively.

To cope with potential structural changes in financial losses, 
we apply the proposed window-selection framework to VaR and ES forecasting, 
using the empirical loss functions in \eqref{eq:VaR1} and \eqref{eq:ES1} 
as particular cases of $f_{t,k}$ in \eqref{eq:emp_loss}. 

Beyond VaR and ES, the elicitability of many other classes of risk measures have been characterized in the literature. For instance,  coherent risk measures have been studied by \cite{ziegel2016coherence}, convex risk measures by \cite{bellini2015elicitable} and \cite{delbaen2016risk}, tail risk measures by \cite{liu2021theory} and \cite{fissler2024elicitability}, and distortion risk measures by \cite{kou2016measurement} and \cite{wang2015elicitable}. Therefore, our proposed window selection strategy could be extended to other elicitable risk measures via the corresponding scoring functions.
In view of the importance of VaR and ES in financial practice,  we omit a discussion of scoring functions for other risk measures. 

Note that the  regularity conditions in Section~\ref{subsec:iid_bootstrap}   accommodate the non-smooth scoring functions used here. 
For the theoretical
verification, the empirical argmins in \eqref{eq:VaR1}--\eqref{eq:ES1} are
understood on the compact parameter spaces specified in Section~\ref{app:verify}
of the Supplementary Material. Under standard quantile-density and moment
conditions, the check function \eqref{eq:score_VaR} and the Fissler-Ziegel score
\eqref{eq:score_joint} satisfy Conditions~(C1)--(C5).
 \section{Simulation studies} \label{sec:Sim}
In this section, we conduct three simulation studies to evaluate the forecast performance of BAWS compared to various existing approaches. These simulations are designed to reflect real-world changes in financial markets, including abrupt breaks in market conditions, continuous changes in market trends, and time-varying volatility in market risk. Specifically, we focus on VaR forecasting in the main text, as it is the primary risk measure of interest in this paper. Mean forecasting is used as an auxiliary elicitable benchmark, and the corresponding numerical results are reported in the Supplementary Material.%The details of these simulations are provided in Sections \ref{Sec:Sim1}--\ref{Sec:Sim3}. 
 \subsection{Scenario 1: Discrete structural changes}\label{Sec:Sim1}
 This study investigates independent data with structural breaks over time, where a structural break refers to an abrupt change in the data-generating process, such as a sudden shift in the mean or variance parameter.

The simulated data $X_t$ is independently generated from the normal distribution $N(\mu_t,\sigma^2_t)$ with mean $\mu_t$ and variance $\sigma^2_t$ for $t=1,\dots, T$, where $T=2000$. For a comprehensive evaluation, we consider three parameter settings in this scenario.
\begin{itemize}
\item \textbf{Setting A1 (Two regimes with an abrupt mean shift).}  
The process has constant variance \(\sigma_t^2=0.25\). The mean changes from 
\(\mu_t=1\) for \(t\le T/2\) to \(\mu_t=2\) for \(t>T/2\).

\item \textbf{Setting A2 (Three regimes with piecewise-constant mean).}  
The process has constant variance \(\sigma_t^2=0.25\). The mean is 
\(\mu_t=1\) for \(t\le 800\), \(\mu_t=0\) for \(800<t\le 1400\), and 
\(\mu_t=2\) for \(t>1400\).

\item \textbf{Setting A3 (Structural breaks in both mean and variance).}  
The mean follows the same three-regime pattern as in Setting~A2. The variance is 
\(\sigma_t^2=0.25\) for \(t\le 800\), \(\sigma_t^2=1\) for \(800<t\le 1400\), 
and \(\sigma_t^2=0.49\) for \(t>1400\).
\end{itemize}
\iffalse
\begin{itemize}
\item \textbf{Setting A1 (Two regimes with an abrupt mean shift).}  
The process has constant variance and a single break in the mean:
\[
\mu_t = 
\begin{cases}
1, & t \le T/2,\\
2, & t > T/2,
\end{cases}
\qquad 
\sigma_t^2 = 0.25.
\]

\item \textbf{Setting A2 (Three regimes with piecewise-constant mean).}  
The variance remains constant, while the mean exhibits two structural breaks:

\[
\mu_t =
\begin{cases}
1, & t \le 800,\\
0, & 800 < t \le 1400,\\
2, & t > 1400,
\end{cases}
\qquad 
\sigma_t^2 = 0.25.
\]

\item \textbf{Setting A3 (Structural breaks in both mean and variance).}  
The mean follows the same three-regime pattern as in Setting~A2, 
while the variance changes as:
\[
\sigma_t^2 =
\begin{cases}
0.25, & t \le 800,\\
1, & 800 < t \le 1400,\\
0.49, & t > 1400.
\end{cases}
\]

\end{itemize}
\fi
The data  are therefore piecewise stationary and an appropriate window is required for forecasting. We set the threshold level to $\beta = 0.9$ and estimate thresholds using 
$500$ bootstrap replications. Each  experiment is replicated 
$n=1000$ times.
  The forecasting procedure for both the mean and VaR begins at $t_0=501$, and we compare our method with the SAWS  approach of \cite{huang2025stability},  the rolling window approach, and  full window approach.  We consider the risk level of $\alpha=0.95$ in VaR forecasting.

For the SAWS approach, we follow the parameter settings in Section~7.1 of 
\cite{huang2025stability}, motivated by the fact that the expected loss 
functions for mean and VaR forecasting are, respectively, strongly convex and smooth, 
and Lipschitz continuous. Specifically, we set $\alpha_\tau = 0.1$ and 
$C_{\tau} = 0.3$ for mean forecasts, and $\alpha_\tau = 0.1$ and 
$C_{\tau} = 0.5$ for VaR forecasts throughout Section \ref{sec:Sim}.
At each time $t$, the fixed-window approach uses rolling windows of 
sizes $k \in \{250, 500, 750\}$, corresponding approximately to one-, two-, 
and three-year windows. The full-window approach incorporates all available 
historical observations up to time $t-1$. For the fixed window with $k=750$, 
the full window is used whenever the past observations are fewer than 750.

\begin{table}[!ht]
\caption{The bias, variance, MSE, cumulative risk, and forecast loss for  VaR across BAWS, SAWS, fixed windows (250, 500, 750), and full window across Settings A1--A3.}
\scriptsize
\renewcommand{\arraystretch}{0.8}
\resizebox{\textwidth}{!}{
\begin{tabular}{clcccccc}
 \iffalse
\toprule
  \multirow{2}{*}{Mean} & & \multirow{2}{*}{BAWS} & \multirow{2}{*}{SAWS} &\multicolumn{3}{c}{Fixed Window} & \multirow{2}{*}{Full}\\
  \cline{5-7}
 &&&&250 & 500 & 750 \\

\midrule
 \multirow{4}{*}{A1} &MAB     & 0.0077 & 0.0083 & 0.0845 & 0.1678 & 0.2511 & 0.4631 \\
                       &Var & 0.0079 & 0.0073 & 0.0010 & 0.0005 & 0.0003 & 0.0002 \\
                       & CR       & 17.9276 & \textbf{14.8314} & 85.5601 & 168.4481 & 251.7759 & 501.8624 \\
                       & CL    & 392.9567   & \textbf{389.8877} & 460.3182 & 543.2237 & 626.4365 & 876.6553 \\
\cline{2-8}
  \multirow{4}{*}{A2} & MAB    & 0.0169 & 0.0123 & 0.2514 & 0.5010 & 0.7106 & 0.7744 \\
                       & Var & 0.0086 & 0.0088 & 0.0010 & 0.0005 & 0.0003 & 0.0002 \\
                       & CR       & 41.7347 & \textbf{22.6393} & 420.4758 & 836.3033 & 1188.3080 & 1201.3810 \\
                      &CL    & 417.3304& \textbf{397.9329} & 795.9474 & 1211.8626 & 1564.1570 & 1577.5416 \\
\cline{2-8}
  \multirow{4}{*}{A3} & MAB     & 0.0216 & 0.0133 & 0.2516 & 0.5010 & 0.7106 & 0.7744 \\
                       & Var & 0.0227 & 0.1941 & 0.0025 & 0.0012 & 0.0008 & 0.0004 \\
                      & CR       & \textbf{64.1937} & 297.7890 & 422.8770 & 837.5919 & 1189.3490 & 1201.7770 \\
                      & CL     & \textbf{1034.7530} & 1265.8910 & 1393.0080 & 1807.9140 & 2160.1410 & 2173.0750 \\
                       \fi
\toprule
 \multirow{2}{*}{VaR} & & \multirow{2}{*}{BAWS} & \multirow{2}{*}{SAWS} &\multicolumn{3}{c}{Fixed Window} & \multirow{2}{*}{Full}\\
  \cline{5-7}
 &&&&250 & 500 & 750 \\

\midrule
  \multirow{5}{*}{A1} & MAB    & 0.0597 & 0.1603 & 0.0478 & 0.0968 & 0.1360 & 0.2536 \\
                 & Var & 0.0055 & 0.0131 & 0.0047 & 0.0025 & 0.0018 & 0.0013 \\
                       & MSE      & \textbf{0.0191} & 0.0722 & 0.0282 & 0.0501 & 0.0710 & 0.1254 \\
                       & CR       & \textbf{5.0391} & 20.3860 & 7.8254 & 14.5752 & 20.7891 & 35.3949 \\
                       & CL     & \textbf{82.3192} & 97.7356 & 85.1195 & 91.9120 & 98.1440 & 112.7323 \\
\cline{2-8}
  \multirow{5}{*}{A2} & MAB    & 0.0853 & 0.4371 & 0.1807 & 0.3592 & 0.4864 & 0.6068 \\
                       & Var & 0.0082 & 0.0092 & 0.0052 & 0.0029 & 0.0021 & 0.0013 \\
                       & MSE      & \textbf{0.0587} & 0.3734 & 0.1563 & 0.3030 & 0.3927 & 0.5034 \\
                      & CR       & \textbf{11.9871} & 31.5068 & 26.3002 & 51.5869 & 54.4521 & 72.5658 \\
                       & CL     & \textbf{89.4111} & 108.9197 & 103.7431 & 129.0042 & 131.8705 & 149.9281 \\
\cline{2-8}
  \multirow{5}{*}{A3} & MAB     & 0.0810 & 0.1282 & 0.0839 & 0.1653 & 0.2351 & 0.3761 \\
                      & Var & 0.0166 & 0.0214 & 0.0107 & 0.0047 & 0.0030 & 0.0017 \\
                      & MSE      & \textbf{0.0433} & 0.0577 & 0.0588 & 0.1017 & 0.1427 & 0.2868 \\
                       & CR       & \textbf{7.3705} & 9.5938 & 11.1873 & 20.7544 & 28.9645 & 59.8289 \\
                       & CL     & \textbf{128.1061} & 130.2987 & 131.8753 & 141.5166 & 149.7327 & 180.5282 \\
\bottomrule
\end{tabular}
}
\label{tab:Sim1}
\end{table}

Table \ref{tab:Sim1} presents the mean absolute bias (MAB), mean variance (Var),  mean squared error (MSE), cumulative risk (CR), and cumulative forecast loss (CL) of  VaR forecasts in various settings and the above   approaches. {We report the detailed mean-forecasting results in the Supplementary Material. } The mean absolute bias is computed as $\text{MAB}=\frac{1}{T-t_0+1}\sum_{t=t_0}^T |\frac{1}{n}\sum_{l=1}^n\hat{\theta}_t^{(l)}-\theta_{t}|$, where $\hat{\theta}_t^{(l)}$ is the estimated mean or VaR for the $l$th experiment and $\theta_t$ is the corresponding true parameter at ${t}$. Similarly, the average variance given by $\text{Var}=\frac{1}{T-t_0+1}\sum_{t=t_0}^{T}\frac{1}{n-1}\sum_{l=1}^n(\hat{\theta}_t^{(l)}-\frac{1}{n}\sum_{l=1}^n\hat{\theta}_t^{(l)})^2$  and the mean squared error is $\text{MSE}=\frac{1}{T-t_0+1}\sum_{t=t_0}^{T}\frac{1}{n}\sum_{l=1}^n(\hat{\theta}_t^{(l)}-\theta_t)^2$.  
The cumulative risk over time is defined as the average cumulative excess risk based on the expected loss function $F(\theta)$, given by $$\text{CR}=\frac{1}{n}\sum_{l=1}^n\sum_{t=t_0}^T(F(\hat{\theta}_t^{(l)})-F(\theta_{t})):=\frac{1}{n}\sum_{l=1}^n\sum_{t=t_0}^T\text{CR}_t^{(l)}.$$ 
Particularly,  $\text{CR}_t^{(l)}=(\hat{\theta}_t^{(l)}-\mu_{t})^2$ for the mean forecast, and $$\text{CR}_t^{(l)}=-\alpha \hat{\theta}_t^{(l)}-\E [X_t\id(X_t<\hat{\theta}_t^{(l)})]+\E[X_t\id(X_t<\VaR_t(\alpha))]+\hat{\theta}_t^{(l)}\mathbb{P}(X_t<\hat{\theta}_t^{(l)})$$ for the VaR forecast, where $\text{VaR}_t(\alpha)=\mu_t+\sigma_t\Phi^{-1}(\alpha)$,  with $\Phi^{-1}(\alpha)$ being the $\alpha$-quantile of the standard normal distribution and $\id(\cdot)$ denoting the indicator function.  The average cumulative forecast loss is given by $\text{CL}=\frac{1}{n}\sum_{l=1}^n\sum_{t=t_0}^T\ell(X_t,\hat{\theta}_t^{(l)}).$ These values indicate the overall forecast performance across various approaches.  The following findings are observed from Tables \ref{tab:Sim1}  and   \ref{tab:supp_Sim1} (in the Supplementary Material).

First, across all three settings,  both the mean and VaR forecasts show that the rolling window approach uniformly outperforms the full-window benchmark in terms of MAB, CR, and CL. This is not surprising, as the full window is  optimal only when no structural break is present. Among rolling windows, although using a larger window achieves relatively small variance,  MAB, MSE (for VaR), CR, and CL improve as the rolling-window size decreases from 750 to 250. This illustrates the sensitivity of forecast accuracy to the window size and highlights the importance of adaptive window selection.

Second, for VaR forecasting, BAWS achieves the lowest MSE, CR, and CL across all three discrete-break settings, while its MAB is either the lowest or close to the best-performing benchmark. This indicates the empirical advantages of BAWS in forecasting the tail risk measure VaR. Additional analyses for mean forecasting are included in Section \ref{sec:supp-additional-tables} of the Supplementary Material.
\iffalse
Third, for the mean forecast, BAWS and SAWS consistently produce lower MAB CR, and CL compared to rolling window approaches. 
%BAWS calibrates its threshold via bootstrap, making it data-driven and responsive to the local volatility.  
The advantage of BAWS is more evident in the challenging Setting A3, where the piecewise variance complicates window selection.  In this setting, the bootstrap threshold adjusts to changes in uncertainty and reduces the risk of selecting windows spanning a structural break, whereas the deterministic threshold in SAWS cannot adapt to such changes. In more stable regimes (Settings A1--A2), the two adaptive methods perform comparably.
\fi

Figure~\ref{Fig:A1} showcases the temporal pattern of the estimated mean (left panel) and the estimated VaR (right panel) under Setting A1. 
Plots for Settings A2--A3 are reported in Figures~\ref{Fig:A.4}--\ref{Fig:A.5} of the Supplementary Material. 
In the pre-break regime, all methods produce stable estimates that align closely with the true parameters. After a structural break occurs, the full-window benchmark continues to pool pre-break observations, and fixed rolling windows still mix pre- and post-break data for roughly one window length, thereby delaying adaptation to the new regime. In contrast, the adaptive window selection procedures react more rapidly to the break, with BAWS typically adjusting the selected window more decisively around the change point.

\begin{figure}[t]
    \centering
 \includegraphics[width=\linewidth]{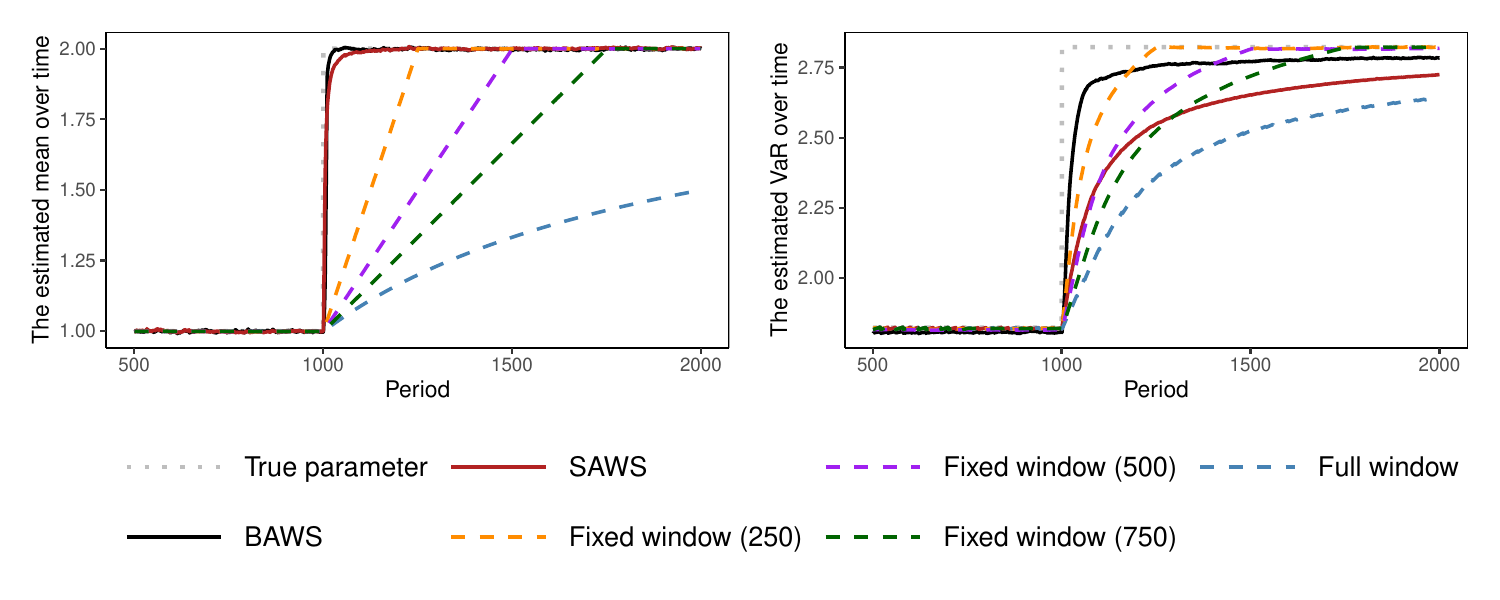}
    \caption{The patterns of mean and VaR estimators over time under Setting A1.}
    \label{Fig:A1}
\end{figure}

\subsection{Scenario 2: Continuous mean shifts}\label{Sec:Sim2}  
We next simulate dynamic mean shifts in the data-generating process, mimicking continuous changes in market trends over time. Although historical and future observations are no longer identically distributed, recent observations remain informative because the distribution changes gradually.

We generate data from $N(\mu_t,\sigma_t^2)$ as in Section \ref{Sec:Sim1}. Parameter settings are as follows:
\begin{itemize}
\item \textbf{Setting B1 (Mean generated by a sine function).}
The mean is \(\mu_t=\sin(2\pi t/T)\), with constant variance 
\(\sigma_t^2=0.25\) for \(t=1,\ldots,T\).
\item \textbf{Setting B2 (Mean generated by a Brownian motion).}
The mean parameter $\mu_t$ changes according to
\(\mu_t-\mu_{t-1}\overset{\text{i.i.d.}}{\sim}N(0,1/T)\), while the variance \(\sigma_t^2=0.25\) for \(t=1,\ldots,T\).
\item \textbf{Setting B3 (Mean generated by a geometric Brownian motion).}
The mean parameter follows
\(
\mu_t=\mu_0 \exp((\mu-\tfrac{1}{2}\sigma^2)\frac{t}{T}
+ \sigma W_t),
\)
where \( W_t-W_{t-1} \overset{\text{i.i.d.}}{\sim} N(0,1/T) \) for \( t=1,\ldots,T \).
We set \( \mu_0=1 \),  \( \mu=0.5 \), \( \sigma^2=\sigma_t^2 = 0.25 \).
\end{itemize}
\begin{table}[!ht]
\caption{The bias, variance, MSE, cumulative risk, and forecast loss for VaR across BAWS, SAWS, fixed windows (250, 500, 750), and full window across Settings B1--B3.}
\scriptsize
\renewcommand{\arraystretch}{0.8}
\centering\resizebox{0.98\textwidth}{!}{
\begin{tabular}{clccccccc}
        \toprule          
          \multirow{2}{*}{VaR} & & \multirow{2}{*}{BAWS} & \multirow{2}{*}{SAWS} &\multicolumn{3}{c}{Fixed Window} & \multirow{2}{*}{Full}\\
  \cline{5-7}
 &&&&250 & 500 & 750 \\
 \midrule
 \multirow{5}{*}{B1} &MAB    &0.1277 
           & 0.8796        & 0.2463             & 0.4801            & 0.6688             & 0.9496           \\
                       &                       Var &0.0137 
           & 0.0046        & 0.0047             & 0.0026            & 0.0020             & 0.0013           \\
                                            & MSE&	\textbf{0.0343} 
 & 	1.0879 &	0.0858& 	0.3287& 	0.6681 &	1.2127           \\
                                             & CR       & \textbf{5.0469} 
           & 52.7609       & 10.1735            & 27.6309           & 41.2991            & 57.8267          \\
                                             & CL   & \textbf{82.4291}           & 130.0980      & 87.5898            & 105.0332          & 118.6737           & 135.1667         \\
                       \cline{2-8}
                        \multirow{5}{*}{B2} & MAB     & 0.1218 
           & 0.2737        & 0.1248             & 0.1668            & 0.1916             & 0.2801           \\
                                             & Var & 0.0077 
           & 0.0025        & 0.0047             & 0.0024            & 0.0017             & 0.0012           \\
                       &                        MSE&	0.0279 
  & 	0.1035& 	\textbf{0.0277}& 	0.0395& 	0.0509& 	0.1075      \\
                       &                       CR       &5.3636 
& 24.9300       & \textbf{4.8871}             & 7.8038            & 10.6236            & 26.0673          \\
                       &                       CL     & 82.7521      & 102.3475      & \textbf{82.3065}            & 85.2086           & 88.0052            & 103.4857         \\
                       \cline{2-8}
                        \multirow{5}{*}{B3} & MAB &0.0836&0.1699&0.0853&0.1230&0.1478&0.1708 
\\

         &Var& 0.0070&0.0012&0.0045&0.0023&0.0016&0.0011 \\
                                             & MSE&	0.0177&0.0446&\textbf{0.0155}& 	0.0229&0.0310&0.0451\\ 

                                             & CR&3.2406&9.3411&\textbf{2.7337} 	&4.1502&5.8178&9.4544 \\
                                            & CL &80.6187&86.7233&\textbf{80.0953} 	&81.5247&83.1921&86.8311\\ 
                       \bottomrule
\end{tabular}
} \label{tab:Sim2}
\end{table}

Table~\ref{tab:Sim2} summarizes the MAB, Var, MSE, CR, and CL for VaR forecasting, while mean-forecasting results are given in Table~\ref{tab:supp_Sim2} of the Supplementary Material. Under Setting B1, BAWS attains the lowest MAB, MSE, CR, and CL for VaR forecasting. Under Setting B2, BAWS remains competitive and achieves performance close to the best fixed-window benchmark. These results indicate that BAWS performs well in environments with cyclic fluctuations or persistent stochastic drift, where the long window is clearly mismatched with the current period and the bootstrap threshold effectively captures such deviation. Under Setting B3, BAWS slightly underperforms the best-performing method in terms of CR and CL, but remains highly comparable.

 %The best-performing methods are SAWS for mean forecasting and the fixed window of 250 for VaR forecasting.
\begin{figure}[!ht]
    \centering
 \includegraphics[width=0.98\linewidth]{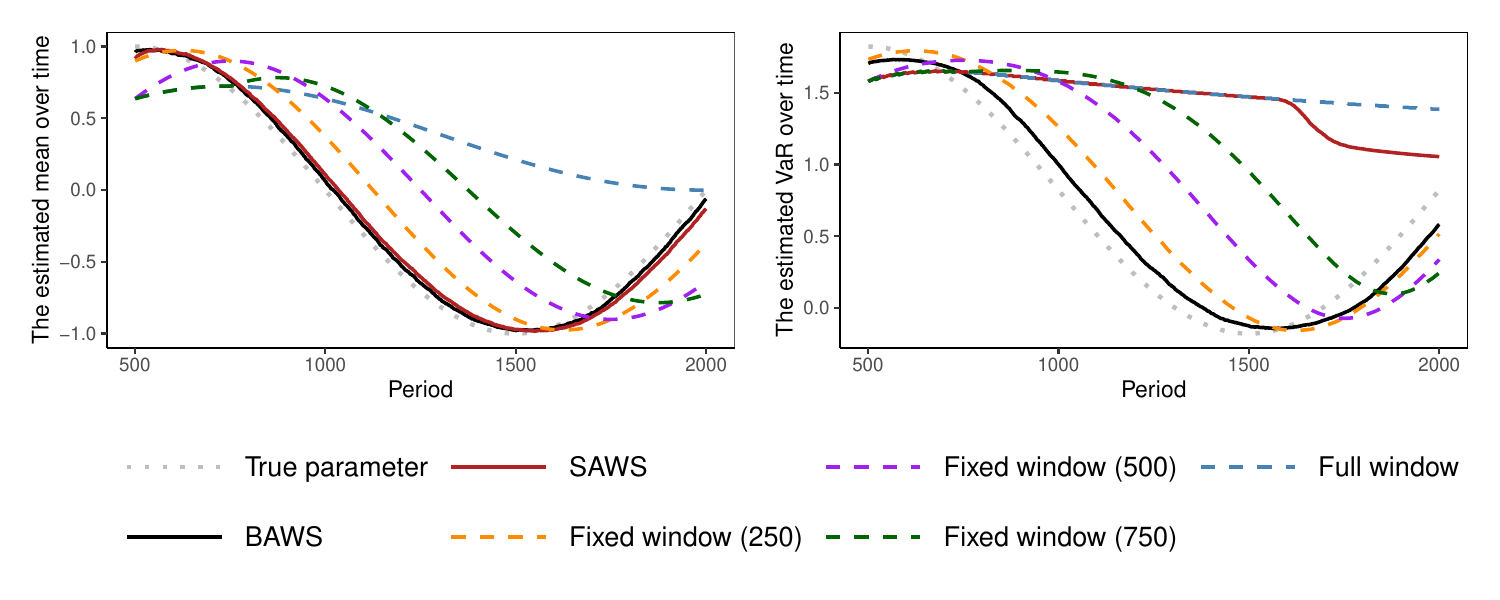}
    \caption{The patterns of mean and VaR estimators over time under Setting B1. }
    \label{Fig:B1}
\end{figure}

Figure~\ref{Fig:B1} illustrates the estimation paths under Setting B1 and is broadly consistent with Table~\ref{tab:Sim2}; the corresponding mean-forecasting summaries are reported in Table~\ref{tab:supp_Sim2} of the Supplementary Material. For VaR forecasting, BAWS tracks the changing target well and performs better than SAWS under the parameter settings of Section~7.1 in \cite{huang2025stability}. Additional trajectory plots for Settings B2--B3 are reported in Figures~\ref{Fig:B2}--\ref{Fig:B3} of the Supplementary Material. Overall, BAWS is more effective under pronounced distributional changes, whereas its gains may diminish under smoother scenarios.

\subsection{Scenario 3: Dynamic volatility shifts}\label{Sec:Sim3}
This study focuses on scenarios where the volatility of the data-generating distribution gradually changes over time.

 Following \cite{hoga2023monitoring} and \cite{wang2025backtesting}, we adopt a skewed-$t$ GARCH(1,1) process:
 \begin{equation*}
     L_t=-\sigma_t\epsilon_t,\quad\sigma_t^2=0.00001+0.04L_{t-1}^2+\gamma_t \sigma_{t-1}^2
 \end{equation*}
and $\{\epsilon_t\}_{t=1}^T$ are i.i.d.~innovations from a skewed Student-$t$ distribution proposed by \cite{fernandez1998bayesian}, with zero mean, unit variance, degrees of freedom $\nu=5$ and skewness parameter $r=0.95$. Let $T=2000$ and $\gamma_t=0.7+0.25\id(t>1000)$, which indicates a structural change of the data-generating process after the midpoint.

\begin{figure}[t]
    \centering
 \includegraphics[width=0.98\linewidth]{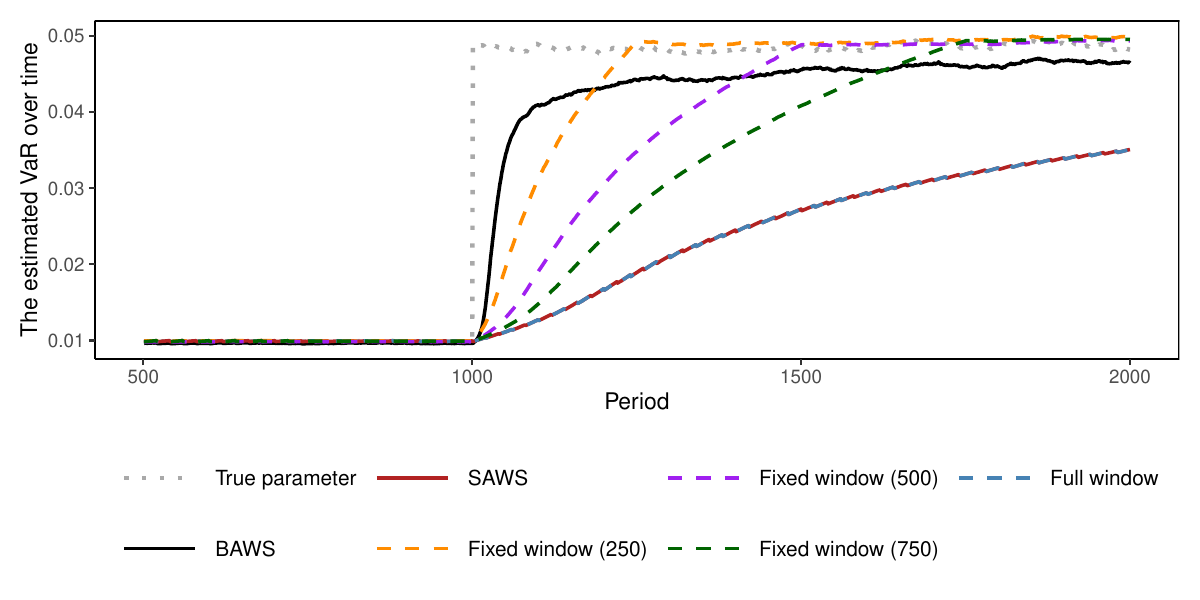}
    \caption{The pattern of VaR forecasts over time under GARCH setting.}
    \label{Fig:C.1}
\end{figure}

\begin{table}[!ht]
\caption{The bias, variance, MSE, cumulative risk, and loss for VaR across BAWS, SAWS, and fixed windows (250, 500, 750), and full window under GARCH.}
\centering
\setlength{\tabcolsep}{4pt}
%\resizebox{0.8\textwidth}{!}{
\begin{tabular}{lcccccc}
\toprule
\multirow{2}{*}{GARCH}& \multirow{2}{*}{BAWS} & \multirow{2}{*}{SAWS} &\multicolumn{3}{c}{Fixed Window} & \multirow{2}{*}{Full}\\
  \cline{4-6}
 &&&250 & 500 & 750 \\
\midrule
MAB     & 0.0034&0.0156&0.0031 &0.0056& 0.0081             & 0.0153  
 \\
Var  & 0.8368 & 0.1771 & 1.5499 & 0.9016&0.5994&	0.1979
 \\
MSE     & \textbf{0.0001}  & 0.0006& 0.0002&0.0003&0.0004&0.0005 
 \\
CR       & \textbf{0.2816} & 1.5481 & 0.4175 & 0.6823&0.9021 &1.5220 
 \\
CL     & \textbf{4.1676} & 5.4367 & 4.2793 & 4.5471&4.7674&5.3874 
 \\
\bottomrule
\end{tabular}
%}
\label{tab:Sim_C}
\vspace{1mm}
\begin{minipage}{0.8\textwidth}
\footnotesize
\textit{Note.}  Variances should be obtained by multiplying the reported numbers by $10^{-4}$.
\end{minipage}
\end{table}
Given that the conditional mean of $L_t$ is zero, we focus on forecasting the  VaR of $L_t$. As mentioned in Section~\ref{sec:boot}, the moving block bootstrap is used to preserve the dependence structure  in time-series data. We therefore implement this bootstrap method with $B=500$  and $l_i=\lceil i^{1/3}\rceil$ . Table~\ref{tab:Sim_C} compares various approaches under this setting. Overall, BAWS achieves the lowest MSE, CR, and CL among all competing approaches, while its MAB is slightly higher than that of the fixed window of 250.  In contrast, fixed-window benchmarks are less responsive to the dynamic change of volatility. The  pre-specified threshold  in SAWS tends to admit overly long windows and leads to a similar result as the full window approach. Figure~\ref{Fig:C.1} further illustrates the dynamic behavior of the forecasts.  BAWS responds quickly to the structural change in volatility dynamics and tracks the true VaR path more closely over time, while the fixed window methods  achieve accurate forecasts only after most pre-break observations have been discarded.

Overall, these results highlight the importance of dynamically adjusting the window size to improve forecast accuracy.  %While the rolling window approachs can accommodate non-stationarity to some extent, they lack the sensitivity to respond promptly to the most recent and substantial  distribution shifts. 
The proposed approach addresses this need by incorporating data characteristics into the threshold. This improvement, however, may slightly compromise statistical efficiency when selecting relatively short windows.

Additional simulation results, including sensitivity analyses with respect to the bootstrap threshold level $\beta$, are reported in Section~\ref{sec:supp-sensitivity} of the Supplementary Material.
\iffalse
Additional mean-forecasting results, including a sensitivity analysis with respect to the bootstrap threshold level $\beta$, are reported in the Supplementary Material. 
The sensitivity results show that BAWS remains stable over a wide range of threshold levels, with larger $\beta$ values leading to longer selected windows.

As an additional robustness check, we examine the sensitivity of BAWS to the bootstrap threshold level $\beta$. Table~\ref{tab:sensitivity-beta-mean} in the Supplementary Material reports results for $\beta\in\{0.80,0.85,0.90,0.95,0.99\}$ under Setting A1 for mean forecasting. The results show that BAWS remains stable across threshold levels: larger $\beta$ values lead to longer selected windows, while the baseline choice $\beta=0.90$ provides a balanced specification.
\fi
\section{Empirical analysis}\label{sec:emp}
In this section, we apply BAWS to a real-world dataset and compare its VaR and ES forecast performance with the SAWS, rolling window, and full window approaches. We analyze the daily losses, defined as the negative log-returns, of the S\&P 500 index from January 4, 2005 to October 30, 2025. As shown in Figure \ref{Fig:real1}, this period involves several significant market fluctuations associated with the 2008 global financial crisis (GFC), COVID-19 pandemic (COVID), and the 2025 U.S. tariff measures (Tariff).

\begin{figure}[!t]
    \centering
    \includegraphics[height=0.4\textheight,width=0.8\linewidth]{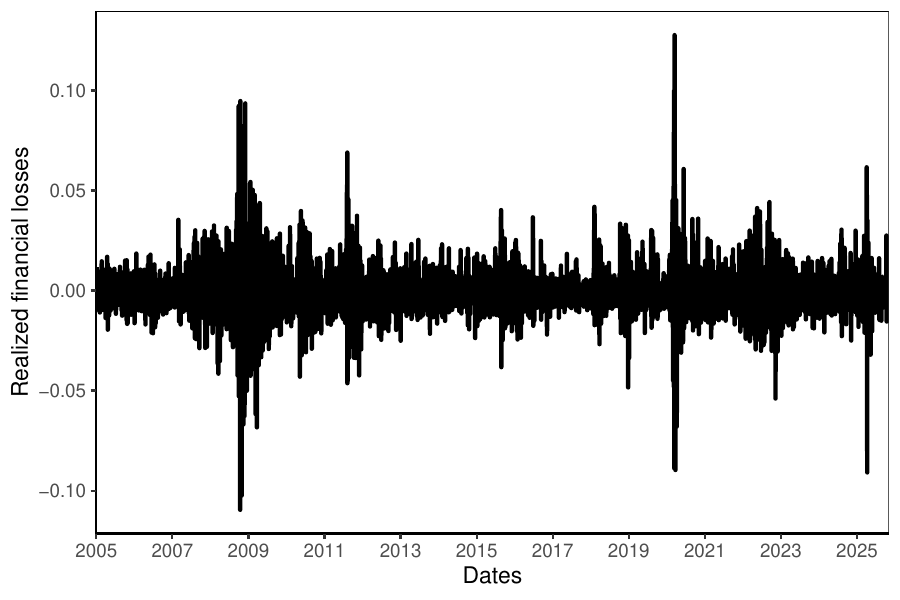}
    \caption{Realized losses of the S\&P 500 index from January 4, 2005 to October 30, 2025.}
    \label{Fig:real1}
\end{figure}

 Using historical data, we perform a rolling estimation of the VaR and ES from December 28, 2006 to October 30, 2025 by employing adaptive windows determined by BAWS and SAWS methods, fixed windows of size $\{250, 500, 750\}$, or full window. We  perform the moving block bootstrap method with $B=1000$  and $l_i=\lceil i^{1/3}\rceil$. Considering the non-stationarity of financial markets, we cap the maximum window of BAWS at 1000 to ensure computational efficiency. We set the threshold level as 0.9. For SAWS, we set $\alpha_\tau=0.1$ and $C_\tau=0.05$, with sensitivity results reported in Supplementary Section~\ref{sec:supp-saws-sensitivity}.

We evaluate VaR and ES forecast performance of various approaches by comparing their average forecast losses over the entire prediction period (2006–2025) as well as across three extreme episodes (the GFC, the COVID-19 crisis, and the 2025 tariff-related episode), as reported in Table \ref{tab:real1}.  For a prediction period \(\mathcal T\), the average forecast loss is given by \(
\sum_{t\in\mathcal T}
S_{V,E,\alpha}(x_t,\widehat{\text{VaR}}_{\alpha}(X_t),\widehat{\text{ES}}_{\alpha}(X_t))/|\mathcal T|,\)
where \(x_t\) is the realized loss and \((\widehat{\text{VaR}}_{\alpha}(X_t),\widehat{\text{ES}}_{\alpha}(X_t))\) is the risk forecast at time \(t\). Over the period,  BAWS attains the lowest average forecast loss over the full evaluation period. The sub-period results further suggest that BAWS remains competitive during the three stress periods, supporting its ability to adapt to changing market conditions.

\begin{figure}[!t]
    \centering
       \includegraphics[height=0.6\textheight, width=0.75\linewidth]{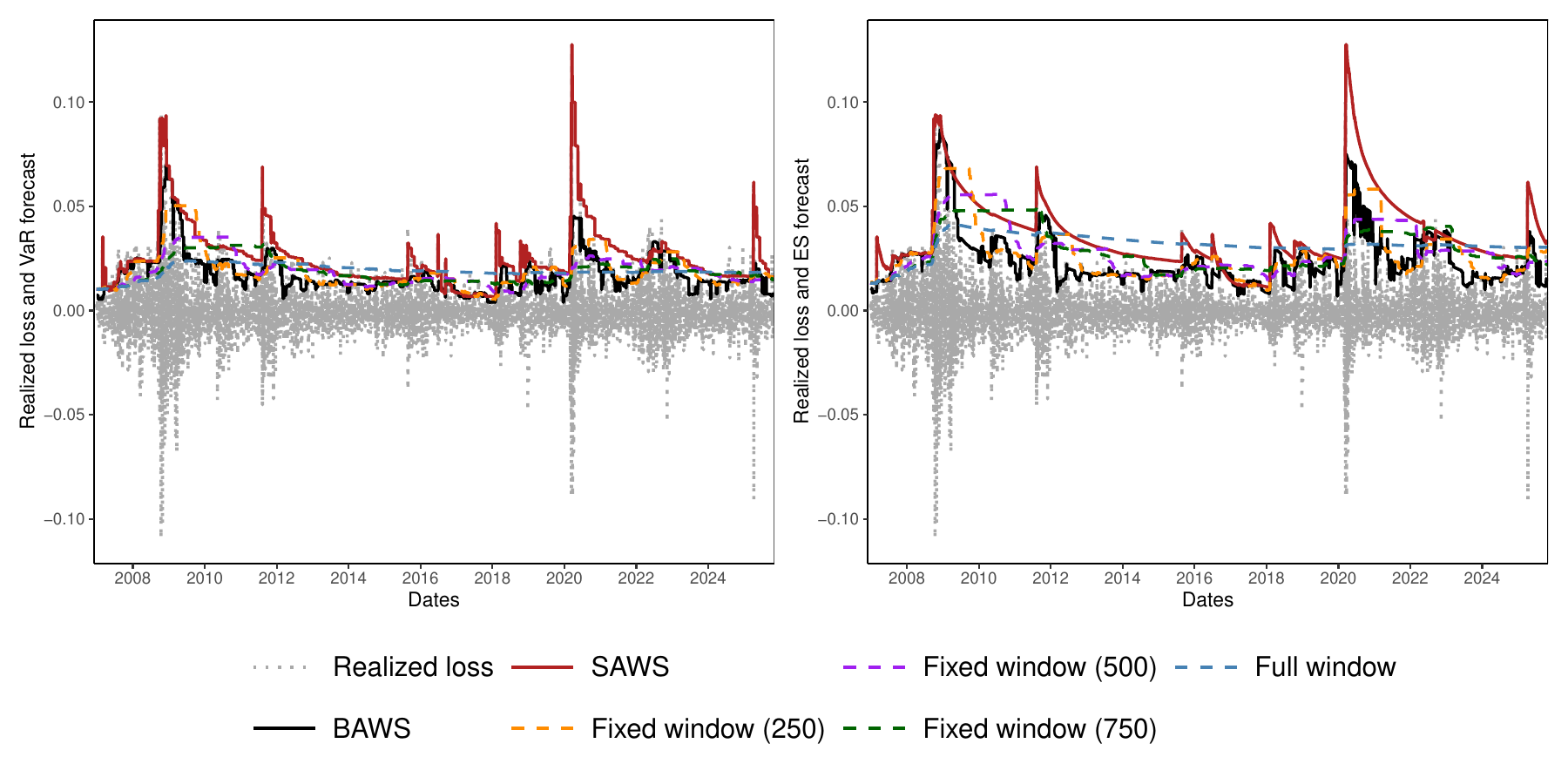}
    \caption{VaR and ES forecasts from December 28, 2006 to October 30, 2025. Left panel: Realized loss and VaR forecast. Right panel: Realized loss and ES forecast.}
    \label{Fig:real2}
\end{figure}

\begin{figure}[!t]
    \centering
\includegraphics[width=0.7\linewidth, height=0.5\textheight]{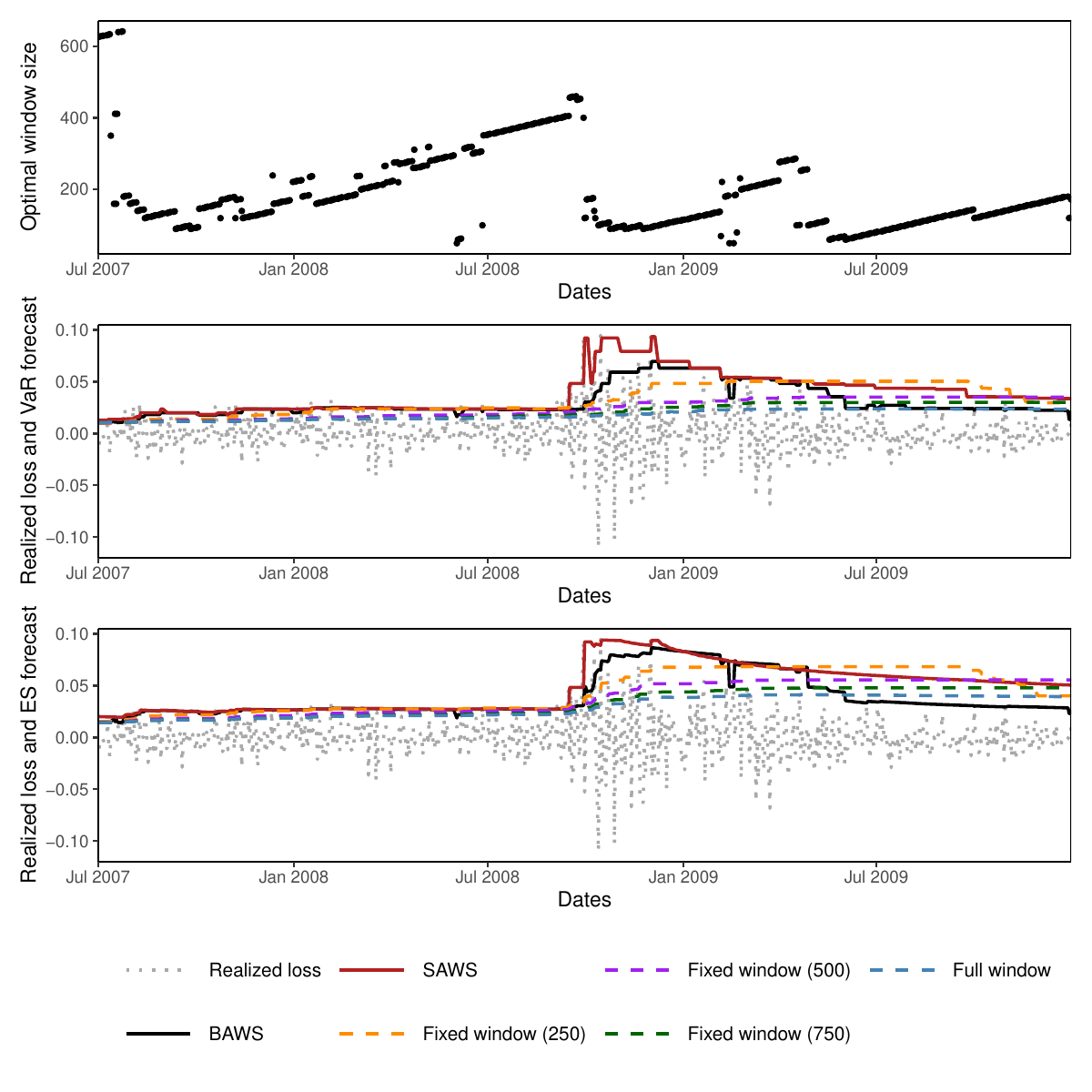} %\includegraphics[height=0.8\linewidth, width=0.8\linewidth]{fig_final/VaR_subplot_2008_block.pdf}
    \caption{Temporal dynamics of selected window sizes and joint VaR--ES forecasts during the 2008
global financial crisis. Top panel: Optimal window. Middle panel: VaR forecasts. Bottom panel: ES forecasts.}
    \label{Fig:real_2008}
\end{figure}
\iffalse
\begin{figure}[!t]
    \centering
        \includegraphics[width=0.7\linewidth, height=0.5\textheight]{fig_final/VaR_subplot_2020_block.pdf}
    \caption{Temporal dynamics of optimal window sizes, VaR, and ES forecasts during COVID-19 pandemic. Top panel: Optimal window. Middle panel: VaR forecasts. Bottom panel: ES forecasts.}
    \label{Fig:real_2020}
\end{figure}
\begin{figure}[!t]
    \centering
        \includegraphics[width=0.7\linewidth, height=0.5\textheight]{fig_final/VaR_subplot_2024_block.pdf}
    \caption{Temporal dynamics of optimal window sizes, VaR, and ES forecasts during the 2025 U.S. tariff measures. Top panel: Optimal window. Middle panel: VaR forecasts. Bottom panel: ES forecasts.}
    \label{Fig:real_2024}
\end{figure}
\fi

Figure \ref{Fig:real2} depicts the temporal evolution of VaR and ES estimates across different methodologies. The left panel presents the dynamics of VaR estimates and the realized loss, while the right panel provides the corresponding ES estimates. These two panels jointly provide a comprehensive view of the forecast performance.  BAWS and SAWS adjust more quickly around stress episodes, whereas longer fixed windows and the full window produce smoother trajectories with more pronounced inertia. In particular, SAWS exhibits sharper spikes in tail-risk forecasts during extreme-loss periods, suggesting an overreaction (overfitting) to transient shocks. 

 Figures \ref{Fig:real_2008} and \ref{Fig:real_2020}--\ref{Fig:real_2024} (the latter in the Supplementary Material) present the optimal window sizes and the corresponding VaR and ES estimates across three extreme events. Figure \ref{Fig:real_2008} covers the period from July 2, 2007, to December 31, 2009, encompassing the 2008 financial crisis. Figure \ref{Fig:real_2020} spans from December 2, 2019, to December 31, 2020, capturing the drastic market volatility during the COVID-19 pandemic.  Figure \ref{Fig:real_2024} corresponds to the period from January 3, 2025 to October 30, 2025, around the implementation of the 2025 tariff policy episode, which began on February 4, 2025, with additional measures introduced in April 2025. When the market undergoes significant changes, the market losses tend to deviate from the historical trend. In such scenarios,  a smaller window is typically preferred to reduce the deviation, as demonstrated in the top panels in Figures \ref{Fig:real_2008}, \ref{Fig:real_2020}, and \ref{Fig:real_2024}. As conditions stabilize, the selected window size tends to expand. Conversely, during periods of market stability, incorporating more data points helps decrease variance in the estimates. The forecast performance depicted in these figures is consistent with the results presented in Table \ref{tab:real1}. The two adaptive window selection approaches outperform most fixed-window and full-window benchmarks overall. 

\begin{table}[t]
\caption{Average forecast loss for VaR--ES joint forecasts across periods for BAWS, SAWS, fixed windows, and full window.}
\begin{threeparttable}
\begin{center}
\resizebox{0.8\textwidth}{!}{
\begin{tabular}{lcccccc}
\toprule
\multirow{2}{*}{Forecast Loss} & \multirow{2}{*}{BAWS} & \multirow{2}{*}{SAWS} &\multicolumn{3}{c}{Fixed Window} & \multirow{2}{*}{Full}\\
  \cline{4-6}
 &&&250 & 500 & 750 \\
\midrule
2006--2025 & \textbf{2.2642} &2.3145 &2.3565&2.4118&2.4174 	&2.4447
 \\
 GFC     & 3.2513&\textbf{3.1852}&3.6631& 3.8319 &4.0414 &	4.1779 
 \\
COVID        & 4.0501&\textbf{3.8752} &4.1750&4.1057&4.2218& 	4.1412
  \\
Tariff      & 2.3124&2.3856&	2.3891& 2.4041&\textbf{2.3021}&2.3292
  \\
\bottomrule
\end{tabular}
}
\vspace{1mm}
\begin{minipage}{0.8\textwidth}
\footnotesize
\textit{Note.} 
The numbers are expressed in percentage (\%).
\end{minipage}
\end{center}
\end{threeparttable}
\label{tab:real1}
\end{table}

\section{Conclusion}\label{sec:con}
This paper develops a bootstrap-based adaptive window selection method for risk forecasting in nonstationary environments. The proposed BAWS procedure is designed for sequentially observed data and applies to both independent and dependent observations. Unlike rolling window approaches, BAWS adaptively updates the historical sample used for forecasting by comparing candidate windows with shorter reference windows using empirical scoring losses. A candidate window is retained only when the loss difference remains below a bootstrap-based threshold, and the selected window is the largest admissible window.

On the theoretical side, we provide an asymptotic justification for the bootstrap threshold and study the selection behavior under squared loss. In a single mean-shift setting, BAWS rejects overly long windows with high probability. 
Simulation studies and an empirical analysis of financial data show that BAWS reduces cumulative forecast loss relative to several benchmark approaches, including fixed-window and full-sample methods, and remains competitive with SAWS. 
%Several directions remain for future research. One is to develop a more comprehensive characterization of multiple pairwise comparisons for overlapping windows. Another is to incorporate uncertainty quantification into the bootstrap threshold to improve out-of-sample performance.

\section*{Data availability statement}

The empirical data are based on historical S\&P 500 Index (\^{}GSPC) prices obtained from Yahoo Finance, available at \url{https://finance.yahoo.com/quote/%5EGSPC/history}. 

\bibliographystyle{apalike}
\bibliography{JASA/reference}

\newpage

\appendix

\counterwithin{equation}{section}
\counterwithin{figure}{section}
\counterwithin{table}{section}
\counterwithin{theorem}{section}   % covers lemma, proposition, etc. via shared counter

% Reformat to use section letter (A, B, ...) as prefix
\renewcommand{\theequation}{S.\arabic{equation}}
\renewcommand{\thefigure}{S.\arabic{figure}}
\renewcommand{\thetable}{S.\arabic{table}}
\renewcommand{\thetheorem}{S.\arabic{theorem}} 
\renewcommand{\theproposition}{S.\arabic{proposition}} 
\renewcommand{\thelemma}{S.\arabic{lemma}} 

  \begin{center}{\Large \bf Supplementary materials for ``Adaptive Window Selection for Financial Risk Forecasting''}
  \end{center}

\section{Additional theoretical results and proofs}\label{sec:app}
\subsection{Proof of Theorem \ref{thm:limit}}\label{app:limit}

    \begin{proof}[Proof of Theorem \ref{thm:limit}]
      Recall that \(\hat{\bm\theta}^{(b)}_n\) denotes the empirical bootstrap estimator, defined as the minimizer of
\[
f_n^{(b)}(\bm\theta)=\frac{1}{n}\sum_{l=1}^n \ell(\bm X_{t-l}^{(b)},\bm\theta),
\]
where \(\bm X_{t-n}^{(b)},\dots,\bm X_{t-1}^{(b)}\) are independent and identically distributed bootstrap resamples drawn from the empirical distribution \(\hat{\mathbb P}_n\), conditional on the observed sample \(\bm X_{t-n},\dots,\bm X_{t-1}\).

Under Conditions (C1) and (C2),  Theorem 5.7 in \cite{van2000asymptotic} gives
\(
\hat{\bm\theta}_n\overset{p}{\to}\bm\theta^*
\) and \(\hat{\bm\theta}^{(b)}_n\overset{p}{\to}\bm\theta^*
\) (the latter conditionally on the observed sample, in probability).

Let
\(
{\bm h}_n=\sqrt n(\hat{\bm\theta}_n-\bm\theta^*).
\)
By Condition~(C4), \({\bm h}_n=O_p(1)\). Hence, for any \(\varepsilon>0\),
there exists \(M<\infty\) such that \(\mathbb P(\|{\bm h}_n\|>M)<\varepsilon\)
for all sufficiently large \(n\). On the compact set \(\{\|\bm h\|\le M\}\),
Condition~(C5) gives
\begin{equation}\label{eq:S_expan1}
n\bigl(f_n(\bm\theta^*+\bm h/\sqrt n)-f_n(\bm\theta^*)\bigr)
=
\bm h^\top\bm Z_n+\tfrac12\bm h^\top\Sigma\bm h+o_p(1),
\end{equation}
uniformly in \(\|\bm h\|\le M\), where
\(
\bm Z_n=n^{-1/2}\sum_{l=1}^n\bm\psi(\bm X_{t-l},\bm\theta^*).
\)
The random quadratic criterion
\(
\bm h\mapsto \bm h^\top\bm Z_n+\tfrac12\bm h^\top\Sigma\bm h
\)
has the unique minimizer \(-\Sigma^{-1}\bm Z_n\). Therefore, applying the argmin
version of the argmax continuous mapping theorem
\citep[Theorem~3.2.2]{vandervaart1996weak} to \eqref{eq:S_expan1} in
Condition~(C5), we obtain
\begin{equation}\label{eq:theta_expan}
\sqrt n(\hat{\bm\theta}_n-\bm\theta^*)
=
-\Sigma^{-1}\bm Z_n+o_p(1).
\end{equation}
By the central limit theorem, $\bm Z_n\overset{d}{\to}  N(0,\Omega)$, so
\[
\sqrt{n}(\hat{\bm\theta}_n-{\bm\theta}^*)
\overset{d}{\to} \bm Z\] where $\bm Z\sim
N(0,\tilde{\Sigma})$ with $\tilde\Sigma=\Sigma^{-1}\Omega\Sigma^{-1}.$

Following Condition (C3), taking a second-order Taylor expansion of the population loss $F$ around \(\bm\theta^*\), 
we obtain
\[
F(\bm\theta)-F(\bm\theta^*)=\tfrac12(\bm\theta-\bm\theta^*)^\top\Sigma(\bm\theta-\bm\theta^*)+o(\|\bm\theta-\bm\theta^*\|^2).
\]
Combining it with $\hat{\bm\theta}_n-\bm\theta^*=O_p(n^{-1/2})$, we further have
\[
n\bigl(F(\hat{\bm\theta}_n)-F(\bm\theta^*)\bigr)=\tfrac12\bigl(\sqrt n(\hat{\bm\theta}_n-\bm\theta^*)\bigr)^\top\Sigma\bigl(\sqrt n(\hat{\bm\theta}_n-\bm\theta^*)\bigr)+o_p(1)\overset{d}{\to}\tfrac12 \bm Z^\top\Sigma \bm Z.
\]

For the bootstrap part, Condition~(C4) similarly gives
\(
\sqrt n\|\hat{\bm\theta}_n^{(b)}-\hat{\bm\theta}_n\|=O_p(1)
\)
conditionally on the observed sample.
 Furthermore, applying the argmin version of the argmax
continuous mapping theorem to \eqref{eq:second_expan2} in Condition (C5)
yields
\begin{equation}\label{eq:theta_b_expan}
\sqrt{n}(\hat{\bm\theta}^{(b)}_n-\hat{\bm\theta}_n)
=
-\Sigma^{-1}\bm Z_n^{(b)}+o_p(1),
\end{equation}
conditionally on the observed sample, in probability.
Recall that
\[
\bar{\bm\psi}_n
=
\frac1n\sum_{l=1}^n
\bm\psi(\bm X_{t-l},\bm\theta^*),
\qquad
\bm Z_n^{(b)}
=
\frac1{\sqrt n}\sum_{l=1}^n
\left(
\bm\psi(\bm X_{t-l}^{(b)},\bm\theta^*)
-
\bar{\bm\psi}_n
\right).
\]
Conditionally on the empirical distribution \(\hat{\mathbb P}_n\), the bootstrap variables
\[
\bm\psi(\bm X_{t-1}^{(b)},\bm\theta^*),\ldots,
\bm\psi(\bm X_{t-n}^{(b)},\bm\theta^*)
\]
are i.i.d.
Therefore,
\begin{align*}
\mathbb E^*\!\left[
\bm\psi(\bm X_{t-l}^{(b)},\bm\theta^*)
\right]
&=
\bar{\bm\psi}_n
\xrightarrow{p}\bm 0,\\
\operatorname{Var}^*\!\left[
\bm\psi(\bm X_{t-l}^{(b)},\bm\theta^*)
\right]
&=
\frac1n\sum_{l=1}^n
\bm\psi(\bm X_{t-l},\bm\theta^*)
\bm\psi(\bm X_{t-l},\bm\theta^*)^\top
-
\bar{\bm\psi}_n\bar{\bm\psi}_n^\top
\xrightarrow{p}\Omega.
\end{align*}
Here, \(\mathbb E^*\) and \(\operatorname{Var}^*\) denote expectation and variance conditional on the observed sample, respectively.
Moreover, for every \(\varepsilon>0\), 
\[
\mathbb E^*\!\left[
\left\|
\bm\psi(\bm X_{t-l}^{(b)},\bm\theta^*)-\bar{\bm\psi}_n
\right\|^2
\mathbf 1\!\left(
\left\|
\bm\psi(\bm X_{t-l}^{(b)},\bm\theta^*)-\bar{\bm\psi}_n
\right\|>\varepsilon\sqrt n
\right)
\right]
\xrightarrow{p}0.
\]
Since \(\bar{\bm\psi}_n\xrightarrow{p}\bm 0\), the above follows from
\[
\frac1n\sum_{l=1}^n
\left\|
\bm\psi(\bm X_{t-l},\bm\theta^*)
\right\|^2
\mathbf 1\!\left(
\left\|
\bm\psi(\bm X_{t-l},\bm\theta^*)
\right\|>\varepsilon\sqrt n/2
\right)
\xrightarrow{p}0,
\]
which is implied by Markov's inequality and the dominated convergence theorem, using
\(
\mathbb E
\left[
\left\|
\bm\psi(\bm X_t,\bm\theta^*)
\right\|^2
\right]<\infty
\)
from Condition~(C5). 
Hence, by the Lindeberg-Feller central limit theorem for triangular arrays, 
see Proposition 2.27 in \cite{van2000asymptotic},  
\[
\mathcal L^*(\bm Z_n^{(b)})\Rightarrow_p N(\bm0,\Omega),
\]
where \(\mathcal L^*(\cdot)\) denotes the conditional law given the observed sample and \(\Rightarrow_p\) denotes weak convergence in probability.
Combining this with
\eqref{eq:theta_b_expan} ensures
\[
\mathcal L^*\!\left(
\sqrt n(\hat{\bm\theta}_n^{(b)}-\hat{\bm\theta}_n)
\right)
\Rightarrow_p
N(0,\tilde\Sigma).
\]
 
 Let $\bm r_n=\sqrt{n}(\hat{\bm\theta}^{(b)}_n-\hat{\bm\theta}_n)$. Since \(\bm h_n=O_p(1)\) and \(\bm r_n=O_p(1)\) conditionally on the observed sample,
for any \(\varepsilon>0\) we may restrict the argument to an event on which
\(\|\bm h_n\|\), \(\|\bm r_n\|\), and \(\|\bm h_n+\bm r_n\|\) are bounded by a fixed constant
with probability at least \(1-\varepsilon\). Applying Condition~(C5) at
\(\bm h=\bm h_n+\bm r_n\) and \(\bm h=\bm h_n\), and subtracting the two expansions, gives 
\[
n\bigl(f_n(\hat{\bm\theta}_n+\bm r_n/\sqrt n)-f_n(\hat{\bm\theta}_n)\bigr)  =
\bm r_n^\top(\bm Z_n+\Sigma{\bm h}_n)
+\frac12 \bm r_n^\top\Sigma\bm r_n
+o_p(1),
\]
uniformly for bounded \(\bm r_n\). Using \eqref{eq:theta_expan},
we have \(\bm Z_n+\Sigma{\bm h}_n=o_p(1)\), and hence 
\[
n\bigl(f_n(\hat{\bm\theta}_n^{(b)})-f_n(\hat{\bm\theta}_n)\bigr)
=
\tfrac12
\bigl(\sqrt n(\hat{\bm\theta}_n^{(b)}-\hat{\bm\theta}_n)\bigr)^\top
\Sigma
\bigl(\sqrt n(\hat{\bm\theta}_n^{(b)}-\hat{\bm\theta}_n)\bigr)
+o_p(1),
\]
conditionally on the observed sample, in probability.

By the continuous mapping theorem,
\[
\mathcal L^*\left(
n\left(f_n(\hat{\bm\theta}^{(b)}_n)-f_n(\hat{\bm\theta}_n)\right)
\right)
\Rightarrow_p
\mathcal L\left(
\tfrac12 \bm Z^\top\Sigma \bm Z
\right).
\]
 Thus, $n(f_n(\hat{\bm\theta}^{(b)}_n)-f_n(\hat{\bm\theta}_n))$ and $n(F(\hat{\bm\theta}_n)-F({\bm\theta}^*))$ have the same limiting distribution in the above sense.
\end{proof}

\subsection{Verification for non-smooth scoring functions}\label{app:verify}
This subsection verifies Conditions (C1)--(C5) for the two non-smooth scoring functions used in the risk-forecasting application in Section~\ref{sec:risk_fore}: the quantile check function for VaR and the Fissler-Ziegel (FZ)  score for the joint $(\VaR_\alpha,\ES_\alpha)$. We focus on the case $G(x)=x$ in \eqref{eq:score_VaR} and on $G_1(x)=x$, $G_2(x)=-e^{-x}/(1+e^{-x})$ in \eqref{eq:score_joint}, but the arguments below can extend to other choices of scoring functions satisfying the conditions in \citet{fissler2016higher}.

\subsubsection{The check function for VaR}
Let $\ell(x,v) = (\one\{x<v\}-\alpha)(v-x)$, and let $F_X$ denote the c.d.f.\ of $X_t$. Assume
\begin{itemize}
\item[(V1)] \(F_X\) has a density \(f_X\) that is continuous in a neighborhood of
\(v^*=\VaR_\alpha(X_t)\), with \(f_X(v^*)>0\);
\item[(V2)] \(\E[|X_t|]<\infty\);
\item[(V3)] The parameter space \(\Theta_V\subseteq\mathbb R\) is compact and contains
\(v^*=\VaR_\alpha(X_t)\) as an interior point.
\end{itemize}
The domain in the VaR equation \eqref{eq:VaR} is localized to \(\Theta_V\) for the theoretical verification. Since the global minimizer is unique and lies in \(\operatorname{int}(\Theta_V)\), this restriction does not change the target.
\paragraph*{Conditions (C1) and (C2).}
Condition (C1) follows from (V1) and (V3). Uniform consistency over the compact
\(\Theta_V\) follows from the Glivenko-Cantelli property of
\(\{\ell(\cdot,v):v\in\Theta_V\}\), since this class is generated by indicator
functions and linear functions with an integrable envelope under (V2).

\paragraph*{Condition (C3).} Direct computation gives that the population loss satisfies
\[
F(v) = \alpha\int_v^\infty(x-v)\,\d F_X(x) + (1-\alpha)\int_{-\infty}^v(v-x)\,\d F_X(x),
\]
so $F'(v) = F_X(v) - \alpha$ and $F''(v)=f_X(v)$. Hence $F'(v^*)=0$, $\Sigma=F''(v^*)=f_X(v^*)>0$.

\paragraph*{Condition (C4).}
The Bahadur representation for the empirical quantile gives
\[
\sqrt n(\hat v_n-v^*)
=
-\frac{1}{f_X(v^*)}
\frac1{\sqrt n}\sum_{l=1}^n
\left(\one(X_{t-l}<v^*)-\alpha\right)
+o_p(1).
\]
Therefore \(
\sqrt n|\hat v_n-v^*|=O_p(1)
\) follows from the central limit theorem. 
The bootstrap analogue,
\[
\sqrt n|\hat v_n^{(b)}-\hat v_n|=O_p(1),
\]
conditionally on the observed sample, in probability, follows from the bootstrap version
of the quantile Bahadur expansion, or more generally from the standard bootstrap
theory for sample quantiles \citep{bickel1981asymptotic}.

\paragraph*{Condition (C5).}
Take
\(
\psi(x,v^*)=\one\{x<v^*\}-\alpha\)
and \(
Z_n=\frac1{\sqrt n}\sum_{l=1}^n\psi(X_{t-l},v^*),
\)
so that \(\E[\psi(X_t,v^*)]=0\) and
\(
\Omega=\Var(\one\{X_t<v^*\})=\alpha(1-\alpha).
\)
By Knight's equality \citep{knight1998limiting}, for \(u=h/\sqrt n\),
\begin{equation}\label{eq:knight}
\ell(x,v^*+u)-\ell(x,v^*)
=
u(\one(x<v^*)-\alpha)
+
\int_0^u
(\one(x<v^*+s)-\one(x<v^*))\,ds.
\end{equation}
Therefore, uniformly for \(|h|\le M\),
\[
n(f_n(v^*+h/\sqrt n)-f_n(v^*))
=
hZ_n
+
n\int_0^{h/\sqrt n}
(F_X(v^*+s)-F_X(v^*))\,ds
+
o_p(1).
\]
Since \(f_X\) is continuous at \(v^*\),
\[
n\int_0^{h/\sqrt n}
(F_X(v^*+s)-F_X(v^*))\,ds
=
\frac12 f_X(v^*)h^2+o(1)
\]
uniformly for \(|h|\le M\). Hence
\[
\sup_{|h|\le M}
\left|
n(f_n(v^*+h/\sqrt n)-f_n(v^*))
-
hZ_n
-
\frac12 f_X(v^*)h^2
\right|
\overset{p}{\to}0.
\]
This verifies the first expansion in Condition (C5). The bootstrap counterpart follows from the same Knight-identity argument applied
conditionally on the observed sample. In this argument, the population distribution is
replaced by the empirical distribution, and the linear term is replaced by the centered
bootstrap score \(Z_n^{(b)}\).

\subsubsection{The Fissler-Ziegel score for (VaR, ES)}
Let \(\bm\theta=(v,e)^\top\), and let
\(\ell(x,v,e)=S_{V,E,\alpha}(x,v,e)\) be the Fissler-Ziegel score in
\eqref{eq:score_joint}, with \(G_1(x)=x\) and
\(
G_2(x)=-\exp(-x)/(1+\exp(-x)).
\)
Let \(\bm\theta^*=(v^*,e^*)^\top\), where
\(v^*=\VaR_\alpha(X_t)\) and \(e^*=\ES_\alpha(X_t)\). Assume
\begin{itemize}
\item[(E1)] The distribution function \(F_X\) has a density \(f_X\) that is continuous
in a neighborhood of \(v^*\), with \(f_X(v^*)>0\);
\item[(E2)] \(\E[X_t^2]<\infty\).
\item[(E3)] The parameter space \(\Theta_{VE}\subseteq\mathbb R^2\) is compact and
\(\bm\theta^*=(v^*,e^*)^\top\in\operatorname{int}(\Theta_{VE})\).
\end{itemize}

\paragraph*{Conditions (C1) and (C2).}
The strict consistency of the Fissler-Ziegel score implies that \(\bm\theta^*\) is the
unique minimizer of the population loss. Together with (E3), this verifies
Condition~(C1). Uniform consistency
over compact \(\Theta_{VE}\) follows from the Glivenko-Cantelli property of the corresponding
score class. This class consists of smooth terms in \((v,e)\) and threshold indicator terms
of the form \(\one\{x<v\}\), and is dominated by an integrable envelope under (E2).

\paragraph*{Condition (C3).}
\iffalse
Under (E1)--(E3), the population loss is twice continuously differentiable in a
neighborhood of \(\bm\theta^*\). The first-order condition satisfies
\(
\nabla F(\bm\theta^*)=\bm0.
\)
Moreover, standard calculations for the Fissler-Ziegel score give
\[
\Sigma=\nabla^2 F(\bm\theta^*)
=
G_2'(e^*)
\begin{pmatrix}
f_X(v^*)/(1-\alpha) & 0\\
0 & 1
\end{pmatrix},
\]
under the present loss convention. Since
\[
G_2'(x)=\frac{e^{-x}}{(1+e^{-x})^2}>0
\]
and \(f_X(v^*)>0\), \(\Sigma\) is positive definite. This verifies Condition~(C3).
\fi
Let \(A(v)=\mathbb E\left[(v-X_t)1\{X_t\ge v\}\right].\)
Since $F_X$ is continuous at $v^*$, we have $F_X(v^*)=\alpha$. Also, under the upper-tail loss convention,
\[
A(v^*)=(1-\alpha)(v^*-e^*).
\]
A direct calculation gives
\[
\frac{\partial F(v,e)}{\partial v}=
(F_X(v)-\alpha)
\left(
1-\frac{G_2(e)}{1-\alpha}
\right),
\]
and
\[
\frac{\partial F(v,e)}{\partial e}
=
G_2'(e)
\left(
e-v+\frac{A(v)}{1-\alpha}
\right).
\]
Therefore $\nabla F(\theta^*)=0$. Furthermore,
\[
\Sigma = \nabla^2 F(\theta^*)
=
\begin{pmatrix}
f_X(v^*)\left(1-\dfrac{G_2(e^*)}{1-\alpha}\right) & 0\\[1.5ex]
0 & G_2'(e^*)
\end{pmatrix}.
\]
For the choice
\[
G_2(x)=-\frac{\exp(-x)}{1+\exp(-x)},
\]
we have $G_2(e^*)<0$ and $G_2'(e^*)>0$. Since $f_X(v^*)>0$, the matrix $\Sigma$ is positive definite. Hence Condition (C3) holds.
\paragraph*{Condition (C4).}
For the joint (VaR, ES) estimator based on the Fissler-Ziegel score, the standard
asymptotic theory gives the linear expansion
\[
\sqrt n
\left\{
(\hat v_n,\hat e_n)^\top-(v^*,e^*)^\top
\right\}
=
-\Sigma^{-1}
\frac1{\sqrt n}\sum_{l=1}^n
\bm\psi(X_{t-l},\bm\theta^*)
+o_p(1),
\]
where \(\bm\psi\) is the generalized score defined below. Since
\(\E[\bm\psi(X_t,\bm\theta^*)\bm\psi(X_t,\bm\theta^*)^\top]<\infty\), the central limit
theorem implies that the leading term is \(O_p(1)\). Hence
\[
\sqrt n\|(\hat v_n,\hat e_n)^\top-(v^*,e^*)^\top\|=O_p(1).
\]

The empirical bootstrap estimator satisfies the analogous conditional linear expansion
\[
\sqrt n
\left\{
(\hat v_n^{(b)},\hat e_n^{(b)})^\top-(\hat v_n,\hat e_n)^\top
\right\}
=
-\Sigma^{-1}
\bm Z_n^{(b)}
+o_p(1),
\]
conditionally on the observed sample, in probability, where \(\bm Z_n^{(b)}\) is the
centered bootstrap score defined in Condition~(C5). Consequently,
\[
\sqrt n\|(\hat v_n^{(b)},\hat e_n^{(b)})^\top-(\hat v_n,\hat e_n)^\top\|=O_p(1).
\]
 This verifies Condition~(C4). 

\paragraph*{Condition (C5).}
\iffalse
The generalized score can be taken as the Fissler-Ziegel identification function
evaluated at \(\bm\theta^*\):
\[
\bm\psi(x,\bm\theta^*)
=
\begin{pmatrix}
\one\{x<v^*\}-\alpha \\[2pt]
G_2'(e^*)
\left(
e^*-v^*
-\frac{1}{1-\alpha}(v^*-x)\one\{x\ge v^*\}
\right)
\end{pmatrix}.
\]
Under (E1)--(E2),
\[
\E[\bm\psi(X_t,\bm\theta^*)]=\bm0,
\qquad
\E[\bm\psi(X_t,\bm\theta^*)\bm\psi(X_t,\bm\theta^*)^\top]<\infty.
\]

To verify the local quadratic expansion, decompose the Fissler-Ziegel score into a
threshold component involving \(\one\{x<v\}\) and a component that is smooth in
\((v,e)\). The threshold component is handled by the same Knight-identity argument as
for the check loss, while the smooth component is handled by a Taylor expansion around
\(\bm\theta^*\). Combining the two expansions gives, for every fixed \(M>0\),
\[
\sup_{\|\bm h\|\le M}
\left|
n\{f_n(\bm\theta^*+\bm h/\sqrt n)-f_n(\bm\theta^*)\}
-\bm h^\top\bm Z_n
-\frac12\bm h^\top\Sigma\bm h
\right|
\overset{p}{\to}0.
\]
The bootstrap counterpart follows from the same decomposition applied conditionally on
the observed sample, with the empirical distribution replacing the population distribution
and \(\bm Z_n\) replaced by the centered bootstrap score \(\bm Z_n^{(b)}\). This verifies
Condition~(C5).
\fi
Define
\[
c^* = 1-\frac{G_2(e^*)}{1-\alpha}.
\]
The generalized score appearing in the local quadratic expansion is
\[
\bm\psi(x,\bm\theta^*)
=
\begin{pmatrix}
c^*(1\{x<v^*\}-\alpha)\\[1.5ex]
G_2'(e^*)
\left[ e^*-v^* +
\frac{1}{1-\alpha}(v^*-x)1\{x\ge v^*\}
\right]
\end{pmatrix}.
\]
Then
\[
\mathbb E[\bm\psi(X_t,\bm\theta^*)]=0.
\]
Indeed, $\mathbb E[1\{X_t<v^*\}-\alpha]=0$, and
\[
\mathbb E\left[
e^*-v^* + \frac{1}{1-\alpha}(v^*-X_t)1\{X_t\ge v^*\}
\right] = e^*-v^*+v^*-e^* = 0.
\]
Moreover, by (E2),
\[
\Omega =
\mathbb E[\bm\psi(X_t,\bm\theta^*)\bm\psi(X_t,\bm\theta^*)^\top]
< \infty.
\]
Let
\[
\bm Z_n =
\frac{1}{\sqrt n}\sum_{l=1}^n \bm\psi(X_{t-l},\bm\theta^*).
\]
The threshold part of the Fissler-Ziegel score is handled by the same
Knight-identity argument applied to
\((\one\{x<v\}-\alpha)(v-x)\), as shown in \eqref{eq:knight}, and by the analogous
identity for \(g(v):=\one\{x\ge v\}(v-x)\),
\[
g(v^*+u)-g(v^*)
=
u\one\{x\ge v^*\}
+
\int_0^u
\left(
\one\{x\ge v^*+s\}
-
\one\{x\ge v^*\}
\right)\,ds .
\]
The remaining terms are smooth in $(v,e)$ and are handled by Taylor expansion around $\bm\theta^*$. Since $f_X$ is continuous at $v^*$, the deterministic second-order term is uniformly approximated by the Hessian $\Sigma$ above. Hence, for every fixed $M>0$,
\[
\sup_{\|\bm h\|\le M}
\left|
n\left(
f_n\left(\bm\theta^*+\frac{\bm h}{\sqrt n}\right)-f_n(\bm\theta^*)
\right) -\bm h^\top \bm Z_n
-\frac{1}{2}\bm h^\top\Sigma \bm h
\right|
\overset{p}{\to} 0.
\]
For the bootstrap expansion, define
\[
\bar{\bm\psi}_n
=
\frac{1}{n}\sum_{j=1}^n\bm\psi(X_{t-j},\bm\theta^*),
\qquad
\bm Z_n^{(b)} =
\frac{1}{\sqrt n}\sum_{l=1}^n
\left(
\bm\psi(X_{t-l}^{(b)},\bm\theta^*)-\bar{\bm \psi}_n
\right).
\]
The conditional version of the preceding Knight-identity and Taylor-expansion
argument, with the empirical distribution replacing the population distribution, gives 
\[
\sup_{\|\bm h\|\le M}
\left|
n\left(
f_n^{(b)}\left(\hat{\bm\theta}_n+\frac{\bm h}{\sqrt n}\right) - f_n^{(b)}(\hat{\bm\theta}_n)
\right)
-\bm h^\top \bm Z_n^{(b)}
-\frac{1}{2}\bm h^\top\Sigma \bm h \right|
\overset{p}{\to} 0
\]
conditionally on the observed sample, in probability. This verifies Condition (C5).

\subsection{Bootstrap validity for dependent data}\label{sec:mbb}
Section~\ref{subsec:iid_bootstrap} treated the case of i.i.d.\ observations and the empirical bootstrap. For the GARCH scenario in Section~\ref{Sec:Sim3} and the empirical analysis in Section~\ref{sec:emp}, BAWS uses the moving block bootstrap (MBB) of \citet{kunsch1989jackknife} with block length $l_n=c\lceil n^{1/3}\rceil$. This subsection extends the asymptotic justification of the bootstrap threshold to the setting of dependent and stationary data.

We assume that the observations within the window $\{t-n,\dots,t-1\}$ are stationary and weakly dependent. Concretely, we replace Conditions (C2), (C4), and (C5) by their dependent-data analogues, while retaining Conditions (C1) and (C3).
\begin{itemize}
\item {\it (M1) Stationarity and mixing.}
The process \(\{\bm X_l\}\) is strictly stationary and \(\beta\)-mixing, with mixing
coefficients
\[
\beta(j)
=
\sup_{t\ge1}
\E\left[
\sup_{B\in\mathcal F_{t+j}^{\infty}}
\left|
\mathbb P(B\mid \mathcal F_{1}^{t})-\mathbb P(B)
\right|
\right],
\qquad
\mathcal F_a^b=\sigma(\bm X_l:a\le l\le b),
\]
satisfying
\(
\sum_{j\ge1}j^{2}\beta(j)^{\delta/(2+\delta)}<\infty
\)
for some \(\delta>0\), and
\(
\E[\|\bm\psi(\bm X_t,\bm\theta^*)\|^{2+\delta}]<\infty,
\)
where \(\bm\psi\) is the generalized score defined in Condition~(C5).
\item {\it (M2) Block length.} The block length $l_n$ used in the MBB satisfies $l_n\to\infty$ and $l_n/n^{1/2}\to 0$ as $n\to\infty$.
\end{itemize}
The choice $l_n=c\lceil n^{1/3}\rceil$ used in this paper satisfies Condition~(M2). Condition~(M1) is satisfied by many GARCH processes under standard regularity and
moment conditions; see \citet{carrasco2002mixing}. 
\begin{theorem}[Bootstrap validity under mixing]\label{thm:mbb}
Suppose Conditions~(C1) and (C3) hold, and the analogue of Conditions~(C4) and (C5) hold under the moving block bootstrap, with  the long-run variance
\[
\Omega_{\mathrm{LR}}
:=
\sum_{j\in\mathbb Z}
\E\!\left[
\bm\psi(\bm X_t,\bm\theta^*)
\bm\psi(\bm X_{t+j},\bm\theta^*)^\top
\right]
\]
where
the right-hand side does not depend on \(t\) by stationarity.
If, in addition, Conditions~(M1)--(M2) hold, then
\[
\mathcal L^*\!\left(
\sqrt n(\hat{\bm\theta}_n^{(b)}-\hat{\bm\theta}_n)
\right)
\Rightarrow_p
N(\bm0,\tilde\Sigma_{\mathrm{LR}}),
\]
where $\tilde\Sigma_{\mathrm{LR}}=\Sigma^{-1}\Omega_{\mathrm{LR}}\Sigma^{-1}$, and
\[
\mathcal L^*\!\left(n\bigl(f_n(\hat{\bm\theta}_n^{(b)})-f_n(\hat{\bm\theta}_n)\bigr)\right)\Rightarrow_p\mathcal L\!\left(\frac{1}{2} Z_{\mathrm{LR}}^\top \Sigma Z_{\mathrm{LR}}\right),
\]
with $Z_{\mathrm{LR}}\sim N(0,\tilde\Sigma_{\mathrm{LR}})$.
\end{theorem}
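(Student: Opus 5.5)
The argument follows the proof of Theorem~\ref{thm:limit}, with the i.i.d.\ central limit theorem for the bootstrap score replaced by a moving block bootstrap central limit theorem. The strategy is to reduce both conclusions to the conditional weak convergence of the centered MBB score
\[
\bm Z_n^{(b)}=\frac{1}{\sqrt n}\sum_{l=1}^{n}\bigl(\bm\psi(\bm X^{(b)}_{t-l},\bm\theta^*)-\E^*[\bm\psi(\bm X^{(b)}_{t-l},\bm\theta^*)]\bigr).
\]
Here the centering is the MBB mean, a weighted average of the $\bm\psi(\bm X_{t-j},\bm\theta^*)$ with edge effects of order $l_n/n$, and $n$ is interpreted as $m_nl_n$. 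The assumed MBB analogues of (C4) and (C5) then carry the argument exactly as in the i.i.d.\ case.

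\textbf{Step 1 (estimator expansion).} Use the MBB analogue of (C4) to localize, so that $\bm r_n=\sqrt n(\hat{\bm\theta}_n^{(b)}-\hat{\bm\theta}_n)$ is conditionally $O_p(1)$. Then apply the argmin continuous mapping theorem \citep[Theorem~3.2.2]{vandervaart1996weak} to the MBB analogue of \eqref{eq:second_expan2}. This gives $\bm r_n=-\Sigma^{-1}\bm Z_n^{(b)}+o_p(1)$ conditionally, in probability. The same argument applied to \eqref{eq:second_expan1} gives $\sqrt n(\hat{\bm\theta}_n-\bm\theta^*)=-\Sigma^{-1}\bm Z_n+o_p(1)$, where $\bm Z_n$ is now a partial sum of a stationary mixing sequence.

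\textbf{Step 2 (MBB CLT, the main obstacle).} The goal is $\mathcal L^*(\bm Z_n^{(b)})\Rightarrow_p N(\bm 0,\Omega_{\mathrm{LR}})$. By the Cram\'er--Wold device it suffices to treat $\lambda^\top\bm\psi$ for fixed $\lambda$, which reduces the problem to the scalar case. Conditionally on the data, $\bm Z_n^{(b)}$ is a normalized sum of $m_n$ i.i.d.\ block sums $S^*_j=\sum_{s\in\text{block }j}\bm\psi$, each drawn uniformly from the $n-l_n+1$ overlapping blocks. The proof then has two parts.
\begin{itemize}
\item[(a)] \emph{Variance.} The conditional variance $\Var^*(\bm Z_n^{(b)})=l_n^{-1}\cdot(\text{empirical variance of the overlapping block sums})$ is the Bartlett-kernel long-run variance estimator with bandwidth $l_n$. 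It converges in probability to $\Omega_{\mathrm{LR}}$. This uses $l_n\to\infty$, $l_n/n\to0$ from (M2), together with covariance summability under (M1) via Davydov's inequality with exponent $\delta/(2+\delta)$. The $j^2$-weighted mixing condition controls the fourth-order cumulant terms in the variance of the estimator.
\item[(b)] \emph{Lindeberg.} One needs $m_n^{-1}\sum_j \E^*[(S^*_j)^2\one\{|S^*_j|>\varepsilon\sqrt{n}\}]/l_n\to_p0$. I would bound this by a $(2+\delta)$-moment Lyapunov bound, using the moment inequality $\E|S_{l_n}|^{2+\delta}\lesssim l_n^{1+\delta/2}$ for mixing sums. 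The Lyapunov ratio is then of order $(l_n/n)^{\delta/2}\to0$.
\end{itemize}
This is essentially \citet{kunsch1989jackknife}'s result; the delicate part is verifying (a) and (b) at the stated mixing rate. The Lindeberg--Feller theorem (Proposition 2.27 in \cite{van2000asymptotic}) then yields the conditional CLT. Combining with Step 1 gives $\mathcal L^*(\bm r_n)\Rightarrow_p N(\bm 0,\tilde\Sigma_{\mathrm{LR}})$.

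\textbf{Step 3 (excess loss).} Next, show that $\bm Z_n+\Sigma\bm h_n=o_p(1)$, where $\bm h_n=\sqrt n(\hat{\bm\theta}_n-\bm\theta^*)$; this is the conclusion of Step 1 and does not need Gaussianity. Apply \eqref{eq:second_expan1} at $\bm h_n+\bm r_n$ and at $\bm h_n$ and subtract, exactly as in the proof of Theorem~\ref{thm:limit}. This gives
\[
n\bigl(f_n(\hat{\bm\theta}^{(b)}_n)-f_n(\hat{\bm\theta}_n)\bigr)=\tfrac12\bm r_n^\top\Sigma\bm r_n+o_p(1)
\]
conditionally, in probability. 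The continuous mapping theorem then delivers the stated limit $\frac12 Z_{\mathrm{LR}}^\top\Sigma Z_{\mathrm{LR}}$.
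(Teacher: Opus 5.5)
Your proposal is correct and follows essentially the same route as the paper: argmin expansions for $\hat{\bm\theta}_n$ and $\hat{\bm\theta}_n^{(b)}$, then the MBB central limit theorem for the centered score (which the paper cites from \citet{kunsch1989jackknife} and \citet{buhlmann1995blockwise} rather than proving), then subtracting the quadratic expansion of $f_n$ at $\bm h_n+\bm r_n$ and $\bm h_n$ as in the proof of Theorem~\ref{thm:limit}. If you write out Step 2 yourself, note that (M1) gives only $2+\delta$ moments of $\bm\psi$, so the variance of the Bartlett-type estimator in (a) may be infinite; replace the ``fourth-order cumulant'' argument by a truncation/uniform-integrability argument for $S_{l_n}^2/l_n$, which also yields Lindeberg in (b) directly, since the bound $\E|S_{l_n}|^{2+\delta}\lesssim l_n^{1+\delta/2}$ would need slightly more than $2+\delta$ moments.
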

\begin{proof}[Proof]
Under Condition~(M1), the central limit theorem for stationary mixing sequences \citep{doukhan1994mixing} gives
\[
\bm Z_n
=
n^{-1/2}\sum_{l=1}^n
\bm\psi(\bm X_{t-l},\bm\theta^*)
\Rightarrow
N(\bm0,\Omega_{\mathrm{LR}}).
\]
Together with the dependent-data analogue of Conditions~(C4) and (C5), and the positive definiteness of \(\Sigma\) from
Condition~(C3), the argmin version of the argmax continuous mapping
theorem \citep[Theorem~3.2.2]{vandervaart1996weak} yields
\[
\sqrt n(\hat{\bm\theta}_n-\bm\theta^*)
=
-\Sigma^{-1}\bm Z_n+o_p(1).
\]
Under Conditions~(M1)--(M2), the moving block bootstrap consistently approximates the
distribution of the generalized-score sum, see, e.g., \citet{kunsch1989jackknife} and \citet{buhlmann1995blockwise}. Hence,
\(
\mathcal L^*(\bm Z_n^{(b)})\Rightarrow_p N(\bm0,\Omega_{\mathrm{LR}}),
\)
so the corresponding conditional expansion of $\sqrt n(\hat{\bm\theta}_n^{(b)}-\hat{\bm\theta}_n)$ gives
\[
\mathcal L^*\!\left(
\sqrt n(\hat{\bm\theta}_n^{(b)}-\hat{\bm\theta}_n)
\right)
\Rightarrow_p
N(\bm0,\tilde\Sigma_{\mathrm{LR}}).
\]

Finally, applying the quadratic expansion of \(f_n\) in
Condition~(C5) at \(h=\sqrt n(\hat{\bm\theta}_n^{(b)}-\hat{\bm\theta}_n)\) and \(\bm h=\sqrt n(\hat{\bm\theta}_n-\bm\theta^*)\), and subtracting these two equations,  yields
\[
\mathcal L^*\!\left(
n\{f_n(\hat{\bm\theta}_n^{(b)})-f_n(\hat{\bm\theta}_n)\}
\right)
\Rightarrow_p
\mathcal L\!\left(
\frac12\bm Z_{\mathrm{LR}}^\top\Sigma\bm Z_{\mathrm{LR}}
\right),
\]
as claimed.
\end{proof}
Theorem~\ref{thm:mbb} provides the asymptotic basis for the MBB bootstrap threshold
under stationarity and weak dependence. In particular, by resampling consecutive blocks, the moving block bootstrap retains the
local dependence structure in the data and hence in the generalized score process.
\subsection{Proofs of Theorem \ref{thm:reject_large_window} and Corollary \ref{cor:selected_window}}
\begin{proof}[Proof of Theorem \ref{thm:reject_large_window}]
Let $\tau(t,k_0)$ be the threshold used in the stability test.
Recall that $\hat k_t$ is the largest admissible window. Then,
\[
\{T_k=0\}=
\bigcap_{i<k}\left\{ f_{t,i}(\hat\mu_{t,k})-f_{t,i}(\hat\mu_{t,i})\le \tau(t,i)\right\}.
\]
In particular, taking $i=k_0$ yields
\(
\{T_k=0\}\subseteq
\left\{ f_{t,k_0}(\hat\mu_{t,k})-f_{t,k_0}(\hat\mu_{t,k_0})\le \tau(t,k_0)\right\}.
\)
Equivalently,
\(
\left\{ 
f_{t,k_0}(\hat\mu_{t,k})
-
f_{t,k_0}(\hat\mu_{t,k_0})
> \tau(t,k_0)
\right\}
\subseteq
\{T_k=1\},
\) 
hence
\begin{equation}
\label{eq:lower_bound_reject}
\mathbb P(T_k=1\mid H_1)\ \ge\
\mathbb P\!\left(
f_{t,k_0}(\hat\mu_{t,k})-f_{t,k_0}(\hat\mu_{t,k_0})>\tau(t,k_0)\ \Big|\ H_1
\right).
\end{equation}
Under squared loss, for any $\mu$ we have the identity
\(f_{t,k_0}(\mu)-f_{t,k_0}(\bar X_{t,k_0})=(\mu-\bar X_{t,k_0})^2,
\)
therefore
\begin{equation}
\label{eq:diff_equals_square}
f_{t,k_0}(\hat\mu_{t,k})-f_{t,k_0}(\hat\mu_{t,k_0})
=\big(\bar X_{t,k}-\bar X_{t,k_0}\big)^2.
\end{equation}

Now decompose the $k$-window mean under $H_1$ into the pre- and post-break parts:
\[
\bar X_{t,k}
=\frac{k-k_0}{k}\bar X_1
+\frac{k_0}{k}\bar X_{2},
\qquad
\bar X_{t,k_0}=\bar X_{2}
\]
with $\bar X_1=\frac{1}{k-k_0}\sum_{l=t-k}^{t-k_0-1}X_l$ and $\bar X_{2}=\frac{1}{k_0}\sum_{l=t-k_0}^{t-1}X_l$.
Hence,
\[
\bar X_{t,k}-\bar X_{t,k_0}
=\frac{k-k_0}{k}(\bar X_{1}-\bar X_{2}).
\]
By the weak law of large numbers, as $t\to\infty$, $k\to\infty$, and $k_0\to\infty$,
\[
\bar X_{1}\xrightarrow{p}\mu_1,\qquad \bar X_{2}\xrightarrow{p}\mu_2,
\]
and since $\frac{k-k_0}{k}\to c$, we obtain
\[
\big(\bar X_{t,k}-\bar X_{t,k_0}\big)^2\xrightarrow{p}c^2(\mu_1-\mu_2)^2\qquad\text{under}~H_1.
\]
Recall that
$\mathbb P(\tau(t,k_0)\le c^2(\mu_1-\mu_2)^2-\epsilon\mid H_1)\to 1$. For the given $\epsilon$, we have 
\[\mathbb P\left(\big(\bar X_{t,k}-\bar X_{t,k_0}\big)^2>c^2(\mu_1-\mu_2)^2-\frac{\epsilon}2\mid H_1\right)\to 1.\]
Therefore,
\begin{small}
\begin{equation*}
    \mathbb P\left(\left\{\big(\bar X_{t,k}-\bar X_{t,k_0}\big)^2>c^2(\mu_1-\mu_2)^2-\epsilon\right\}\cap\left\{\tau(t,k_0)\le c^2(\mu_1-\mu_2)^2-\epsilon\right\}\mid H_1\right)\to 1.
\end{equation*}
\end{small}
It follows that
\[
\mathbb P\!\left(\big(\bar X_{t,k}-\bar X_{t,k_0}\big)^2>\tau(t,k_0)\ \Big|\ H_1\right)\to 1.
\]
Using  \eqref{eq:lower_bound_reject} and \eqref{eq:diff_equals_square}, we conclude that
$\mathbb P(T_k=1\mid H_1)\to 1$ and $\mathbb P(T_k=0\mid H_1)\to 0$.
\end{proof}
\iffalse
\begin{proof}[Proof of Corollary~\ref{cor:selected_window}]
Since \(k_0/k\to 1-c\), for any \(l\in\mathcal K_t\) with \(l\ge k\), we have
\[
\frac{l-k_0}{l}
=
1-\frac{k_0}{l}
\ge
1-\frac{k_0}{k}
=
\frac{k-k_0}{k}
\to c
\]
as \(t\to\infty\). Thus, every candidate window \(l\ge k\) contains a non-vanishing proportion of pre-break observations.

According to \eqref{con:reject_large_window}, similar to the proof of Theorem \ref{thm:reject_large_window}, we have
\[
\begin{aligned}
&\mathbb P\left(
\bigcap_{l\in\mathcal K_t:\,l\ge k}\{T_l=1\}
\,\middle|\,H_1
\right) \\
&\quad\ge
\mathbb P\left(
\inf_{l\in\mathcal K_t:\,l\ge k}
\left\{
f_{t,k_0}(\hat\mu_{t,l})
-
f_{t,k_0}(\hat\mu_{t,k_0})
\right\}
>
\tau(t,k_0)
\,\middle|\,H_1
\right)
\to 1.
\end{aligned}
\]
Since \(\hat k_t\) is the largest admissible window,
\[
\{\hat k_t<k\}
=
\bigcap_{l\in\mathcal K_t:\,l\ge k}\{T_l=1\}.
\]
Therefore,
\[
\mathbb P(\hat k_t<k\mid H_1)\to 1.
\]
\end{proof}
\fi
\begin{proof}[Proof of Corollary~\ref{cor:selected_window}]
For any \(l\in\mathcal K_t^+\), similar to Theorem \ref{thm:reject_large_window}, 
\[
\bar X_{t,l}
=
\frac{l-k_0}{l}\bar X_{1,l}
+
\frac{k_0}{l}\bar X_2,
\qquad
\bar X_{t,k_0}=\bar X_2,
\]
where
\(\bar X_{1,l}
=
\frac{1}{l-k_0}\sum_{r=t-l}^{t-k_0-1}X_r\)
and \(\bar X_2
=
\frac{1}{k_0}\sum_{r=t-k_0}^{t-1}X_r .
\)
Hence,
\[
f_{t,k_0}(\hat\mu_{t,l})-f_{t,k_0}(\hat\mu_{t,k_0})=(\bar X_{t,l}-\bar X_{t,k_0})^2
=
\frac{(l-k_0)^2}{l^2}(\bar X_{1,l}-\bar X_2)^2.
\]
For \(l\ge k\),
\[
\frac{l-k_0}{l}
=
1-\frac{k_0}{l}
\ge
1-\frac{k_0}{k}
=
\frac{k-k_0}{k}
\to c>0.
\]
We next show that the sample means converge uniformly over \(l\in\mathcal K_t^+\).
By Chebyshev's inequality,
\[
\mathbb P\left(
|\bar X_{1,l}-\mu_1|>\eta
\,\middle|\,H_1
\right)
\le
\frac{\operatorname{Var}_{\mathbb P_1}(X)}{\eta^2(l-k_0)},
\]
and using the union bound and \eqref{con:sparse},
\[\mathbb P\left(\sup_{l\in\mathcal K_t^+}
|\bar X_{1,l}-\mu_1|>\eta
\,\middle|\,H_1
\right)\le
\frac{r_t\operatorname{Var}_{\mathbb P_1}(X)}{\eta^2(k-k_0)}
\to0.\]
Therefore under $H_1$,
\(
\sup_{l\in\mathcal K_t^+}
|\bar X_{1,l}-\mu_1|
=o_p(1).
\)
Moreover, $\bar X_2=\mu_2+o_p(1)$, and by the triangle inequality,
\[
|\bar X_{1,l}-\bar X_2|
\ge
|\mu_1-\mu_2|
-
|\bar X_{1,l}-\mu_1|
-
|\bar X_2-\mu_2|.
\]
Since \((k-k_0)/k\to c\) and \(c>0\), the above uniform convergence implies that, for given \(\epsilon\) in \eqref{con:limit_tau},
\[
\mathbb P\left(
\inf_{l\in\mathcal K_t^+}
(\bar X_{t,l}-\bar X_{t,k_0})^2
>
c^2(\mu_1-\mu_2)^2-\frac{\epsilon}2
\,\middle|\,H_1
\right)
\to1.
\]
Combining it with \eqref{con:limit_tau} gives
\[
\mathbb P\left(
\inf_{l\in\mathcal K_t^+}
\left\{
f_{t,k_0}(\hat\mu_{t,l})
-
f_{t,k_0}(\hat\mu_{t,k_0})
\right\}
>
\tau(t,k_0)
\,\middle|\,H_1
\right)
\to1.
\]
Thus, with probability tending to one, every candidate window \(l\in\mathcal K_t^+\) is rejected by the pairwise comparison with the reference window \(k_0\), leading to the event \(\bigcap_{l\in\mathcal K_t:\,l\ge k}\{T_l=1\}\).

Since \(\hat k_t\) is the largest admissible window,
\(
\{\hat k_t<k\}
=
\bigcap_{l\in\mathcal K_t:\,l\ge k}\{T_l=1\}.
\)
Therefore,
\[
\mathbb P(\hat k_t<k\mid H_1)\to1,
\quad\text{or equivalently,}\quad
\mathbb P(\hat k_t\ge k\mid H_1)\to0.
\]
This completes the proof. 
\end{proof}
\section{Complementary analysis for error control}

The analysis in Section~\ref{subsec:iid_bootstrap} justifies the bootstrap threshold from an asymptotic \(M\)-estimation perspective. Complementing this argument, this appendix uses the closeness framework of~\cite{huang2025stability} to provide an  error-control interpretation for the  stability test. %In particular, we show that false rejection can occur only when the relevant closeness event or threshold event fails. 

For fixed \(t\) and two window lengths \(i<k\), the main object is the comparison statistic
\[
S_{t,i,k} := f_{t,i}(\hat{\boldsymbol{\theta}}_{t,k}) -
f_{t,i}(\hat{\boldsymbol{\theta}}_{t,i}), \qquad \hat{\boldsymbol{\theta}}_{t,m} \in \arg\min_{\boldsymbol{\theta}\in\Theta} f_{t,m}(\boldsymbol{\theta}), \quad m\in\{i,k\}.
\]
We interpret the comparison as a pairwise stability test under the local null hypothesis
\[H_0^{t,k}:~\text{no distributional shift occurs within a window}~ k.\]
Under \(H_0^{t,k}\), the window $k$ should not lead to a large excess loss on the reference window $i$.
For a window length $m$, define %the window-level population loss by
\[
F_{t,m}(\boldsymbol{\theta})
:=
\frac{1}{m}\sum_{l=t-m}^{t-1}F_l(\boldsymbol{\theta}),
\qquad
F_l(\boldsymbol{\theta})
:=
\mathbb{E}\!\left[\ell(\boldsymbol{X}_l,\boldsymbol{\theta})\right].
\]
Here $F_{t,m}$ represents the population loss averaged over the window
$\{t-m,\ldots,t-1\}$, while $F_t$ denotes the current target population loss.
%We use $\psi(t,m)$ to denote the stochastic error between the empirical loss $f_{t,m}$ and its population counterpart $F_{t,m}$, and use $\bar{\delta}(t,m)$ to denote the local nonstationarity error between $F_{t,m}$ and $F_t$.

Following~\cite{huang2025stability}, we use the following closeness relation between loss functions. For two lower-bounded functions $f,g:\Theta\to\mathbb{R}$, we say that $f$ and $g$ are
$(\epsilon,\delta)$-close if, for all
$\boldsymbol{\theta}\in\Theta$,
\[
g(\boldsymbol{\theta})-\inf_{\boldsymbol{\vartheta}\in\Theta}
g(\boldsymbol{\vartheta})
\le
e^{\epsilon}
\left\{
f(\boldsymbol{\theta})-\inf_{\boldsymbol{\vartheta}\in\Theta}
f(\boldsymbol{\vartheta})
+\delta
\right\},
\]
and
\[
f(\boldsymbol{\theta})-\inf_{\boldsymbol{\vartheta}\in\Theta}
f(\boldsymbol{\vartheta})
\le
e^{\epsilon}
\left\{
g(\boldsymbol{\theta})-\inf_{\boldsymbol{\vartheta}\in\Theta}
g(\boldsymbol{\vartheta})
+\delta
\right\}.
\]

We first recall a property of the closeness relation established by \cite{huang2025stability}.

\begin{lemma}
\label{lem:closeness-composition}
Suppose that $f$ and $g$ are $(\epsilon_1,\delta_1)$-close, and that
$g$ and $h$ are $(\epsilon_2,\delta_2)$-close. Then $f$ and $h$ are
$(\epsilon_1+\epsilon_2,\delta_1+\delta_2)$-close.
\end{lemma}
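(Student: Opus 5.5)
The plan is to argue directly from the definition of $(\epsilon,\delta)$-closeness by chaining the two one-sided inequalities. Throughout I take $\epsilon_1,\epsilon_2,\delta_1,\delta_2\ge 0$. The definition in \cite{huang2025stability} is stated for nonnegative parameters, and this is the only place where any care is needed. Since $f,g,h$ are lower bounded, the infima are finite. So I can define the excess functions
\[
\tilde f(\boldsymbol{\theta})=f(\boldsymbol{\theta})-\inf_{\boldsymbol{\vartheta}\in\Theta}f(\boldsymbol{\vartheta}),\qquad
\tilde g(\boldsymbol{\theta})=g(\boldsymbol{\theta})-\inf_{\boldsymbol{\vartheta}\in\Theta}g(\boldsymbol{\vartheta}),\qquad
\tilde h(\boldsymbol{\theta})=h(\boldsymbol{\theta})-\inf_{\boldsymbol{\vartheta}\in\Theta}h(\boldsymbol{\vartheta}),
\]
all of which are nonnegative and real valued. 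In this notation, closeness of $f$ and $g$ reads $\tilde g\le e^{\epsilon_1}(\tilde f+\delta_1)$ and $\tilde f\le e^{\epsilon_1}(\tilde g+\delta_1)$ pointwise on $\Theta$. The analogous pair of inequalities with $(\epsilon_2,\delta_2)$ holds for $g$ and $h$.

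First direction (bounding $\tilde h$ by $\tilde f$). Fix $\boldsymbol{\theta}\in\Theta$. Apply the $g$--$h$ bound and then the $f$--$g$ bound, using that $x\mapsto e^{\epsilon_2}(x+\delta_2)$ is increasing:
\[
\tilde h(\boldsymbol{\theta})\le e^{\epsilon_2}\bigl(\tilde g(\boldsymbol{\theta})+\delta_2\bigr)\le e^{\epsilon_1+\epsilon_2}\bigl(\tilde f(\boldsymbol{\theta})+\delta_1\bigr)+e^{\epsilon_2}\delta_2 .
\]
Since $\epsilon_1\ge0$ and $\delta_2\ge0$, we have $e^{\epsilon_2}\delta_2\le e^{\epsilon_1+\epsilon_2}\delta_2$. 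Hence
\[
\tilde h(\boldsymbol{\theta})\le e^{\epsilon_1+\epsilon_2}\bigl(\tilde f(\boldsymbol{\theta})+\delta_1+\delta_2\bigr).
\]

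Second direction (bounding $\tilde f$ by $\tilde h$). This is symmetric. We have $\tilde f\le e^{\epsilon_1}(\tilde g+\delta_1)$ and $\tilde g\le e^{\epsilon_2}(\tilde h+\delta_2)$, which give
\[
\tilde f\le e^{\epsilon_1+\epsilon_2}(\tilde h+\delta_2)+e^{\epsilon_1}\delta_1 .
\]
Now $\epsilon_2\ge0$ gives $e^{\epsilon_1}\delta_1\le e^{\epsilon_1+\epsilon_2}\delta_1$, so $\tilde f\le e^{\epsilon_1+\epsilon_2}(\tilde h+\delta_1+\delta_2)$.

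The two displayed inequalities are exactly the definition of $(\epsilon_1+\epsilon_2,\delta_1+\delta_2)$-closeness of $f$ and $h$, which proves the lemma. There is no real obstacle. The one point to state explicitly is the nonnegativity of the $\epsilon$'s and $\delta$'s, which is needed to absorb the leftover additive terms $e^{\epsilon_2}\delta_2$ and $e^{\epsilon_1}\delta_1$ into the common factor $e^{\epsilon_1+\epsilon_2}$.
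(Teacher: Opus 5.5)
Your chaining argument is correct and is essentially the direct-from-definition proof the paper has in mind (it simply defers to \cite{huang2025stability}). Your explicit use of $\epsilon_1,\epsilon_2\ge 0$ to absorb the leftover terms $e^{\epsilon_2}\delta_2$ and $e^{\epsilon_1}\delta_1$ is a worthwhile addition: the paper's restated definition omits this requirement, and the conclusion can actually fail for negative $\epsilon$, whereas $\delta\ge 0$ is forced automatically by the definition.
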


\begin{proof}
The result directly follows the definition; see \cite{huang2025stability} for details.
\end{proof}
We define the closeness event
\[
E_{t,k}:=
\bigcap_{\substack{i\in\mathcal K_t\\ i\le k}}
\left\{
f_{t,i}\text{ and }F_{t,i}\text{ are }
(\epsilon,\psi(t,i))\text{-close}
\right\}.
\]
and the threshold event
\[
A_{t,k}:=
\bigcap_{\substack{i\in\mathcal K_t\\ i<k}}
\left\{
\tau(t,i)\ge 2e^{2\epsilon}\psi(t,i)
\right\}.
\]
\iffalse
\[
E_{t}:=
\left\{
f_{t,m}\text{ and }F_{t,m}\text{ are }
(\varepsilon,\psi(t,m))\text{-close}~\text{for}~ m\in\{i, k\}
\right\}
\]
and the threshold event
\[A_{t}:=\{\tau(t,i)\ge 2e^{2\epsilon}\psi(t,i)\}.\]
\fi
\iffalse
Define
$B_{t,i,k}
:=
e^{4\epsilon}
\left\{\psi(t,k)+\bar{\delta}(t,k)\right\}
+
e^{2\epsilon}
\left\{\psi(t,i)+\bar{\delta}(t,i)\right\}$
\fi 

The following proposition gives a deterministic error-control
bound on the stability test. 
\begin{proposition}
Assume that the closeness error level satisfies \(\psi(t,k)\le \psi(t,i)\) for all \(i\in\mathcal K_t\) with \(i<k\) under \(H_0^{t,k}\). Then, $\bigcup_{i\in\mathcal K_t,~i<k}\{S_{t,i,k}>\tau(t,i)\}
\subseteq E_{t,k}^c\cup A_{t,k}^c.$
Consequently,
\[
\mathbb{P}
\left(
T_k=1\mid H_0^{t,k}
\right)
\le
\mathbb{P}
\left(
E_{t,k}^c\mid H_0^{t,k}
\right)
+\mathbb{P}
\left(
A_{t,k}^c\mid H_0^{t,k}
\right).
\]
\label{prop:2}
\end{proposition}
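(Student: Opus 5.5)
The plan is to prove the set inclusion deterministically by its contrapositive: on the event $E_{t,k}\cap A_{t,k}$, and under $H_0^{t,k}$, I will show that $S_{t,i,k}\le\tau(t,i)$ for every $i\in\mathcal K_t$ with $i<k$. The probability bound then follows at once. As in the proof of Proposition~\ref{prop:bonfe}, $\{T_k=1\}=\bigcup_{i\in\mathcal K_t,\,i<k}\{S_{t,i,k}>\tau(t,i)\}$, so the inclusion gives $\{T_k=1\}\subseteq E_{t,k}^c\cup A_{t,k}^c$, and a union bound under $\mathbb P(\cdot\mid H_0^{t,k})$ finishes the argument.

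\emph{Step 1: identify the window-level population losses.} Under $H_0^{t,k}$, no distributional shift occurs in $\{t-k,\dots,t-1\}$. I read this as saying that the marginal laws $\mathbb P_l$ coincide for $l$ in that range, so $F_l(\bm\theta)$ does not depend on $l$ there. Since each window-level loss is an average of these identical functions, this gives $F_{t,i}=F_{t,k}=:F$ for all $i<k$. This is the only place the null hypothesis enters.

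\emph{Step 2: transfer closeness between the two empirical losses.} On $E_{t,k}$, $f_{t,i}$ and $F$ are $(\epsilon,\psi(t,i))$-close, and $F$ and $f_{t,k}$ are $(\epsilon,\psi(t,k))$-close; the closeness relation is symmetric by definition. Lemma~\ref{lem:closeness-composition} then shows that $f_{t,i}$ and $f_{t,k}$ are $(2\epsilon,\psi(t,i)+\psi(t,k))$-close. I evaluate the inequality bounding $f_{t,i}-\inf f_{t,i}$ at $\bm\theta=\hat{\bm\theta}_{t,k}$. Because $\hat{\bm\theta}_{t,i}$ and $\hat{\bm\theta}_{t,k}$ attain the respective infima, this yields
\[
S_{t,i,k}=f_{t,i}(\hat{\bm\theta}_{t,k})-\inf_{\Theta} f_{t,i}
\le e^{2\epsilon}\bigl\{f_{t,k}(\hat{\bm\theta}_{t,k})-\inf_{\Theta} f_{t,k}+\psi(t,i)+\psi(t,k)\bigr\}
=e^{2\epsilon}\{\psi(t,i)+\psi(t,k)\}.
\]

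\emph{Step 3: compare with the threshold.} The assumption $\psi(t,k)\le\psi(t,i)$ gives $S_{t,i,k}\le 2e^{2\epsilon}\psi(t,i)$. On $A_{t,k}$, this is at most $\tau(t,i)$, which establishes the inclusion for every $i<k$.

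The only delicate point I expect is Step 1, namely making precise that $H_0^{t,k}$ forces $F_{t,i}=F_{t,k}$. If ``no shift'' were meant only approximately, the conclusion would pick up an extra nonstationarity term, to be absorbed via one more application of Lemma~\ref{lem:closeness-composition}. Under the exact-homogeneity reading, the remaining steps are direct algebra with the closeness definition.
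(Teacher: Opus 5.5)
Your proposal is correct and follows the paper's proof essentially step for step. On $E_{t,k}$ under $H_0^{t,k}$, you compose the closeness of $f_{t,i}$ and $f_{t,k}$ through the common window-level population loss via Lemma~\ref{lem:closeness-composition}, evaluate at $\hat{\bm\theta}_{t,k}$ to get $S_{t,i,k}\le e^{2\epsilon}\{\psi(t,i)+\psi(t,k)\}\le 2e^{2\epsilon}\psi(t,i)\le\tau(t,i)$ on $A_{t,k}$, and finish with the union bound. The one cosmetic difference is that you identify $F_{t,i}=F_{t,k}$ where the paper writes $F_{t,i}=F_t$; your version is slightly more careful, since time $t$ lies outside the window $\{t-k,\dots,t-1\}$.
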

\begin{proof}%[Proof of Proposition~\ref{prop:error-control}]
For each \(i\le k\) with \(i\in\mathcal K_t\), on  \(E_{t,k}\),
\(f_{t,i}\) and \(F_{t,i}\) are \((\epsilon,\psi(t,i))\)-close. 
Under \(H_0^{t,k}\), no distributional shift occurs within the candidate window, and hence
\(F_{t,i}=F_t\) for all \(i\le k\). 
By Lemma~\ref{lem:closeness-composition}, under \(H_0^{t,k}\), \(f_{t,i}\) and
\(f_{t,k}\) are \((2\epsilon,\psi(t,i)+\psi(t,k))\)-close on \(E_{t,k}\). 

Using
\(
f_{t,k}(\hat{\boldsymbol{\theta}}_{t,k})
=
\inf_{\boldsymbol{\vartheta}\in\Theta} f_{t,k}(\boldsymbol{\vartheta})
\) and \(
f_{t,i}(\hat{\boldsymbol{\theta}}_{t,i})
=
\inf_{\boldsymbol{\vartheta}\in\Theta} f_{t,i}(\boldsymbol{\vartheta})
\), we have
\[
f_{t,i}(\hat{\boldsymbol{\theta}}_{t,k})
-
f_{t,i}(\hat{\boldsymbol{\theta}}_{t,i})
\le
e^{2\epsilon}
\left\{
\psi(t,i)+\psi(t,k)
\right\}
\]
on $E_{t,k}$ under $H_0^{t,k}$.
By the assumption \(\psi(t,k)\le \psi(t,i)\) for \(i<k\), we further have \[
f_{t,i}(\hat{\boldsymbol{\theta}}_{t,k})
-
f_{t,i}(\hat{\boldsymbol{\theta}}_{t,i})
\le
2e^{2\epsilon}
\psi(t,i).
\]
Therefore, on $E_{t,k}\cap A_{t,k}$,
\(
S_{t,i,k}
\le \tau(t,i)
\) for $i<k$ and $i\in\mathcal K_t$.
Consequently, under \(H_0^{t,k}\),
\[
\bigcup_{\substack{i\in\mathcal K_t\\ i<k}}\{S_{t,i,k}>\tau(t,i)\}
\subseteq
E_{t,k}^c\cup A_{t,k}^c.
\]
Since \(T_k=1\) if and only if at least one pairwise comparison rejects, the probability bound follows from the union event above.
\end{proof}
The closeness event \(E_{t,k}\) controls the empirical-population approximation error, while the threshold event \(A_{t,k}\) accounts for the randomness of the bootstrap thresholds. 
Proposition~\ref{prop:2} shows that, under \(H_0^{t,k}\), a false rejection of the stable candidate window \(k\) can occur only if  either the closeness  fails or the bootstrap threshold is too small. 
Thus, when the empirical losses are close to their population counterparts and the bootstrap thresholds dominate the corresponding stochastic error levels with high probability under the null, the stability test for window $k$ has a small type-I error probability.

\section{Supplementary algorithm, tables, and figures}
\subsection{Algorithm for online BAWS}
We assume that observations are collected sequentially up to time $T-1$, 
 and the forecasting procedure is conducted from an initial prediction time $t_0$ to  time $T$. The online BAWS algorithm is shown below.

 \begin{algorithm}[H]
\caption{Bootstrap-based adaptive window selection (BAWS)}\label{alg1}
\KwIn{Sequential data $\{\bm x_i\}_{i=1}^{T-1}$ and threshold level $\beta$.}
For  $t=t_0,\dots,T:$\\
\hspace{2mm} For $k\in\mathcal{K}_t$:\\
\hspace{5mm}Using samples $\{\bm x_i\}_{i=t-k}^{t-1}$, compute a minimizer $\hat{\bm\theta}_{t,k}$ of $f_{t,k}$, and \\ \hspace{5mm}calculate $\tau (t,k)$ via the bootstrap procedure;\\
\hspace{5mm} For all $i<k$ and $i\in\mathcal{K}_t$, judge if $T_{i,k}=0$; If so,  let $T_k=0$.\\
\hspace{2mm} Return  $\hat k_t=\max\{k\in\mathcal{K}_t:T_k=0\}$ and the estimator $\hat{\bm \theta}_{t}=\hat{\bm \theta}_{t,\hat k_t}$.\\
\KwOut{ $\{\hat k_t\}_{t=t_0}^T$ and $\{\hat{\bm\theta}_t\}_{t=t_0}^T$.}
\end{algorithm}
\subsection{Additional tables and figures}
\label{sec:supp-additional-tables}
\begin{table}[!ht]
\caption{The bias, variance, MSE, cumulative risk, and forecast loss for mean  across BAWS, SAWS, fixed windows (250, 500, 750), and full window across Settings A1--A3.}
\scriptsize
\renewcommand{\arraystretch}{0.8}
\resizebox{\textwidth}{!}{
\begin{tabular}{clcccccc}
\toprule
  \multirow{2}{*}{Mean} & & \multirow{2}{*}{BAWS} & \multirow{2}{*}{SAWS} &\multicolumn{3}{c}{Fixed Window} & \multirow{2}{*}{Full}\\
  \cline{5-7}
 &&&&250 & 500 & 750 \\

\midrule
 \multirow{4}{*}{A1} &MAB     & 0.0077 & 0.0083 & 0.0845 & 0.1678 & 0.2511 & 0.4631 \\
                       &Var & 0.0079 & 0.0073 & 0.0010 & 0.0005 & 0.0003 & 0.0002 \\
                       & CR       & 17.9276 & \textbf{14.8314} & 85.5601 & 168.4481 & 251.7759 & 501.8624 \\
                       & CL    & 392.9567   & \textbf{389.8877} & 460.3182 & 543.2237 & 626.4365 & 876.6553 \\
\cline{2-8}
  \multirow{4}{*}{A2} & MAB    & 0.0169 & 0.0123 & 0.2514 & 0.5010 & 0.7106 & 0.7744 \\
                       & Var & 0.0086 & 0.0088 & 0.0010 & 0.0005 & 0.0003 & 0.0002 \\
                       & CR       & 41.7347 & \textbf{22.6393} & 420.4758 & 836.3033 & 1188.3080 & 1201.3810 \\
                      &CL    & 417.3304& \textbf{397.9329} & 795.9474 & 1211.8626 & 1564.1570 & 1577.5416 \\
\cline{2-8}
  \multirow{4}{*}{A3} & MAB     & 0.0216 & 0.0133 & 0.2516 & 0.5010 & 0.7106 & 0.7744 \\
                       & Var & 0.0227 & 0.1941 & 0.0025 & 0.0012 & 0.0008 & 0.0004 \\
                      & CR       & \textbf{64.1937} & 297.7890 & 422.8770 & 837.5919 & 1189.3490 & 1201.7770 \\
                      & CL     & \textbf{1034.7530} & 1265.8910 & 1393.0080 & 1807.9140 & 2160.1410 & 2173.0750 \\
\bottomrule
\end{tabular}
}
\footnotesize
\emph{Note.} For mean forecasting, \(\mathrm{CR}=(T-t_0+1)\mathrm{MSE}\); hence, the MSE is omitted.
\label{tab:supp_Sim1}
\end{table}

\textbf{Auxiliary mean-forecasting results.}
Tables~\ref{tab:supp_Sim1} and ~\ref{tab:supp_Sim2} report the mean-forecasting results for the discrete-break settings in Section \ref{Sec:Sim1} and continuous-shift settings \ref{Sec:Sim2}, respectively. 
These results complement the VaR findings in the main text and illustrate the behavior of BAWS under another elicitable target. 
In the discrete-break settings, BAWS and SAWS generally reduce MAB, CR, and CL relative to fixed-window and full-window benchmarks. 
The advantage of BAWS is more evident in Setting A3, where the piecewise variance complicates the window selection and the bootstrap threshold adapts to changes in uncertainty. 
In the continuous-shift settings, BAWS performs well under cyclic fluctuations (B1) and persistent stochastic drift (B2), while its advantage becomes less pronounced than SAWS under Setting B3.
\begin{table}[!ht]
\caption{The bias, variance, MSE, cumulative risk, and forecast loss for mean across BAWS, SAWS, fixed windows (250, 500, 750), and full window across Settings B1--B3.}
\scriptsize
\renewcommand{\arraystretch}{0.8}
\centering\resizebox{0.98\textwidth}{!}{
\begin{tabular}{clccccccc}
\toprule
  \multirow{2}{*}{Mean} & & \multirow{2}{*}{BAWS} & \multirow{2}{*}{SAWS} &\multicolumn{3}{c}{Fixed Window} & \multirow{2}{*}{Full}\\
  \cline{5-7}
 &&&&250 & 500 & 750 \\
\midrule
\multirow{4}{*}{B1} &MAB    & 0.0488
            & 0.0782        & 0.2277             & 0.4104            & 0.5269             & 0.6790           \\
                                            & Var & 0.0120
         & 0.0099        & 0.0010             & 0.0005            & 0.0003             & 0.0002           \\
                                            & CR & \textbf{22.1111}
           & 25.9628       & 101.7474           & 330.5937          & 570.5037           & 952.2810         \\
                                             & CL    & \textbf{397.4562}            & 401.3663      & 477.2601           & 706.3232          & 946.2050           & 1327.7445        \\
\cline{2-8}
                       \multirow{4}{*}{B2} & MAB&   0.0538 
       & 0.0707        & 0.1368             & 0.1922            & 0.2306             & 0.3531           \\
                                             & Var & 0.0119 
 & 0.0099        & 0.0010             & 0.0005            & 0.0003             & 0.0002           \\
                                             & CR       & \textbf{24.3699} 
  & 25.8738       & 41.9695            & 74.8076           & 104.7016           & 255.1436         \\
                                            & CL     & \textbf{399.6568}          & 401.0672      & 417.1856           & 450.0374          & 479.8901           & 630.3935         \\
                       \cline{2-8}
                        \multirow{5}{*}{B3} & MAB   &0.0364& 	0.0469 	&0.0896&0.1289&0.1597& 	0.1817\\
                                          & Var &0.0100&0.0081&0.0010 	&0.0005&0.0003 &0.0002\\
                                             & CR& 17.9327&\textbf{17.2064}&20.0657& 	34.5248& 	53.1077& 	78.2319 \\
                                            & CL & 393.0637&\textbf{392.3147} 	&395.1584&409.7283&428.3650 	&453.5585 \\
                                            \bottomrule
\end{tabular}
} \label{tab:supp_Sim2}
\end{table}

\textbf{Additional simulation figures.}
Figures~\ref{Fig:A.4}--\ref{Fig:A.5} and Figures~\ref{Fig:B2}--\ref{Fig:B3} report additional trajectory plots for Settings A2--A3 and B2--B3, respectively.

\begin{figure}[t]
    \centering
 \includegraphics[width=\linewidth]{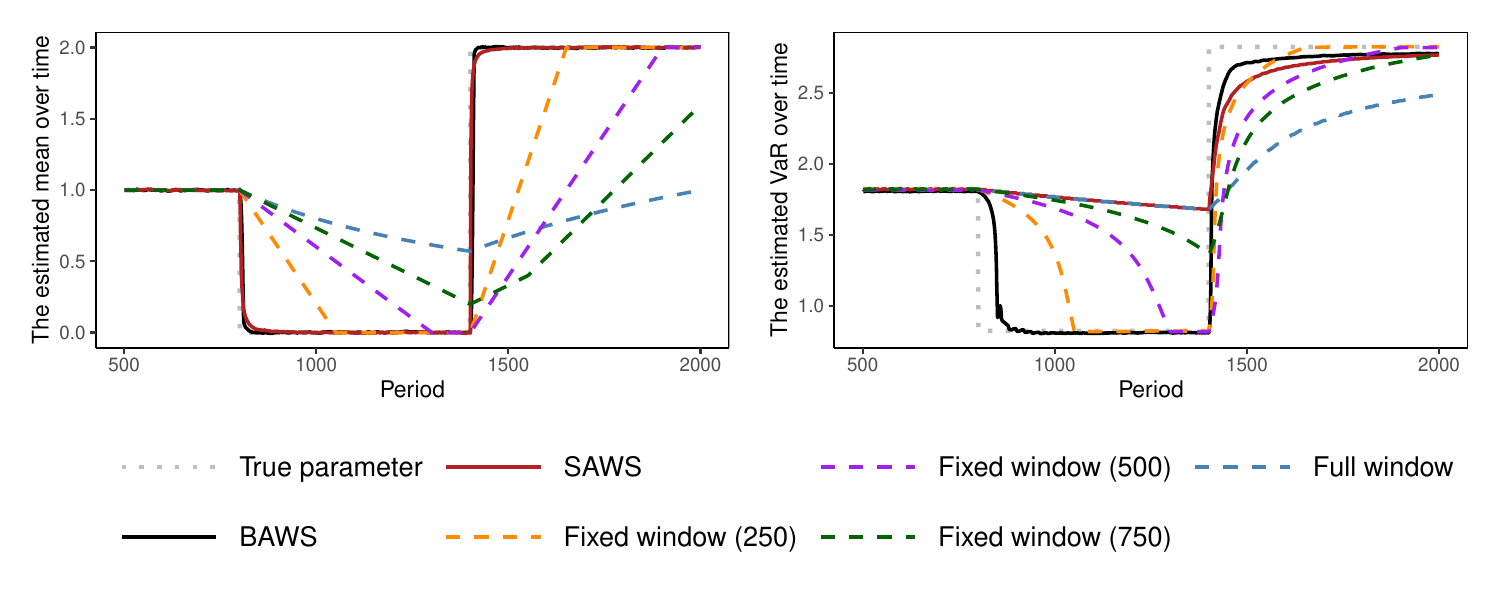}
    \caption{The patterns of mean and VaR estimators over time under Setting A2.}
    \label{Fig:A.4}
\end{figure}
\begin{figure}[t]
    \centering
 \includegraphics[width=\linewidth]{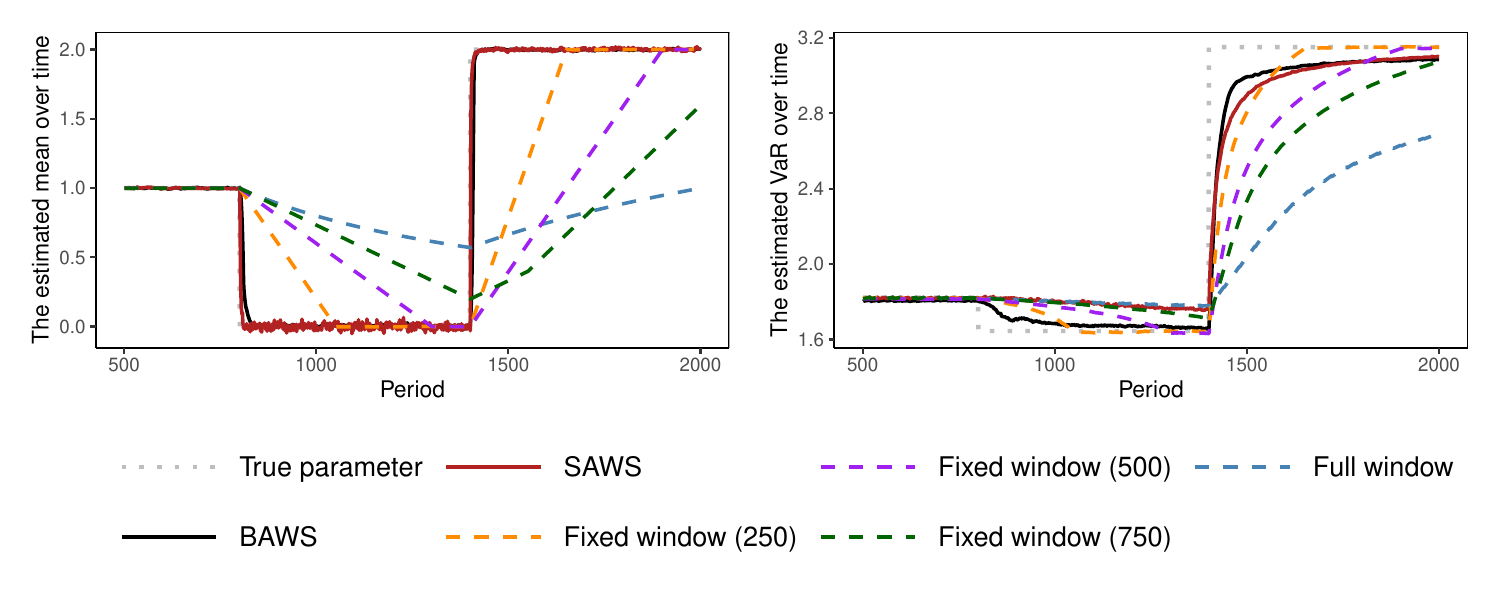}
    \caption{The patterns of mean and VaR estimators over time under Setting A3.}
    \label{Fig:A.5}
\end{figure}

\begin{figure}[!ht]
    \centering
 \includegraphics[width=0.98\linewidth]{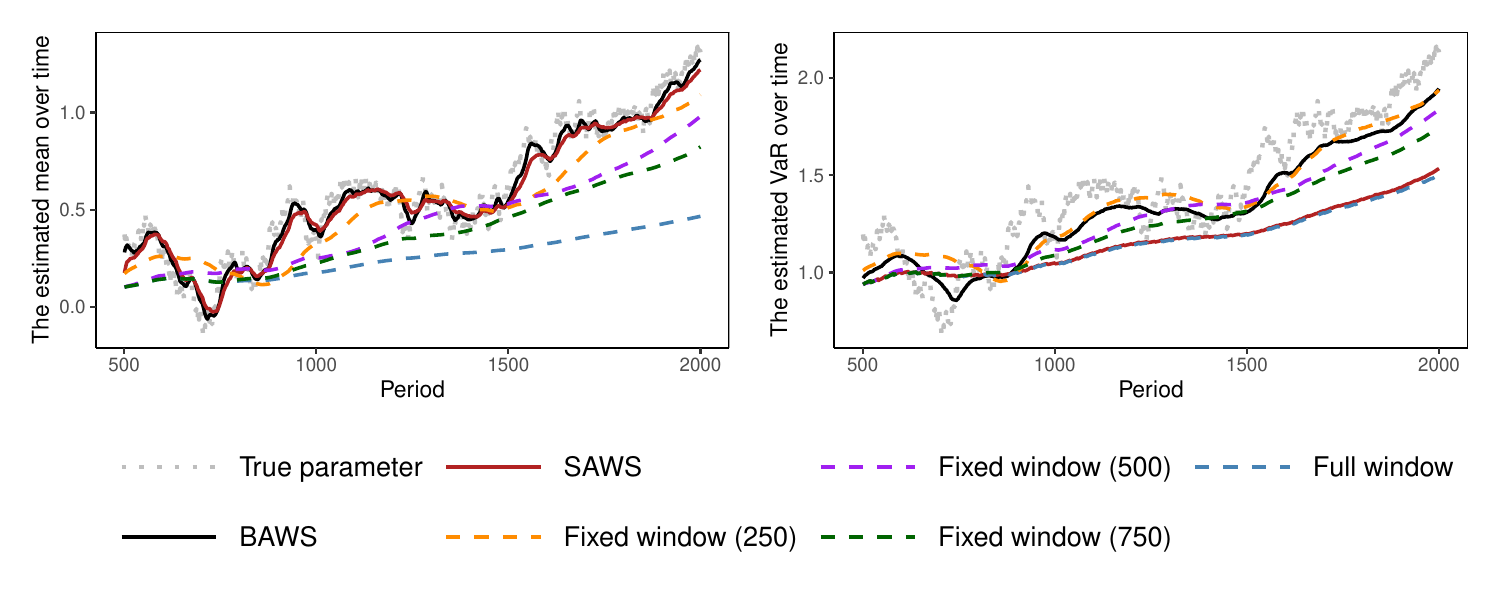}
    \caption{The patterns of mean and VaR estimators over time under Setting B2. }
    \label{Fig:B2}
\end{figure}

\begin{figure}[!ht]
    \centering
 \includegraphics[width=0.98\linewidth]{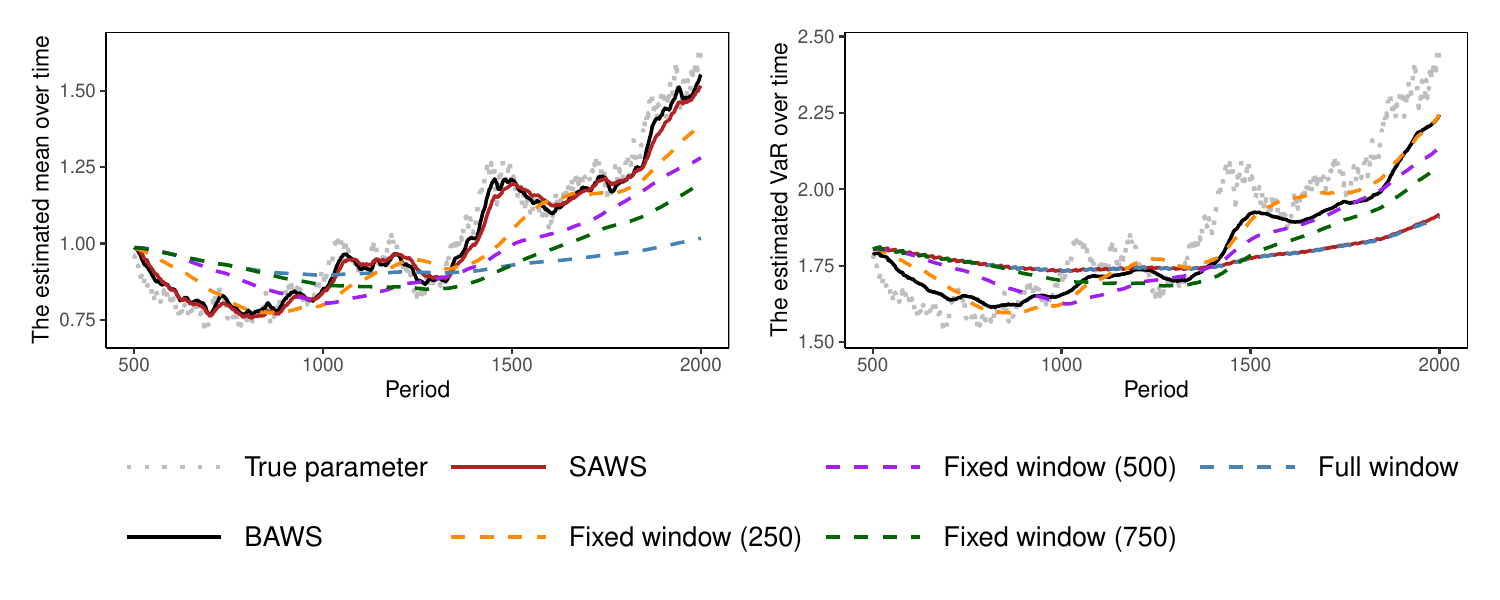}
    \caption{The patterns of mean and VaR estimators over time under Setting B3. }
    \label{Fig:B3}
\end{figure}

\begin{figure}[!t]
    \centering
        \includegraphics[width=0.7\linewidth, height=0.5\textheight]{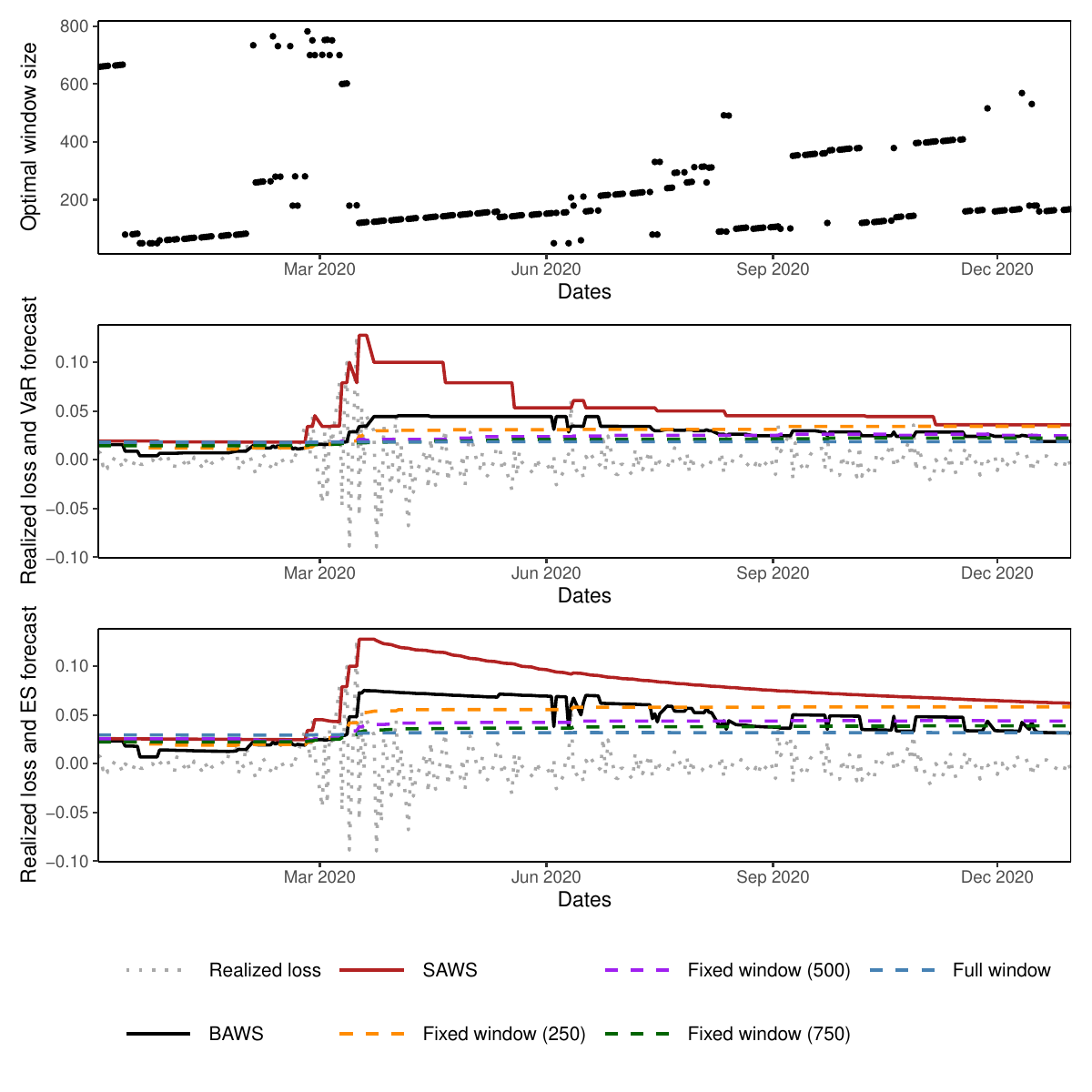}
    \caption{Temporal dynamics of optimal window sizes, VaR, and ES forecasts during COVID-19 pandemic. Top panel: Optimal window. Middle panel: VaR forecasts. Bottom panel: ES forecasts.}
    \label{Fig:real_2020}
\end{figure}

\begin{figure}[!t]
    \centering
        \includegraphics[width=0.7\linewidth, height=0.5\textheight]{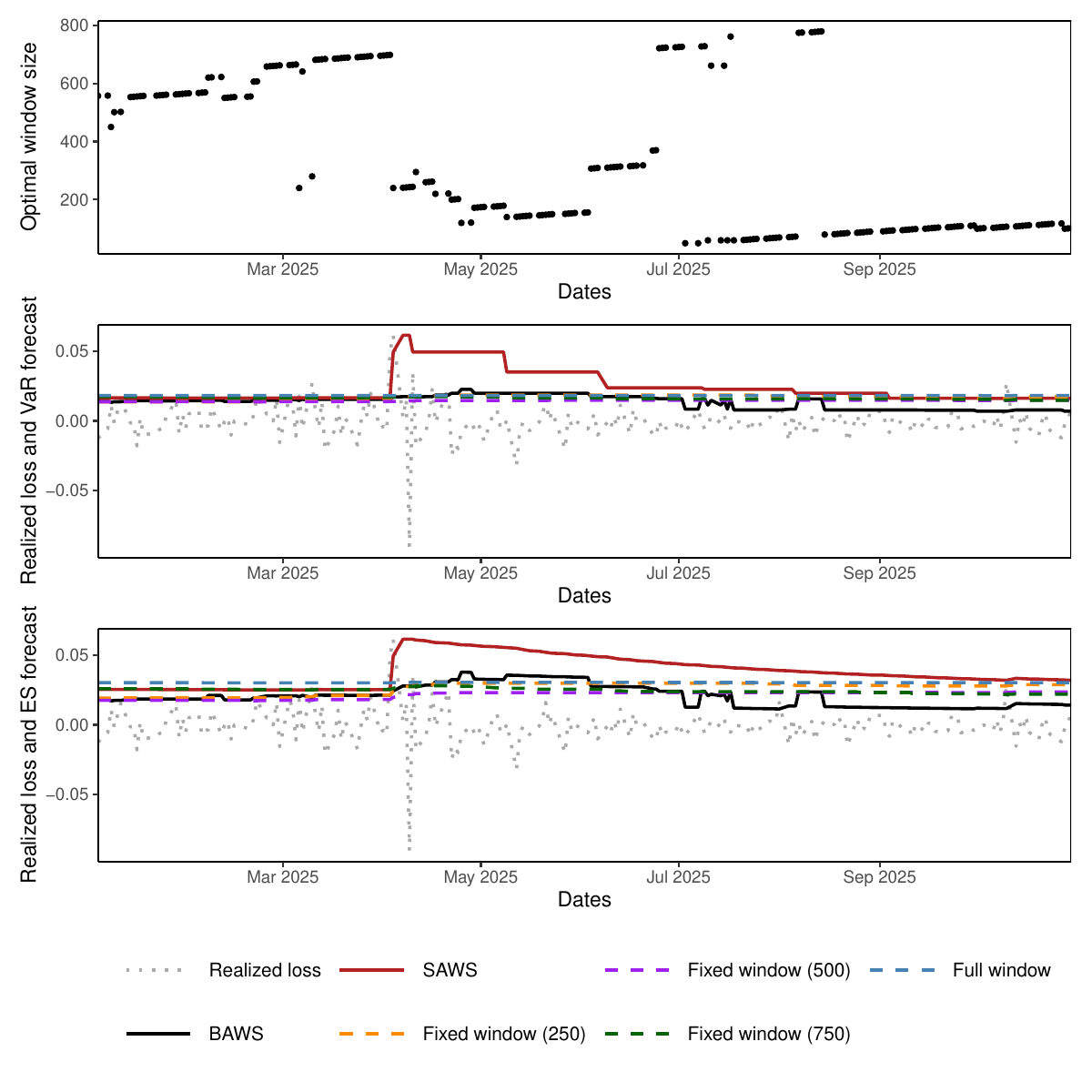}
    \caption{Temporal dynamics of optimal window sizes, VaR, and ES forecasts during the 2025 U.S. tariff measures. Top panel: Optimal window. Middle panel: VaR forecasts. Bottom panel: ES forecasts.}
    \label{Fig:real_2024}
\end{figure}

\clearpage
\subsection{Sensitivity analysis of BAWS threshold levels}
\label{sec:supp-sensitivity}

We examine the sensitivity of BAWS to the bootstrap threshold level $\beta$ under Setting A1 in Section \ref{Sec:Sim1}. The main simulations use $\beta=0.90$. 
We consider $\beta\in\{0.80,0.85,0.90,0.95,0.99\}$ while keeping all other simulation configurations unchanged. Tables~\ref{tab:sensitivity-beta-mean} and~\ref{tab:sensitivity-beta-var} report the numerical results for mean and VaR forecasting, respectively. Figures~\ref{fig:sensitivity-beta-mean-global} and~\ref{fig:sensitivity-beta-var-global} provide an overview of the forecast paths around the structural break, while Figures~\ref{fig:sensitivity-beta-mean-path} and~\ref{fig:sensitivity-beta-var-path} zoom in on the post-break period. Overall, BAWS remains stable across threshold levels. Larger values of $\beta$ select longer windows and reduce variance, whereas smaller values of $\beta$ adapt more aggressively after the structural break. The global plots show that all threshold levels adjust to the new regime after the break, and the zoomed-in plots further show that the resulting forecast paths are close to each other after the initial adjustment period. The baseline choice $\beta=0.90$ therefore provides a balanced specification.

\begin{table}[!htbp]
\centering
\caption{Sensitivity of BAWS to the bootstrap threshold level $\beta$ under Setting A1 for mean forecasting.}
\label{tab:sensitivity-beta-mean}
\begin{tabular}{ccccccc}
\toprule
$\beta$ & MAB & Var & CR & CL & Avg. window & Median window \\
\midrule
0.80 & 0.0073 & 0.0130 & 24.8808 & 400.0473 & 263.0 & 127 \\
0.85 & 0.0074 & 0.0106 & 21.5663 & 396.6493 & 322.6 & 212 \\
0.90 & 0.0077 & 0.0079 & 17.9276 & 392.9567 & 390.3 & 334 \\
0.95 & 0.0092 & 0.0051 & 14.3023 & 389.2681 & 468.3 & 501 \\
0.99 & 0.0139 & 0.0027 & 11.9073 & 386.8051 & 547.0 & 590 \\
\bottomrule
\end{tabular}

\vspace{1mm}
\begin{minipage}{0.92\textwidth}
\footnotesize
\textit{Note.} 
``Avg. window'' and ``Median window'' denote the average and median selected window sizes over the forecasting period and Monte Carlo replications.
\end{minipage}
\end{table}

\begin{table}[!htbp]
\centering
\caption{Sensitivity of BAWS to the bootstrap threshold level $\beta$ under Setting A1 for VaR forecasting.}
\label{tab:sensitivity-beta-var}
\begin{tabular}{cccccccc}
\toprule
$\beta$ & MAB & Var & MSE & CR & CL & Avg. window & Median window \\
\midrule
0.80 & 0.0499 & 0.0084 & 0.0191 & 4.7433 & 82.0145 & 442.3 & 391 \\
0.85 & 0.0533 & 0.0069 & 0.0187 & 4.7723 & 82.0280 & 504.7 & 503 \\
0.90 & 0.0597 & 0.0055 & 0.0191 & 5.0391 & 82.3192 & 573.1 & 585 \\
0.95 & 0.0708 & 0.0042 & 0.0209 & 5.5876 & 82.8822 & 645.0 & 657 \\
0.99 & 0.0933 & 0.0030 & 0.0275 & 7.3334 & 84.6215 & 731.5 & 750 \\
\bottomrule
\end{tabular}

\vspace{1mm}
\begin{minipage}{0.92\textwidth}
\end{minipage}
\end{table}

\begin{figure}[!htbp]
\centering
\includegraphics[width=0.78\textwidth]{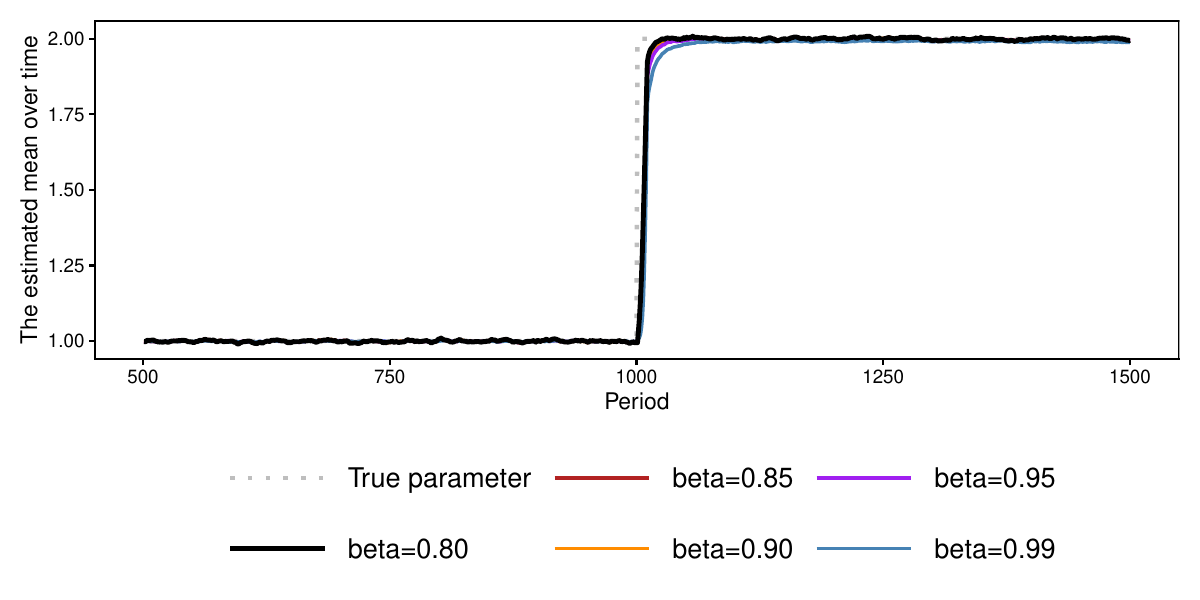}
\caption{Sensitivity of BAWS mean estimates to the bootstrap threshold level $\beta$ under Setting A1. The plot provides a global view over periods 500--1500, covering both the pre-break regime and the post-break adjustment period.}
\label{fig:sensitivity-beta-mean-global}
\end{figure}

\begin{figure}[!htbp]
\centering
\includegraphics[width=0.78\textwidth]{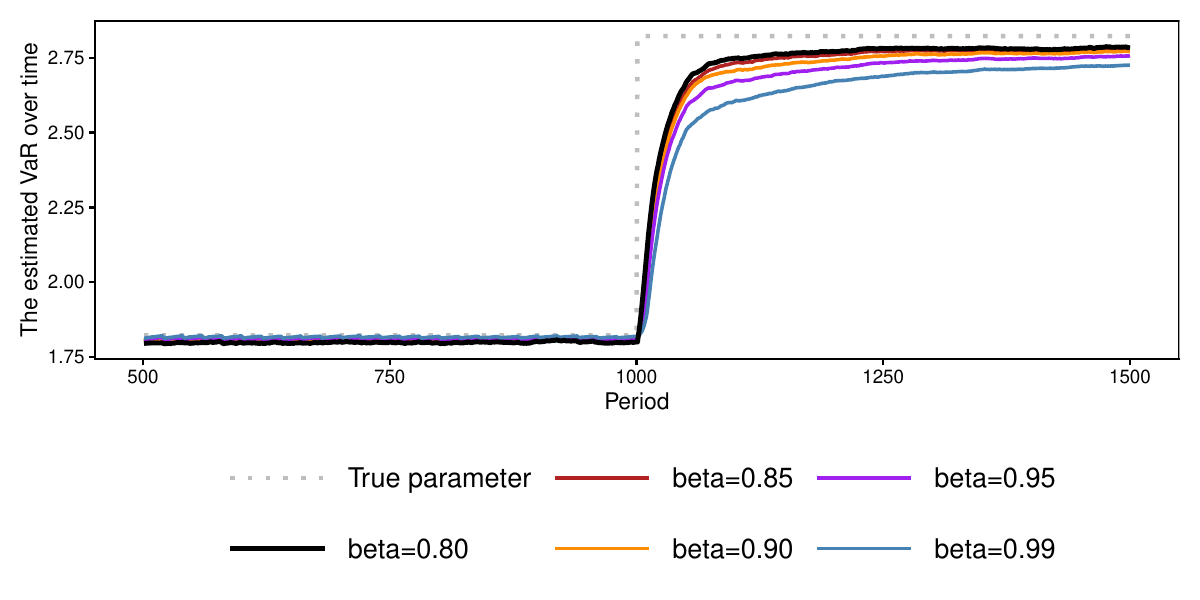}
\caption{Sensitivity of BAWS VaR estimates to the bootstrap threshold level $\beta$ under Setting A1. The plot provides a global view over periods 500--1500, covering both the pre-break regime and the post-break adjustment period.}
\label{fig:sensitivity-beta-var-global}
\end{figure}

\begin{figure}[!htbp]
\centering
\includegraphics[width=0.78\textwidth]{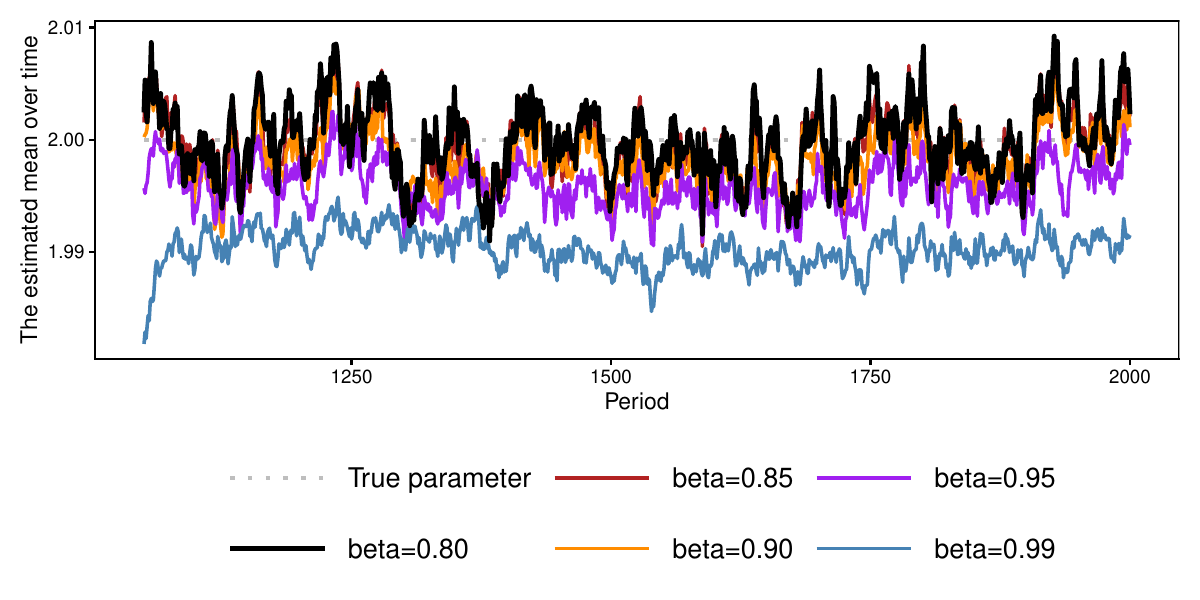}
\caption{Sensitivity of BAWS mean estimates to the bootstrap threshold level $\beta$ under Setting A1. The plot zooms in on periods 1050--2000 after the structural break. The true mean is equal to 2 in this period.}
\label{fig:sensitivity-beta-mean-path}
\end{figure}

\begin{figure}[!htbp]
\centering
\includegraphics[width=0.78\textwidth]{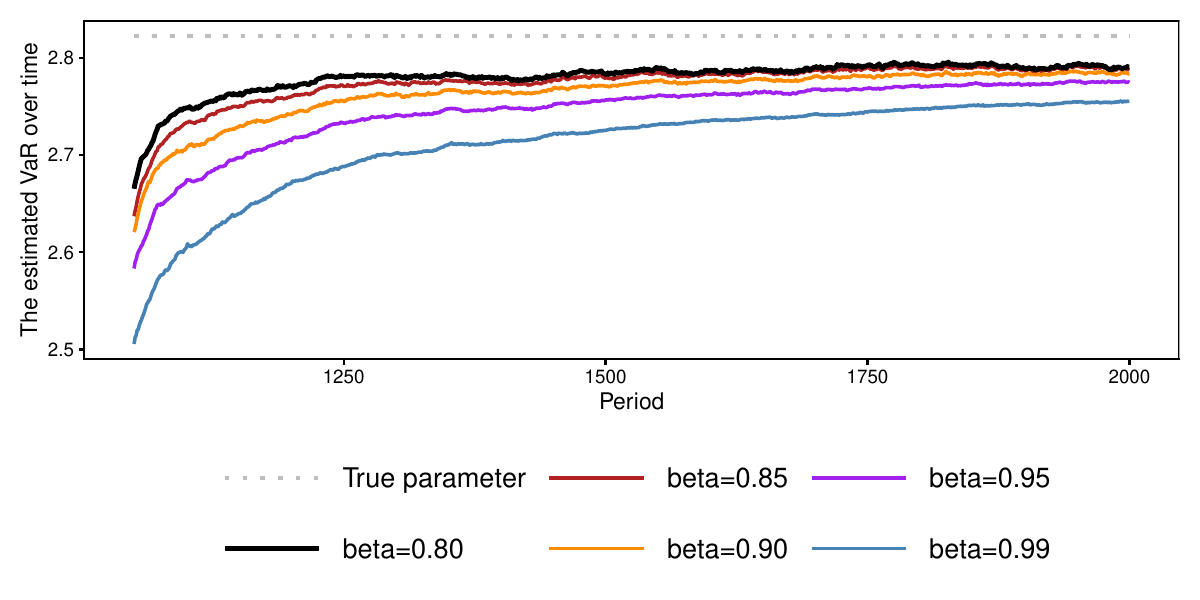}
\caption{Sensitivity of BAWS VaR estimates to the bootstrap threshold level $\beta$ under Setting A1. The plot zooms in on periods 1050--2000 after the structural break. The true VaR is equal to approximately 2.82 in this period.}
\label{fig:sensitivity-beta-var-path}
\end{figure}
\subsection{Sensitivity analysis of SAWS tuning parameters}\label{sec:supp-saws-sensitivity}
Table~\ref{tab:saws_sensitivity} reports a sensitivity analysis for the SAWS tuning
parameters in the empirical study. Following the threshold specification in \citet{huang2025stability}, we use the threshold for Lipschitz population losses,
\[
\tau(n,k)=C_\tau\sqrt{\log(\alpha_\tau^{-1}+1+n)/k},\]
and vary \(C_\tau\) and \(\alpha_\tau\) around the values used in the main analysis.  Compared with the BAWS results, some SAWS tuning choices yield lower average forecast
losses during the GFC and COVID periods, whereas all SAWS specifications considered here
have higher average forecast losses than BAWS over the full prediction period and during
the Tariff period.  Panel A fixes \(\alpha_\tau=0.1\) and varies \(C_\tau\), while
Panel B fixes \(C_\tau=0.05\) and varies \(\alpha_\tau\). The average loss is nearly
unchanged over a wide range of \(\alpha_\tau\), whereas very small or large values of
\(C_\tau\) can lead to larger forecast losses. Large values of \(C_\tau\), such as \(C_\tau=0.7\), produce large thresholds, so SAWS tends
to retain long windows and may behave similarly to the full-window benchmark. The parameter choice used in the main
analysis,  \(C_\tau=0.05\) and \(\alpha_\tau=0.1\), performs well across the full sample and
the main stress periods. Overall, the performance of SAWS varies across tuning
parameter specifications, whereas BAWS provides a data-driven alternative for threshold
calibration through the single threshold level \(\beta\).

\begin{table}[H]
\centering
\caption{Sensitivity of SAWS average forecast loss to tuning parameters.}
\label{tab:saws_sensitivity}
\resizebox{0.9\textwidth}{!}{
\begin{tabular}{lrrrrrrrr}
\toprule
\multicolumn{8}{l}{Panel A: fixed \(\alpha_\tau=0.1\), varying \(C_\tau\)}\\
\midrule
Period & BAWS&\(0.01\) & \(0.03\) & \(0.05\) & \(0.10\) & \(0.30\) & \(0.50\) & \(0.70\)\\
\midrule
2006--2025 &\textbf{2.2642} &
 3.1345 & 2.3490 &{2.3145} & 2.4093 & 2.4788 & 2.4447 & 2.4447\\
GFC        &3.2513 & 6.9525 & 3.5767 & \textbf{3.1852} & 3.5152 & 3.9584 & 4.1779 & 4.1779\\
COVID      &4.0501 & 6.9319 & 4.3460 & \textbf{3.8752} & 4.0408 & 4.1457 & 4.1412 & 4.1412\\
Tariff     &\textbf{2.3124} & 3.0044 & 2.4079 & {2.3856} & 2.3994 & 2.3354 & 2.3292 & 2.3292\\
\midrule
\multicolumn{8}{l}{Panel B: fixed \(C_\tau=0.05\), varying \(\alpha_\tau\)}\\
\midrule
Period &BAWS& \(0.01\) & \(0.03\) & \(0.05\) & \(0.10\) & \(0.30\) & \(0.50\) & \(0.70\)\\
\midrule
2006--2025 &\textbf{2.2642}& 2.3166 & 2.3166 & 2.3145 & {2.3145} & 2.3145 & 2.3145 & 2.3145\\
GFC &3.2513        & 3.2012 & 3.2012 & 3.1852 & \textbf{3.1852} & 3.1852 & 3.1852 & 3.1852\\
COVID&4.0501      & 3.8752 & 3.8752 & 3.8752 & \textbf{3.8752} & 3.8752 & 3.8752 & 3.8752\\
Tariff&\textbf{2.3124}     & 2.3856 & 2.3856 & 2.3856 & {2.3856} & 2.3856 & 2.3856 & 2.3856\\
\bottomrule
\end{tabular}
}\\
\vspace{1mm}
\begin{minipage}{0.9\textwidth}
\footnotesize
\textit{Note.} 
 Entries are average forecast losses in percentage.  The parameters used in the main empirical analysis are \(\alpha_\tau=0.1\) and
\(C_\tau=0.05\).
\end{minipage}
\end{table}
\clearpage

%\printbibliography[title={References}]

%\bibliographystyle{apalike}
%\bibliography{JASA/reference}

\end{document}